\documentclass[onecolumn,journal,12pt,dvipdfmx]{IEEEtran}

\usepackage[dvipdfmx]{graphicx}

\usepackage{amssymb,amsfonts,amstext}
\usepackage{amsmath}	
\usepackage{latexsym}
\usepackage{color}

\newcommand{\bZ}{\mathbb{Z}}

\newcommand{\Hi}{\mathcal{H}}

\newcommand{\Tr}{\mathrm{Tr}}

\newcommand{\ket}[1]{| #1 \rangle}
\newcommand{\bra}[1]{\langle #1 |}
\newcommand{\braket}[2]{\langle #1 \vert #2 \rangle}
\newcommand{\be}{\begin{equation}}
\newcommand{\ee}{\end{equation}}

\newtheorem{definition}{Definition}
\newtheorem{proposition}[definition]{Proposition}
\newtheorem{theorem}[definition]{Theorem}

\newtheorem{lemma}[definition]{Lemma}

\def\QED{\mbox{\rule[0pt]{1.5ex}{1.5ex}}}
\def\endproof{\hspace*{\fill}~\QED\par\endtrivlist\unskip}

 \newenvironment{proofof}[1]{\vspace*{5mm} \par \noindent
         \quad{\it Proof of #1:\hspace{2mm}}}{\endproof}
 \newenvironment{proofsof}[1]{\vspace*{5mm} \par \noindent
         \quad{\it Proofs of #1:\hspace{2mm}}}{\endproof}

\def\PF#1{\noindent{\sf #1}:\quad}

\def\Label#1{\label{#1}\ [\ \text{\protect{#1}}\ ]\ }
\def\Label{\label}

\begin{document}
\title{Tight asymptotic bounds on local hypothesis testing between a pure bipartite state and the white noise state\footnote{This paper was
presented in part at 
Workshop on Quantum Metrology, Interaction, and Causal Structure, Beijing, China, December, 2014,
The 17th workshop on Quantum Information Processing (QIP 2015), Sydney, NSW, Australia, January, 2015,
and
2015 IEEE International Symposium on Information Theory, 
Hong-Kong, June 2015.}}

\author{Masahito Hayashi \IEEEmembership{Senior Member, IEEE}, Masaki Owari
\thanks{M. Hayashi is with Graduate School of Mathematics, Nagoya University, Nagoya, 464-8602, Japan and Centre for Quantum Technologies, National University of Singapore, Singapore. e-mail: masahito@math.nagoya-u.ac.jp}
\thanks{M. Owari was with NTT Communication Science Laboratories, NTT Corporation
3-1, Morinosato Wakamiya Atsugi-Shi, Kanagawa, 243-0198, Japan. 
Now, he is with Faculty of Informatics, Shizuoka University, Hamamatsu, Shizuoka, 432-8011, Japan 
e-mail: masakiowari@inf.shizuoka.ac.jp}
}

\maketitle

\begin{abstract}
We consider asymptotic hypothesis testing (or state 
discrimination with asymmetric treatment of errors) between an
 arbitrary 
fixed
bipartite pure state $\ket{\Psi}$ and the white noise state (the completely mixed state) under one-way LOCC (local operations and classical
communications), two-way LOCC, and separable POVMs.
As a result, we derive
the Hoeffding bounds under
two-way LOCC POVMs and separable POVMs. 
Further,
we derive a Stein's lemma type of
optimal error exponents under one-way LOCC, two-way LOCC, and separable POVMs
up to the third order,
which clarifies the difference between
one-way and two-way LOCC POVM. 
Our results clarify the relationship between the entanglement of Renyi 
 entropy and the hypothesis testing under LOCC, 
since the entanglement of Renyi entropy appears in the formula of both the Hoeffding bounds and the 
 Stein's lemma type of error exponents.
Our study gives a very rare example in which
the optimal performance under the infinite-round two-way LOCC 
is also equal to that under separable operations
and can be attained with two-round communication,
but not with the one-way LOCC.
\end{abstract}

\section{Introduction}
When a quantum system consists of two distinct parties,
Alice and Bob, it is 
natural to restrict their operations to local operation and classical communication (LOCC) \cite{HHHH09}
because it is not so easy to realize a quantum operation across both of the distant parties. 
LOCC operations can be classified by the direction
of classical communication. When the direction of classical communication is
restricted to only one direction, the LOCC operation is called a one-way LOCC.
Otherwise, it is called a two-way LOCC.
Such  constraint for our measurement is called a locality restriction.
In this paper, we focus on the effect for distinguishing quantum states.
Such a state discrimination problem 
has been studied very actively by many researchers
\cite{PW91,BDFMRSSW99,WSHV00,GV01,VSPM01,GKRSS01,TDL01,Wa05,HMMOV06,HMT06,OH06,KTYI07,C07,OH08,IHHH08,MW08,DFXY09,H09,JRZZG09a,CVMB10,JRWZGF10,Ba10,N10,KKB11,LW12,CH13a,CLMO13,FLM13,BHLP13}.

In this paper, we concentrate on the detection of a given entangled state from the completely mixed state, 
which is often called the white noise state because it has no biased noise.
Since this problem deals with two states as candidates for the true state in an asymmetric way,
it is usually referred to as the binary simple hypothesis testing.
Since we impose the locality restriction, we call it the local hypothesis testing.
Since, as was pointed out from a Shannon theoretical viewpoint \cite{VH, 
HK02, Nag, Hay-Nag, H06, Nag07, H07, WR, PPV, Pol, RH, Tomamichel}, 
hypothesis testing is related to so many information theoretic 
problems, 
quantum hypothesis testing 
with the asymptotic and asymmetric setting has attracted much attention in quantum information theory 
\cite{BHLP13,Tomamichel,ANSV08,Nagaoka, Li,ON00, HP91, HT, OM, CMW,  SB, Notzel, H-opt}.
In order to discuss the relation between the locality constraint and these information theoretic problems, 
it is natural to deeply investigate quantum hypothesis testing with locality restriction.

One might consider that hypothesis testing with the white noise state is too specialized.
However, as known in classical information theory, 
this type of hypothesis testing is directly related to 
data compression \cite{HK02,Nag07}, uniform random generation \cite{HK02}, 
channel coding with additive noise \cite{VH}, and resolvability of distribution \cite{RH}.
Thus, this problem can be regarded as the first step for extending these topics to the case with the locality constraint.
Indeed, based on a similar motivation, a recent paper \cite{CMW} 
treats the hypothesis testing of quantum channel with a special case as 
a quantum extension of a special case of the paper \cite{H-ch}. 
Further, hypothesis testing even with the white noise state is highly non-trivial when we impose any locality restriction,
although it is trivial without one.
Hence, this problem represents the difficulty caused by the locality restriction in the simplest way,
and it can be considered as one of the most important types of local hypothesis testing.
Therefore, to characterize the accessible information under locality condition,
we tackle the local hypothesis testing with the white noise state in this paper.

On the other hand, since this problem can be described in terms of the entangled pure state to be detected, 
this problem is closely related to the amount of entanglement of the entangled pure state.
Hence, it has a great significance as a study of entanglement.
In fact, several entanglement measures have been proposed even for pure entangled states.
One is the entanglement of entropy \cite{BBPS96}, and its relation with hypothesis testing with the white noise state
has been clarified \cite{OH11}.
As other measures, the geometric measure of entanglement \cite{WG03}
and the robustness of entanglement \cite{VT99} are known.
However, their relations with this problem have only been  
partially resolved \cite{OH11}.
To discuss the relation between entanglement measures and hypothesis testing,
we employ 
the entanglement of R\'{e}nyi entropy \cite{Vidal}, i.e.,
the R\'{e}nyi entropy of the reduced density matrix of a pure entangled state,
which contains the entanglement of entropy, the geometric measure of entanglement, and the logarithmic robustness of entanglement as special cases.
Since R\'{e}nyi entropy is also closely related to 
the asymptotic performance of quantum information protocols,
we may predict that the entanglement of Renyi entropy is also closely 
related to the asymptotic performance of quantum information processing 
under the locality condition. 
In this paper, we show that this prediction is correct. That is, we clarify the relation between 
our hypothesis testing problem and the entanglement of R\'{e}nyi entropy.

Before discussing the history of the local hypothesis testing,
we focus on the quantum hypothesis testing without a locality condition, 
in which a general asymptotic theory can be established even for the 
quantum case where multiple copies of unknown states are available.
Firstly,
Hiai et al. \cite{HP91} and Ogawa et al. \cite{ON00} derived the quantum 
version of Stein's bound \cite{C52}, i.e.,  
the optimal exponent of the type-2 error under the constant constraint for the type-1 error.
Audenaert et al. \cite{ACMBMAV07} and Nussbaum et al. \cite{Nussbaum} derived the quantum version of the
Chernoff bound \cite{C52}, i.e., 
the optimal exponent of the sum of type-1 and type-2 errors.
Other papers \cite{H07,Nagaoka} derived the quantum version of the 
Hoeffding bound \cite{Hoeffding,B74,CL71}, 
which is the optimal exponent of the type-2 error under 
the exponential constraint for the type-1 error and 
can be considered to be a  generalization of the Chernoff bound. 
However, when we impose the one-way or two-way LOCC constraint on our measurement,
these problems become very difficult, and they have not been solved completely.
In particular, it is quite difficult to solve these problems 
for an arbitrary fixed pair of quantum states.
In the following, we mainly address the Hoeffding bound and will hardly 
mention the Chernoff bound.
This treatment does not lose generality
because our results for the Hoeffding bound include the results for the Chernoff bound as special cases. 

Before proceeding to the detailed discussion of the local hypothesis testing between a pure entangled state $\ket{\Psi}$ 
and the white noise state,
we prepare a detailed classification of two-way LOCC operation.
whereas a one-way LOCC operation requires only one-round classical communication,
a two-way LOCC operation requires multiple-round classical communication.
In this case,
a two-way LOCC protocol with $k$-round classical communication has $k+1$ steps.
For example, in the case of  two-round classical communication, 
the total protocol is given as follows when the initial operation is done by Alice:
Alice performs her operation with her measurement and sends her outcome to Bob.
Bob receives Alice's outcome, performs his operation with his measurement, and sends his outcome to Alice.
Alice then receives Bob's outcome and performs her measurement.
Therefore, we focus on the difference among these locality restrictions.
under the local hypothesis testing between a pure entangled state and the white noise state.

In the non-asymptotic setting,
our previous paper \cite{OH08} addressed the problem under the constraint that 
$\ket{\Psi}$ is detected with probability $1$.
Our more recent paper \cite{OH10} addressed it in a more general setting.
In particular, that paper \cite{OH10} proposed concrete two-round classical communication
two-way LOCC protocols that are not reduced to one-way LOCC.
Then, we  extended the problem to the case when the entangled state is 
given as the $n$-copy state of a certain entangled state \cite{OH11}. 
As asymptotic results, we showed that 
there is no difference between one-way and two-way LOCC for Stein's bound, 
i.e., the optimal exponent of the type-2 error under the constant constraint for the type-1 error.
To make an upper bound of the optimal performance of the two-way LOCC case,
our papers \cite{OH08,OH11,OH10} also considered the performance for 
separable operations, which can be easily treated because of their 
mathematically simple forms. 
The class of separable operations includes LOCC, but there exist  
separable operations that are not LOCC \cite{BDFMRSSW99}.
Unfortunately, our previous paper \cite{OH11} could not derive the Hoeffding bound for 
two-way LOCC, i.e., 
the optimal exponent of the type-2 error under the exponential constraint for the type-1 error,
while it derived it for one-way LOCC.
Further, even under the constant constraint for the type-1 error,
the paper did not consider the higher order of the decreasing rate of the type-2 error. 
Indeed, in information theory, Strassen \cite{Strassen} 
derived the decreasing rate of the type-2 error 
up to the third-order $\log n$ under the same constraint
in the classical setting
when $n$ is the number of available copies.
Tomamichel et al. \cite{Tomamichel} and Li \cite{Li} extended this result up to the second-order $\sqrt{n}$.

In this paper, we derive the Hoeffding bound for two-way LOCC and the
optimal decreasing rate of the type-2 error under the constant constraint for the
type-1 error up to the third-order $\log n$ for one-way and two-way LOCC. We also
derive them for separable measurements. 
The obtained results are summarized as follows.

\begin{description}
\item[(1)] 
There is a difference in the Hoeffding bound between the one-way and two-way LOCC constraints 
unless the entangled state $\ket{\Psi}$ is maximally entangled.

\item[(2)] 
There is no difference in the Hoeffding bound between two-way LOCC and separable constraints. 

\item[(3)]
The optimal decreasing rate of the type-2 error under the constant constraint 
for the type-1 error 
has no difference between the one-way and two-way LOCC constraints 
up to the second-order $\sqrt{n}$.

\item[(4)]
The optimal decreasing rate of the type-2 error under the constant constraint 
for the type-1 error 
is different between the one-way and two-way LOCC constraints 
in the third-order $\log {n}$
unless the entangled state $\ket{\Psi}$ is maximally entangled.

\item[(5)]
The optimal decreasing rate of the type-2 error under the constant constraint 
for the type-1 error 
is not different between the two-way LOCC and separable constraints 
up to the third-order $\log{n}$.

\item[(6)]
The three-step two-way LOCC protocol proposed in \cite{OH10}
can achieve the Hoeffding bound for two-way LOCC.

\item[(7)] 
The three-step two-way LOCC protocol proposed in \cite{OH10}
can achieve the optimal decreasing rate of the type-2 error under the 
	   constant constraint for the type-1 error up to the third-order 
	   $\log n$ for two-way LOCC. 

\item[(8)] 
The entanglement of Renyi entropy appears in the 
	   formulas of
	   the Hoeffding bounds and the optimal decreasing rate of the 
	   type-2 error under the constant constraint for the type-1 
	   error for all the one-way LOCC, the two-way LOCC, and 
	   separable constraints.
\end{description}

Finally, we discuss our result from the mathematical point of view.
The difficulty of the above results can be classified into two parts.
One is the asymptotic evaluation of optimal performance of separable operations.
The other is the asymptotic evaluation of optimal performance of
the three-step two-way LOCC protocol proposed in \cite{OH10}.
To evaluate the exponential decreasing rates in the latter case,
we employ the type method \cite{CKbook},
the saddle point approximation \cite{Dembo98,Moulin13}.

The evaluation of the former case, we need complicated discussions.
Firstly, as mentioned in \cite{OH10}, 
we convert our local hypothesis testing with separable operations
into a specific composite hypothesis testing.
Then, we evaluate the exponential decreasing rates of error probabilities in the converted specific composite hypothesis testing.
Usually, to evaluate the exponential decreasing rate, 
we employ large deviation theory, e.g., Cram\'{e}r Theorem.
However, for our analysis, we need more detailed analysis.
Hence, we employ the strong large deviation initiated by Bhadur-Rao \cite{BR},
which enables us to analyze the tail probability up to the constant order of exponentially small probability. 
(See Proposition \ref{11-4-4} in Appendix \ref{A1}.)
Indeed, although Bhadur-Rao \cite{BR} obtained such detailed evaluation for the tail probability in 1960,
they were rarely applied to information theoretical topics.
That is, our analysis is a good application of the strong large deviation.
Based on this analysis for the specific composite hypothesis testing,
we derive our analysis for the former case. 

Indeed, after the first submission of this paper, 
the recent paper \cite{Li2} discussed the composite hypothesis testing with the large deviation formalism.
Our converted composite hypothesis testing is different from the discussion in \cite{Li2} in the following point.
The paper \cite{Li2} fixes the number of possible states in the hypothesis, which does not increase dependently of the number $n$ of tensor product.
However, in our composite hypothesis testing, the number of possible states in the hypothesis increases double exponentially with respect to 
the number $n$ of tensor product.
Due to the double exponential increase, 
the method in the paper \cite{Li2} cannot be applied to our problem, which requires a special treatment as explained the above.  

This paper is organized as follows: 
In Section \ref{sec preliminary}, 
we summarize the known results for simple hypothesis testing
and
explain the main results by preparing the mathematical descriptions of our hypothesis
testing problem.
Then, we derive the analytical expressions of the optimal error exponents under one-way LOCC POVMs 
in Section \ref{sec one-way}. 
Next, in Section \ref{sec separable},
we derive 
the analytical expressions of the optimal error exponents under separable LOCC POVMs.
For this derivation, we discuss a specific composite hypothesis testing
by using the strong large deviation \cite{BR}.
In Section \ref{sec two-way}, we analyze a special class of two-round classical communication LOCC (thus, two-way LOCC) for this local hypothesis testing problem
by using 
the type method \cite{CKbook}
and the saddle point approximation \cite{Dembo98,Moulin13}.
Finally, we summarize the results of our paper in Section
\ref{sec summary}.
Our notation is the same as in our previous paper \cite{OH11}.
It therefore might be helpful for readers to refer to the list of notations given in the appendix of \cite{OH11}.
In Appendix \ref{A2}, we summarize the formulation and results of \cite{OH10}
needed in Subsubsection \ref{subsub1}.
In Appendix \ref{A1}, we summarize the basic knowledge for
the strong large deviation \cite{BR}.

\section{Preliminary and main results} \Label{sec preliminary}
\subsection{Preliminary I: General quantum hypothesis testing}
This paper mainly treats hypothesis testing in a bipartite quantum system and its
$n$-copies extension. 
For this purpose, we firstly discuss hypothesis testing in a general quantum system ${\cal H}$ and 
its $n$-copies extension. 
In quantum hypothesis testing, we consider two hypotheses, the null hypothesis and the alternative hypothesis.
When a hypothesis consists of one element, it is called simple.
Otherwise, it is called composite.
This paper mainly addresses simple hypotheses, but it discusses a composite hypothesis partially.
Here, we assume that the null hypothesis is a state $\rho$ and the alternative hypothesis is state $\sigma$.
In the $n$-copies setting,
the quantum system is given by ${\cal H}^{\otimes n}$.
Then, the null and alternative hypotheses are the states $\rho^{\otimes n}$ and $\sigma^{\otimes n}$.
Our decision is given by
a two-valued POVM consisting of two POVM elements $T_n$ and $I^n-T_n$, where 
$I^n$ is the identity operator on $\Hi^{\otimes n}$ and $T_n$ is an positive-semi definite operator on $\Hi^{\otimes n}$.
When the measurement outcome corresponds to $T_n$, 
we judge an unknown state as $\sigma^{\otimes n}$,
and when the measurement outcome is $I^n-T_n$, we judge it as $\rho^{\otimes n}$. 

Thus, type-1 error is written as 
\begin{equation}\Label{eq def type 1 error}
\alpha _n(T_n)\stackrel{\rm def}{=}\Tr \rho^{\otimes n}T_n,
\end{equation}
and type-2 error is written as 
\begin{equation}\Label{eq def type 2 error}
\beta _n (T_n) \stackrel{\rm def}{=} \Tr \sigma^{\otimes n}  \left ( I^n-T_n \right ).
\end{equation}
The optimal type-2 error under the condition 
that the type-1 error is no more than a constant $\alpha\ge 0$ is written as
\begin{align}\Label{eq def beta n C alpha rho sigma1}
\beta_{n}(\alpha|\rho \| \sigma) \stackrel{\rm def}{=}  
 \min _{T_n} \left \{
 \beta_n (T_n) \  | \ \alpha_n (T_n) \le \alpha, I^n \ge T_n \ge 0 \right \}.
\end{align}

Now, we give the asymptotic properties of $\beta_{n}(\alpha|\rho\| \sigma)$.
For this purpose, we introduce 
the cumulative distribution function (CDF) of the standard normal distribution
$\Phi(x)\stackrel{\rm def}{=}\int_{-\infty}^{x} \frac{e^{-y^2/2}}{\sqrt{2\pi}} dy$,
the quantum relative entropy $D(\rho\|\sigma)\stackrel{\rm def}{=}\Tr \rho (\log \rho - \log \sigma)$,
and the quantities $V(\rho\|\sigma)\stackrel{\rm def}{=}\Tr \rho (\log \rho - \log \sigma- D(\rho\|\sigma))^2$,
and $\psi(s|\rho\|\sigma)\stackrel{\rm def}{=}- \log \Tr  \rho^{1-s} \sigma^s$.
Then, when $V(\rho\|\sigma) >0$, we have the asymptotic expansions \cite{Hoeffding,CL71,B74,Strassen}
\begin{align}
\log \beta_{n}(\epsilon|\rho\| \sigma) &= -n D(\rho\|\sigma)- \sqrt{n} \sqrt{V(\rho\|\sigma)}\Phi^{-1}(\epsilon) +O(\log n) \Label{2-4-2}\\
\log \beta_{n}(e^{-nr}|\rho\| \sigma) &= -n \sup_{0 \le s <1}\frac{\psi(s|\sigma\|\rho)-s r}{1-s}+ o(n).\Label{2-4-3}
\end{align}
Expansions (\ref{2-4-2}) and (\ref{2-4-3})
are called the Stein-Strassen and the Hoeffding expansions, respectively.

When $\rho$ and $\sigma$ commute each other,
we have the more detailed expansion
\begin{align}
\log \beta_{n}(\epsilon|\rho\| \sigma) = -n D(\rho\|\sigma)- \sqrt{n} \sqrt{V(\rho\|\sigma)}\Phi^{-1}(\epsilon) -\frac{1}{2} \log n + O (1).
\Label{2-4-4}
\end{align}

\subsection{Preliminary II: Known results of local hypothesis testing}
Now, we proceed to the hypothesis testing on a bipartite quantum system 
and its $n$-copies extension, which is the main topic of this paper.
A single copy of a bipartite Hilbert space is written as 
$\Hi_{AB} \stackrel{\rm def}{=} \Hi _A \otimes \Hi_B$, 
and its local dimensions are written as  $d_A \stackrel{\rm def}{=} \dim \Hi_A$ 
and $d_B \stackrel{\rm def}{=} \dim \Hi_B$. 
We use notations like 
$I_A$, $I_B$, $I_{AB}$, $I_A^n$, $I_B^n$, and $I_{AB}^n$ for identity
operations on $\Hi_A$, $\Hi_B$, $\Hi _{AB}$, $\Hi_A^{\otimes n}$,
$\Hi_B^{\otimes n}$, and $\Hi_{AB}^{\otimes n}$, respectively. 
When it is  easy to identify the domain of an identity operator, we 
abbreviate them to $I$ hereafter.  

In this paper, 
we define $d$ as
\begin{equation}\label{eq def d}
d \stackrel{\rm def}{=} \min (d_A,d_B),
\end{equation}
and
consider asymptotic hypothesis testing between
$n$-copies of an arbitrary known pure-bipartite
state $\ket{\Psi}$ with the Schmidt decomposition as 
\begin{equation}\Label{eq schmidt decomposisiton psi}
\ket{\Psi}\stackrel{\rm def}{=}\sum_{i=1}^{d}
\sqrt{\lambda_i}\ket{i}\otimes\ket{i}, 
\end{equation}
and $n$-copies of the white noise state (the completely mixed state) 
\begin{equation}\label{eq def rho mix}
\rho _{mix}\stackrel{\rm def}{=}\frac{I_{AB}}{d_Ad_B}
\end{equation} 
under the various restrictions on available POVMs: global POVMs, separable POVMs, one-way LOCC POVMs, and two-way LOCC POVMs \cite{HHHH09,VP07}. 
We choose the white noise state (the completely mixed state)
$\rho_{mix}^{\otimes n}$ as
a null hypothesis and the state $\ket{\Psi}^{\otimes n}$ as an
alternative hypothesis. 

As variants of $\beta_{n}(\alpha|\rho \| \sigma)$,
the optimal type-2 error under the condition 
that the type-1 error is no more than a constant $\alpha\ge 0$ is written as
\begin{align}\label{eq def beta n C alpha rho sigma}
\beta_{n,C}(\alpha|\rho \| \sigma) \stackrel{\rm def}{=}  
 \min _{T_n} \left \{
 \beta_n (T_n) \  | \ \alpha_n (T_n) \le \alpha, \{ T_n, I^n-T_n \} \in C \right \},
\end{align}
where $C$ is either $\rightarrow$, $\leftrightarrow$, $Sep$, and $g$
corresponding to classes of one-way LOCC, two-way LOCC, separable
and global POVMs, respectively.
Here, we note that although $\rightarrow$, $Sep$, and $g$ are compact sets, 
$\leftrightarrow$ is not compact by its original definition \cite{CLMOW13}. 
Further, we denote the class of two-way LOCCs with $k$-round classical communication by $\leftrightarrow, k$.
In this notation, $\leftrightarrow, 1$ is equivalent to $\rightarrow$.
In this case, the opposite one way LOCC $\leftarrow$ can be obtained by 
swapping systems ${\cal H}_A$ and ${\cal H}_B$. So, we do not discuss 
the opposite one way LOCC $\leftarrow$.

Hence, in this paper, the class $\leftrightarrow$ is defined as a closure of the set of all two-way LOCC POVMs, 
which involves infinite-step LOCC protocols as well \cite{BDFMRSSW99,KKB11,OBNM08,Ch11,CCL12}.
This definition of the class $\leftrightarrow$ justifies the use of 
$\min$ in Eq.(\ref{eq def beta n C alpha rho sigma}) for
$C=\leftrightarrow$.  
In the global POVMs $g$, 
since 
\begin{align}
\log \beta_{n,g}(\epsilon|\Psi\|\rho_{mix}) &= -n \log d_A d_B + \log (1-\epsilon) ,
\end{align}
as is shown in \cite{OH11},
we have 
\begin{align}
\beta_{n,g}(\epsilon|\rho_{mix}\|\Psi) &=0 \\
\beta_{n,g}(e^{-nr}|\rho_{mix}\|\Psi) &= 0 \hbox{ with } r \in [0, \log d_A d_B]\\
\beta_{n,g}(e^{-nr}|\rho_{mix}\|\Psi) &= 1 \hbox{ with } r \in (\log d_A d_B, +\infty),
\end{align}
and the following expansions
\begin{align}
\log \beta_{n,g}(e^{-nr}|\Psi\|\rho_{mix}) 
&= -n \log d_A d_B +\log (1 -e^{-nr}) \nonumber \\
&= -n \log d_A d_B -e^{-nr}+o(e^{-nr}) .
\end{align}

To discuss the remaining cases,
we introduce the R\'{e}nyi entropy $H_{1-s}(\Psi)$ of the reduced density of the entangled state $\ket{\Psi}$
and its derivative as follows.
\begin{align}
H_{1-s}(\Psi) \stackrel{\rm def}{=} \frac{\log \sum_{i}\lambda_i^{1-s}}{s} ,
\quad
H_{\alpha}'(\Psi)\stackrel{\rm def}{=}\frac{d}{d \alpha} H_{\alpha}(\Psi).
\end{align}
Here, $H_{1}(\Psi)$ is defined as the limit $\lim_{s\to 0}H_{1-s}(\Psi)$.
By the R\'{e}nyi entropy $H_{1-s}(\Psi)$, 
the entropy of the entanglement $E\left(\ket{\Psi}\right)$,
the Schmidt rank $R_{S}(\ket{\Psi})$ \cite{VP07,HHHH09}, 
and the logarithmic robustness of entanglement $LR(\ket{\Psi})$ \cite{Brandao05,HMMOV,Datta09} 
are characterized as 
\begin{align}
E\left(\ket{\Psi}\right)=H_{1}(\Psi),~
\log R_{S}(\ket{\Psi})=H_{0}(\Psi), ~
LR(\ket{\Psi})=H_{1/2}(\Psi).
\end{align}
In the following, for the unified treatment, 
we only use the notation $H_{1-s}(\Psi)$.
Also, we abbreviate $V(\Psi\|\rho_{mix})$ to $V(\Psi)$. 
That is, we have $V(\Psi) = \sum_{i} \lambda_i (\log \lambda_i+ H_1(\Psi))^2$.

Then, our previous paper \cite{OH11} shows the following 
propositions.
The Stein bounds are given as follows.
\begin{proposition}{\cite[Theorem 2]{OH11}}\Label{pro1} 
Given a real number $\epsilon \in (0,1)$ and a pure entangled state 
$|\Psi\rangle$, 
there exists a sufficiently large number $N$ such that
\begin{align}
& \beta_{n,\rightarrow}\left(\epsilon|\rho_{mix}\|\Psi \right)  = 
\beta_{n,\leftrightarrow}\left(\epsilon|\rho_{mix}\|\Psi \right)  = 
\beta_{n,sep}\left(\epsilon|\rho_{mix}\|\Psi \right)  = 0
\Label{2-5-6}
\end{align}
for $n \ge N$.
Further, for a given $\epsilon>0$, 
we have the following expansion.
\begin{align}
\log \beta_{n,\rightarrow}\left(\epsilon|\Psi \|\rho_{mix}\right) 
=& 
-n (\log d_A d_B -H_{1}(\Psi)) +o(n),
 \Label{2-5-7b}
\\
\log \beta_{n,\leftrightarrow}\left(\epsilon|\Psi \|\rho_{mix} \right)  
=& \log \beta_{n,sep}\left(\epsilon|\Psi \|\rho_{mix} \right)  +o(n)
\nonumber \\
=& -n (\log d_A d_B -H_{1}(\Psi)) +o(n).
\Label{2-5-8b}
\end{align}
\hfill $\square$\end{proposition}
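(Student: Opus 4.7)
\textbf{Proof strategy for Proposition~\ref{pro1}.} I split the statement into the zero--error assertion~(\ref{2-5-6}) and the Stein--type exponent~(\ref{2-5-7b})--(\ref{2-5-8b}), and I exploit the monotonicity $\beta_{n,g}\le\beta_{n,sep}\le\beta_{n,\leftrightarrow}\le\beta_{n,\rightarrow}$ coming from $\rightarrow\subset\leftrightarrow\subset sep\subset g$. It then suffices to produce an explicit one--way LOCC protocol that realises the bound and to prove the matching converse within the larger class of separable POVMs; the two--way LOCC identities in~(\ref{2-5-8b}) are then sandwiched automatically.

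For the zero--error claim~(\ref{2-5-6}) I construct a one--way LOCC test $\{T_n,I-T_n\}$ that accepts $\Psi^{\otimes n}$ with certainty: Alice projects each copy in the Schmidt basis $\{\ket{i}\}_{i=1}^{d}$ of $\ket{\Psi}$, broadcasts her outcome string $i^n=(i_1,\ldots,i_n)$, and Bob projects onto $\ket{i_1}\otimes\cdots\otimes\ket{i_n}$; acceptance is defined as Bob's projection succeeding. Under $\Psi^{\otimes n}$ Alice's outcome perfectly determines Bob's conditional state, so acceptance is certain and $\Tr\Psi^{\otimes n}T_n=1$. Under $\rho_{mix}^{\otimes n}$ a direct computation yields $\Tr\rho_{mix}^{\otimes n}T_n=(d/(d_A d_B))^n$, which falls below any fixed $\epsilon>0$ for $n$ large enough.

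For the direct part of~(\ref{2-5-7b}) I refine the above test by accepting only when Alice's outcome $i^n$ lies in the weakly typical set $T_\delta^n$ of the Schmidt distribution $\{\lambda_i\}$. Let $S_n=\sum_{i^n\in T_\delta^n}\ket{i^n}\bra{i^n}_A\otimes\ket{i^n}\bra{i^n}_B$ and $T_n=I-S_n$. Then $\{T_n,I-T_n\}$ is one--way LOCC, $\Tr\Psi^{\otimes n}T_n\le\epsilon$ for $n$ large by weak typicality, and
\begin{equation*}
\Tr\rho_{mix}^{\otimes n}(I-T_n)=\frac{|T_\delta^n|}{(d_A d_B)^n}\le 2^{-n(\log d_A d_B-H_1(\Psi)-\delta)},
\end{equation*}
which after sending $n\to\infty$ and then $\delta\to 0$ gives the exponent in~(\ref{2-5-7b}).

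The converse, and the chief obstacle, is to show that no separable measurement can do better. Decomposing any separable element $S_n=I-T_n$ as $\sum_k p_k\ket{a_k}\bra{a_k}_A\otimes\ket{b_k}\bra{b_k}_B$ with unit product vectors, the naive pointwise bound $|\braket{a_kb_k}{\Psi^{\otimes n}}|^2\le\lambda_1^n=2^{-nH_\infty(\Psi)}$ only yields the weaker exponent $\log d_A d_B-H_\infty(\Psi)$. To tighten this to $H_1(\Psi)$ I would sandwich $\Psi^{\otimes n}$ with the typical--subspace projector $\Pi_\delta$ of $\rho_A^{\otimes n}$: up to gentle--measurement error one can replace $\Psi^{\otimes n}$ by $\Pi_\delta\Psi^{\otimes n}\Pi_\delta$, and on the typical subspace every eigenvalue of $\rho_A^{\otimes n}$ is at most $2^{-n(H_1(\Psi)-\delta)}$, which refines the product--overlap bound correspondingly. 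Controlling the cross terms created by this truncation, via a Hayashi--Nagaoka--type operator inequality together with the fact that typical--subspace projectors commute with Alice's reduced density but not with arbitrary separable POVM elements, is the delicate point of the argument; I expect this to be the true source of difficulty, mirroring the treatment in~\cite{OH11}.
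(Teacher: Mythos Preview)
Your achievability arguments for (\ref{2-5-6}) and for the direct half of (\ref{2-5-7b}) are correct and essentially standard: the Schmidt-basis projection test and its typical-set refinement are exactly the right one-way LOCC constructions, and the sandwich $\rightarrow\subset\leftrightarrow\subset sep$ propagates the achievability to the other classes.

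The gap is in the converse. You correctly diagnose that the naive product-overlap bound only gives $H_\infty(\Psi)$, and you propose to repair this by truncating $\Psi^{\otimes n}$ with the typical projector $\Pi_\delta$ of $\rho_A^{\otimes n}$ and invoking a gentle-measurement/Hayashi--Nagaoka argument. But you stop precisely at the hard step: $\Pi_\delta$ acts only on Alice's side and does not commute with an arbitrary separable POVM element, so the ``cross terms'' you mention are not just a technicality---they are the whole problem. Without a concrete operator inequality of the form $\langle\Psi^{\otimes n}|S|\Psi^{\otimes n}\rangle\le C_n\,\Tr\rho_{mix}^{\otimes n}S$ for every separable $S$, with $\log C_n=nH_1(\Psi)+o(n)$ \emph{after} the truncation, the argument does not close. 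Your proposal is a plausible heuristic but not a proof.

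The paper (and the underlying reference \cite{OH11}) takes a completely different route that sidesteps this difficulty. For the one-way case it invokes the exact identity of Proposition~\ref{sec one-way lemma}: $\beta_{n,\rightarrow}(\alpha|\Psi\|\rho_{mix})=\beta_n(\alpha|\sigma_\Psi\|\rho_{mix})$, where $\sigma_\Psi=\sum_i\lambda_i\ket{ii}\bra{ii}$ commutes with $\rho_{mix}$. This converts the LOCC-constrained problem into a \emph{classical} hypothesis test, whereupon the classical Stein bound (\ref{2-4-2})--(\ref{2-4-4}) immediately yields both the direct and converse parts of (\ref{2-5-7b}). For the separable case the paper uses the analogous exact identity of Proposition~\ref{theorem previous paper 1}, reducing to the single-system composite problem of Section~\ref{sec separable}. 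The two-way LOCC exponent is then sandwiched. So the key lemmas are these \emph{exact reductions}, proved in \cite{OH10,OH11} by exploiting the specific structure of the pair $(\Psi,\rho_{mix})$; once they are in hand, no typical-subspace surgery on separable POVMs is needed at all.
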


The Hoeffding bounds are characterized as follows.
\begin{proposition}{\cite[(40) and (110)]{OH11}}\Label{pro2} 
Given a real number $r>0$ and a pure entangled state 
$|\Psi\rangle$, 
we have the following relation.
\begin{align}
H_{\rightarrow}\left(r|\Psi\|\rho_{mix} \right) 
\stackrel{\rm def}{=}&
\lim_{n \to \infty}-\frac{1}{n}
\log \beta_{n,\rightarrow}\left(e^{-nr}|\Psi\|\rho_{mix} \right)  
\nonumber\\
= &
\sup _{0 \le s <1} 
\frac{-s}{1-s}r-H_{s}(\Psi) +\log d_A d_B .
\Label{2-5-9}
\end{align}
This relation implies the following equation
for $r \ge r_{\to}\stackrel{\rm def}{=}-H_0'(\Psi)$:
\begin{align}
H_{\rightarrow}(r|\Psi\|\rho_{mix}) =\log d_A d_B - H_0(\Psi).
\end{align}
Further, when 
$r \ge \log d_A d_B -H_{1/2}(\Psi)$,
we have
\begin{align}
H_{sep}\left(r|\Psi\|\rho_{mix} \right) 
\stackrel{\rm def}{=}
\lim_{n \to \infty}-\frac{1}{n}\log 
\beta_{n,sep}\left(e^{-nr}|\Psi\|\rho_{mix} \right)  
=
\log d_A d_B - H_{1/2}(\Psi).\Label{9-11-1}
\end{align}
\hfill $\square$\end{proposition}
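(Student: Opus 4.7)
The plan is to establish the one-way LOCC formula (\ref{2-5-9}) by reducing it to a classical hypothesis testing problem, and to obtain the separable plateau value (\ref{9-11-1}) by combining the logarithmic robustness of entanglement with an $n$-fold tensorization argument.

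For the direct part of (\ref{2-5-9}), I would use the protocol in which Alice measures each of her $n$ subsystems in the Schmidt basis $\{\ket{i}\}_{i=1}^{d}$ of $\ket{\Psi}$, forwards her outcome string $\vec{i}=(i_{1},\ldots,i_{n})$ to Bob, and then Bob measures in the matched basis on his side. Under $\ket{\Psi}^{\otimes n}$ the induced joint classical distribution is $p(\vec{i},\vec{j})=\delta_{\vec{i},\vec{j}}\prod_{k}\lambda_{i_{k}}$, while under $\rho_{mix}^{\otimes n}$ it is $q(\vec{i},\vec{j})=(d_{A}d_{B})^{-n}$. A direct computation gives $-\log \sum_{\vec{i},\vec{j}} q^{1-s}p^{s} = n(1-s)(\log d_{A}d_{B}-H_{s}(\Psi))$, and plugging this into the classical specialization of (\ref{2-4-3}) reproduces the supremum on the right-hand side of (\ref{2-5-9}). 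For the converse I would decompose any one-way LOCC POVM element as $T_{n}=\sum_{a} M_{a}^{(A)}\otimes N_{a}^{(B)}$; under $\ket{\Psi}^{\otimes n}$ Bob's conditional state after Alice's outcome $a$ is a pure state $\ket{\phi_{a}}$, while under $\rho_{mix}^{\otimes n}$ it remains $I_{B}^{\otimes n}/d_{B}^{n}$, so the problem reduces to a family of binary tests of a pure state against the maximally mixed state. The classical Hoeffding bound applied conditionally, combined with the monotonicity of R\'enyi divergence under Alice's measurement step, then shows that the Schmidt-basis protocol is asymptotically optimal. In the plateau region $r\ge r_{\to}=-H_{0}'(\Psi)$ the supremum in (\ref{2-5-9}) is attained at $s=0$, since $\frac{d}{ds}\big|_{s=0}\big(\tfrac{-sr}{1-s}-H_{s}(\Psi)\big)=-r-H_{0}'(\Psi)\le 0$, yielding the asserted plateau value $\log d_{A}d_{B}-H_{0}(\Psi)$.

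For the separable upper bound (\ref{9-11-1}), the key tool is the variational characterization of the logarithmic robustness, $LR(\ket{\Psi})=H_{1/2}(\Psi)$. From the SDP dual for robustness one obtains a single-copy separable-witness inequality of the form $\ket{\Psi}\bra{\Psi}\le d_{A}d_{B}\, e^{H_{1/2}(\Psi)}\,\rho_{mix}$ in the ordering induced by separable positive functionals. Tensorizing $n$-fold and contracting against any separable POVM element $T_{n}$ yields $\Tr \ket{\Psi}\bra{\Psi}^{\otimes n} T_{n} \le (d_{A}d_{B})^{n}\,e^{nH_{1/2}(\Psi)}\,\Tr \rho_{mix}^{\otimes n} T_{n}$, which immediately gives $H_{sep}(r|\Psi\|\rho_{mix})\le \log d_{A}d_{B}-H_{1/2}(\Psi)$ uniformly in $r$. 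For matching achievability in the plateau region $r\ge \log d_{A}d_{B}-H_{1/2}(\Psi)$ I would tensorize the optimal single-copy robustness-witness POVM and control the type-1 error by a Chernoff-type concentration, appealing to the strong large deviation of Bhadur--Rao cited in the introduction when the sharp constant is needed.

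The main obstacle is the sharp separable converse. Monotonicity of Stein-type divergences under separable operations only delivers the weaker rate $\log d_{A}d_{B}-H_{1}(\Psi)$, and upgrading to the $H_{1/2}$ plateau genuinely requires the robustness-witness inequality together with its multiplicative $n$-fold tensorization, which has no direct LOCC counterpart and must be justified by an SDP-duality computation rather than by a single-letter monotonicity identity. The one-way converse is comparatively easy but still requires uniform control over Alice's POVM choices across all possible outcomes.
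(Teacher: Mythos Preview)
For the one-way LOCC formula (\ref{2-5-9}) your approach coincides with the paper's: Proposition~\ref{sec one-way lemma} (Lemma~1 of \cite{OH11}) gives the exact identity $\beta_{n,\rightarrow}(\alpha|\Psi\|\rho_{mix})=\beta_n(\alpha|\sigma_{\Psi}\|\rho_{mix})$ for the commuting pair $\sigma_{\Psi}=\sum_i\lambda_i\ket{ii}\bra{ii}$ and $\rho_{mix}$, after which (\ref{2-4-3}) yields (\ref{2-5-9}) and the plateau at $s=0$ follows exactly as you argue. Your Schmidt-basis protocol is the achievability half of that reduction, and your converse sketch is essentially the content of the lemma.

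The separable converse, however, has a genuine gap. Your witness inequality $\ket{\Psi}\bra{\Psi}\le d_Ad_B\,e^{H_{1/2}(\Psi)}\rho_{mix}$ in the separable-functional ordering, applied to the separable element $I-T_n$ (which is the one you must control, since $\beta_n=\Tr\rho_{mix}^{\otimes n}(I-T_n)$), gives $1-\alpha_n\le(d_Ad_B)^n e^{nH_{1/2}}\beta_n$ and hence only $H_{sep}\le\log d_Ad_B+H_{1/2}(\Psi)$, with the wrong sign. More fundamentally, the sharpest constant in any inequality $\langle\Psi|W|\Psi\rangle\le C\,\Tr W\rho_{mix}$ valid for all separable $W\ge 0$ is $C=d_Ad_B\lambda_{\max}$, attained at the product projector onto the top Schmidt vector; tensorizing yields only $H_{sep}\le\log d_Ad_B-H_{\infty}(\Psi)$, strictly weaker than (\ref{9-11-1}). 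The identity $LR(\ket{\Psi})=H_{1/2}(\Psi)$ encodes $\ket{\Psi}\bra{\Psi}\le e^{H_{1/2}}\sigma$ for \emph{some} separable state $\sigma\neq\rho_{mix}$; it governs testing $\Psi$ against the whole separable set, not testing $\Psi$ against the fixed alternative $\rho_{mix}$ under separable measurements, which is the problem here. The paper (via \cite{OH11,OH10}) obtains (\ref{9-11-1}) through a different and exact reduction, Proposition~\ref{theorem previous paper 1}, namely $\beta_{n,sep}(\alpha|\Psi\|\rho_{mix})=d_{\max}^{-n}\beta_n(\alpha|\varphi)$ with $\beta_n(\alpha|\varphi)$ the composite test of Section~\ref{sec separable}; the $H_{1/2}$ emerges there from the structure of the optimal vector against the mutually-unbiased family $\{\ket{\phi_L^n}\}$, not from a single-copy operator inequality that one then tensorizes.
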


\subsection{Main results}
In this subsection, we give a short description of the main 
results of this paper. 
As a refinement of Proposition \ref{pro1}, we obtain the following theorem for Stein-Strassen bounds.
Here, remember that we have defined the function 
$\Phi(x)\stackrel{\rm def}{=}
\int_{-\infty}^{x}\frac{1}{\sqrt{2\pi}} e^{\frac{-y^2}{2}} d y $.

\begin{theorem}\Label{thm1}
When the Schmidt coefficient $\lambda_i$ in \eqref{eq schmidt decomposisiton psi} is not uniform,
we have the following expansions for a given $\epsilon>0$.
\begin{align}
&\log \beta_{n,\rightarrow}\left(\epsilon|\Psi \|\rho_{mix}\right) \nonumber \\
=& 
-n (\log d_A d_B -H_{1}(\Psi)) -\sqrt{n} \sqrt{V(\Psi)} \Phi^{-1}(\epsilon)- \frac{1}{2}\log n +O(1) , \Label{2-5-7}
\\
&\log \beta_{n,\leftrightarrow,2}\left(\epsilon|\Psi \|\rho_{mix} \right)  
= 
\log \beta_{n,\leftrightarrow}\left(\epsilon|\Psi \|\rho_{mix} \right)  +O(1)
\nonumber \\
=& \log  \beta_{n,sep}\left(\epsilon|\Psi \|\rho_{mix} \right)  
+O(1) \nonumber \\
=& -n (\log d_A d_B -H_{1}(\Psi)) -\sqrt{n} \sqrt{V(\Psi)} \Phi^{-1}(\epsilon)- \log n +O(1).
\Label{2-5-8}
\end{align}
\hfill $\square$\end{theorem}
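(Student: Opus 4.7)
The plan is to prove (\ref{2-5-7}) and (\ref{2-5-8}) separately, using the trivial chain $\beta_{n,sep}\le\beta_{n,\leftrightarrow}\le\beta_{n,\leftrightarrow,2}\le\beta_{n,\rightarrow}$. For (\ref{2-5-8}) it then suffices to establish a lower bound on $\beta_{n,sep}$ (converse) and a matching upper bound on $\beta_{n,\leftrightarrow,2}$ (achievability); the three middle quantities are squeezed together up to $O(1)$.

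For (\ref{2-5-7}), I would reduce one-way LOCC to a classical hypothesis test between commuting distributions. Let Alice measure in her Schmidt basis and forward the outcome $i$, and let Bob project onto $\ket{i}$. Under $\ket{\Psi}^{\otimes n}$ the joint outcome has the product law $\lambda_{i_1}\cdots\lambda_{i_n}$ supported on the diagonal of $\{1,\dots,d_A\}^n\times\{1,\dots,d_B\}^n$, and under $\rho_{mix}^{\otimes n}$ it is uniform on $(d_A d_B)^n$. A direct computation gives KL divergence $\log d_A d_B-H_{1}(\Psi)$ and variance $V(\Psi)$, and the two measures commute, so the commutative refinement (\ref{2-4-4}) yields the asserted expansion including the precise $-\tfrac{1}{2}\log n$ term (the lattice span $d$ of the Schmidt distribution enters only the $O(1)$ remainder). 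For the matching converse, the structural results for one-way LOCC testing of this particular state pair in \cite{OH11} show that any one-way LOCC test reduces, up to classical post-processing, to the above commutative experiment, so no one-way LOCC scheme can beat the classical Strassen bound by more than $O(1)$.

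For (\ref{2-5-8}) the achievability half uses the three-step two-way LOCC protocol constructed in \cite{OH10}. I would classify Schmidt outcomes by their empirical type using the method of types \cite{CKbook} and evaluate the resulting sum over types by the saddle-point approximation \cite{Dembo98,Moulin13}, carefully tracking the polynomial prefactor that produces the $-\log n$ term beyond the exponential and Gaussian corrections. The converse half, bounding $\beta_{n,sep}$ from below with matching $O(1)$ precision, is the main obstacle. I would dualize the separable LP defining $\beta_{n,sep}$ and reduce it to a composite classical hypothesis test against a family of product-type alternative distributions parameterized by separable witnesses on $\Hi_A\otimes\Hi_B$. The strong large deviation of Bhadur-Rao (Proposition \ref{11-4-4}) then replaces the coarse large-deviation tail $e^{-nr}$ by one with an $n^{-1/2}$ prefactor, supplying the extra $-\tfrac{1}{2}\log n$ on top of the classical Strassen $-\tfrac{1}{2}\log n$ to produce the sharper $-\log n$ third-order term.

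The principal difficulty will be to align the $O(1)$ constants obtained from two genuinely different routes: the saddle-point evaluation on the LOCC side and the Bhadur-Rao strong large deviation on the separable side. Because these routes see different combinatorial data -- empirical types of Schmidt strings versus the lattice span of $(\log\lambda_i)_i$ -- showing that both yield matching constants (so that the three quantities in (\ref{2-5-8}) agree up to $O(1)$, not merely up to $o(\log n)$) is where the bulk of the technical effort will concentrate.
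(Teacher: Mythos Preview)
Your plan for (\ref{2-5-7}) and for the achievability half of (\ref{2-5-8}) is the paper's own: the one-way case is reduced via \cite[Lemma 1]{OH11} to the commutative pair $(\sigma_\Psi,\rho_{mix})$ and then (\ref{2-4-4}) is applied; the two-round upper bound is obtained from the three-step protocol of \cite{OH10} analysed by the type method and a saddle-point estimate (Section \ref{sec two-way}).

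The separable converse is where your proposal diverges from the paper and becomes vague. You propose to ``dualize the separable LP'' and reduce to a composite test against ``product-type alternative distributions parameterized by separable witnesses''. The paper does something much more structured: it uses the \emph{exact} identity $\beta_{n,sep}(\alpha|\Psi\|\rho_{mix})=d_{\max}^{-n}\beta_n(\alpha|\varphi)$ from \cite[Theorem 5]{OH10} (Proposition \ref{theorem previous paper 1}), where $\beta_n(\alpha|\varphi)$ is a \emph{single-system} composite test of $|\varphi\rangle^{\otimes n}$ against the sign-flipped uniform superpositions $\{|\phi_L^n\rangle\}$ of (\ref{eq def phi l n}) --- not product-type alternatives. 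That specific composite problem in turn has an exact non-asymptotic characterisation (Proposition \ref{Th1}, from \cite[Theorem 4]{OH10}) in terms of the three tail sums $P_{n,0}(R),P_{n,1/2}(R),P_{n,1}(R)$; Bahadur--Rao (Proposition \ref{11-4-4}) is then applied to each of these separately, and the $-\log n$ emerges as $2\cdot(-\tfrac12\log n)$ from the ratio $P_{n,1/2}(R)^2/P_{n,1}(R)$. Without this exact two-step reduction it is unclear how a generic LP dual would produce the sharp $-\log n$ rather than just $O(\log n)$.

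Finally, your last paragraph overstates the difficulty. You do not need to match the $O(1)$ constants from the two routes: once the separable lower bound and the two-round upper bound both read $-n(\log d_Ad_B-H_1(\Psi))-\sqrt{n}\sqrt{V(\Psi)}\Phi^{-1}(\epsilon)-\log n+O(1)$, the chain $\beta_{n,sep}\le\beta_{n,\leftrightarrow}\le\beta_{n,\leftrightarrow,2}$ already forces all three to agree up to $O(1)$, which is exactly what (\ref{2-5-8}) asserts.
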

Relations (\ref{2-5-7}) and (\ref{2-5-8}) show that the difference between 
$\log \beta_{n,\rightarrow}\left(\epsilon|\Psi \|\rho_{mix}\right) $
and 
$\log \beta_{n,\leftrightarrow}\left(\epsilon|\Psi \|\rho_{mix}\right) $
exists only on the order of $\log n$.
However, there is no difference with the uniform Schmidt coefficient as follows.

\begin{theorem}\Label{thm1-B}
When the Schmidt coefficient $\lambda_i$ in \eqref{eq schmidt decomposisiton psi} is uniform,
we have the following expansions for a given $\epsilon>0$.
\begin{align}
&\beta_{n,\rightarrow}\left(\epsilon|\Psi \|\rho_{mix}\right) 
=\beta_{n,\leftrightarrow}\left(\epsilon|\Psi \|\rho_{mix}\right) 
=\beta_{n,sep}\left(\epsilon|\Psi \|\rho_{mix}\right) 
=\max\{0,1- \bar{d}^n \epsilon\}
\Label{2-5-8B},
\end{align}
where $\bar{d}:= \max (d_A, d_B)$.
\hfill $\square$\end{theorem}

\begin{theorem}\Label{thm1-B2}
For the Hoeffding bounds of two-way LOCC and separable cases,
we obtain the following relations.
\begin{align}
& 
\lim_{n \to \infty}
-\frac{1}{n}\log \beta_{n,\leftrightarrow,2}\left(e^{-nr} |\Psi\|\rho_{mix} \right)  = 
\lim_{n \to \infty}
-\frac{1}{n}\log \beta_{n,\leftrightarrow}\left(e^{-nr} |\Psi\|\rho_{mix} \right) \nonumber \\
= &
\lim_{n \to \infty}
-\frac{1}{n}\log \beta_{n,sep}\left(e^{-nr} |\Psi\|\rho_{mix} \right) 
\nonumber\\
=&
\sup _{0 \le s <1} 
\frac{-2s}{1-s}r-H_{\frac{1+s}{2}}(\Psi) +\log d_A d_B.
\Label{2-5-10}
\end{align}
\hfill $\square$\end{theorem}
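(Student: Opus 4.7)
The plan is to prove the theorem by sandwiching the three quantities in \eqref{2-5-10} between matching converse and achievability bounds. Since every two-round two-way LOCC POVM is a two-way LOCC POVM and every LOCC POVM is separable, the error probabilities satisfy
\begin{align*}
\beta_{n,sep}(e^{-nr}|\Psi\|\rho_{mix}) \le \beta_{n,\leftrightarrow}(e^{-nr}|\Psi\|\rho_{mix}) \le \beta_{n,\leftrightarrow,2}(e^{-nr}|\Psi\|\rho_{mix}).
\end{align*}
Writing $R(r)$ for the right-hand side of \eqref{2-5-10}, it therefore suffices to prove the two one-sided bounds $\limsup_n -\tfrac{1}{n}\log\beta_{n,sep} \le R(r)$ (the \emph{converse} for separable POVMs) and $\liminf_n -\tfrac{1}{n}\log\beta_{n,\leftrightarrow,2} \ge R(r)$ (the \emph{achievability} by a two-round two-way LOCC protocol). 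The sandwich then forces the middle $\beta_{n,\leftrightarrow}$-exponent to the common value $R(r)$.

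For the converse, I would recast the separable optimization as a classical composite hypothesis testing problem. For any separable POVM element $T_n=\sum_\omega A_\omega\otimes B_\omega$ with $A_\omega, B_\omega\ge 0$, the Schmidt decomposition $\Psi^{\otimes n}=\sum_{\vec i}\sqrt{\lambda_{\vec i}}|\vec i\,\vec i\rangle$ makes the overlap $\langle\Psi^{\otimes n}|A_\omega\otimes B_\omega|\Psi^{\otimes n}\rangle$ accessible to Cauchy--Schwarz-type estimates, in which the factors $\sqrt{\lambda_{\vec i}}$ are precisely what produces the $(1+s)/2$-th R\'enyi entropy appearing in $R(r)$ after an exponential tilt. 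This reduces the separable problem to a classical Hoeffding problem between a distribution built from the Schmidt weights and a uniform-like distribution, over a family of alternatives parametrized by the empirical types of Schmidt indices. To pin down $R(r)$ for \emph{all} $r$ --- in particular below the plateau $r\ge \log d_A d_B - H_{1/2}(\Psi)$ that was already covered by Proposition \ref{pro2} --- the ordinary Cram\'er bound is too coarse; I would instead invoke the strong large-deviation theorem of Bhadur-Rao (Proposition \ref{11-4-4}) to obtain the constant-order tail control that identifies the exponent at interior saddle points $s\in(0,1)$.

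For the achievability, I would analyze the three-step two-way LOCC protocol of \cite{OH10}: Alice measures along a partition by Schmidt levels and transmits her outcome to Bob; Bob performs a conditional local test and returns a single bit; Alice then performs the conclusive measurement. Expanding $\alpha_n$ and $\beta_n$ by the method of types \cite{CKbook} expresses them as sums over empirical types of Schmidt indices with multinomial weights, and a saddle-point approximation \cite{Dembo98,Moulin13} extracts the exponential rate of each sum. Optimizing the tilt parameter over $s\in[0,1)$ reproduces $R(r)$ exactly, with the factor $\lambda_i^{(1+s)/2}$ arising as the product of Alice's $\sqrt{\lambda_i}$ weight and Bob's tilted response $\lambda_i^{s/2}$.

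The main obstacle will be the converse. The achievability amounts to a technically involved but standard saddle-point evaluation of one explicit protocol, whereas the converse must dominate every separable $T_n$ uniformly. In the plateau regime, the Cram\'er-level estimate of \cite{OH11} was sufficient; but matching $R(r)$ across all $r$, and in particular at the interior optima where the tilt $s$ is non-trivial, requires the constant-order refinement of Bhadur-Rao to control the relevant tail probabilities with the precision needed to reveal the full $s$-dependence of the exponent.
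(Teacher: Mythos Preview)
Your sandwich structure---achievability via the three-step protocol of \cite{OH10} and converse via the separable bound---is exactly the paper's route, and your identification of Bhadur--Rao (Proposition~\ref{11-4-4}) as the tool that pushes the separable converse beyond the plateau of Proposition~\ref{pro2} is also correct.

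The differences are in the machinery inside each step. For the converse, you propose to bound an arbitrary separable $T_n$ directly via Cauchy--Schwarz estimates and reduce to a ``classical Hoeffding problem.'' The paper instead invokes the exact equality $\beta_{n,sep}(\alpha|\Psi\|\rho_{mix})=d_{max}^{-n}\beta_n(\alpha|\varphi)$ (Proposition~\ref{theorem previous paper 1}, from \cite{OH10}), which recasts the problem as a \emph{quantum} composite test of $|\varphi\rangle^{\otimes n}$ against the sign-superposition family $\{|\phi_L^n\rangle\}$; this test already has an exact finite-$n$ characterization (Proposition~\ref{Th1}, also from \cite{OH10}) in terms of the level-set sums $P_{n,0},P_{n,1/2},P_{n,1}$, and Bhadur--Rao is then applied to those sums. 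This buys tightness for free: no separable-POVM optimization or Cauchy--Schwarz slack is needed because the reduction is an equality. Your direct approach may also work but would have to reproduce that exact reduction implicitly.

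For the achievability, the saddle-point approximation is not used here---it appears only in the Stein--Strassen analysis (Section~\ref{sec two-way}, Lemma~\ref{2-16-1}). The Hoeffding achievability in the paper is pure type method: a partition of ${\cal X}^n$ by types $Q$, an explicit collection of measures $\{m_{Q,j}\}$ indexed by types near the optimizer, and the variational identity $\min_{Q:D(Q\|P)\le r}\bigl(D(Q\|P)-H(Q)\bigr)=\sup_{0\le s<1}\frac{-2s}{1-s}r-H_{(1+s)/2}(\Psi)$ (Lemma~\ref{L12}) to convert the type optimum into the R\'enyi expression. This is coarser than a saddle-point evaluation but exactly sufficient for a first-order exponent.
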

This theorem concludes that
the Chernoff bound for the two-way LOCC case equals that for the separable case, which was an open problem in the previous paper \cite{OH11}.

Since $H_{\frac{1+s}{2}}(\Psi)$ monotonically decreases for $s$,
the supremum $\sup _{0 \le s <1} 
\frac{-2s}{1-s}r-H_{\frac{1+s}{2}}(\Psi) +\log d_A d_B$
is realized with $s\to 0$ when $r \ge r_{\leftrightarrow}\stackrel{\rm def}{=}-\frac{1}{4}H_{1/2}'(\Psi)$.
In this case, the Hoeffding bounds for two-way LOCC and separable cases
coincide with the right hand side of \eqref{9-11-1}.
Since the convexity of $s H_{1+s}(\Psi)$ implies that
\begin{align*}
&\log d -H_{\frac{1}{2}}(\Psi) \ge H_{0}(\Psi) -H_{\frac{1}{2}}(\Psi) \\
=&\frac{1}{2} \frac{\frac{-1}{2}H_{1-\frac{1}{2}}(\Psi)-(-H_{1-1}(\Psi))}{-\frac{-1}{2}-(-1)}
 -\frac{1}{2}H_{\frac{1}{2}}(\Psi) \\
=&\frac{1}{2} \frac{d sH_{1+s}(\Psi)}{ds}\Bigr|_{s=-\frac{1}{2}} -\frac{1}{2}H_{\frac{1}{2}}(\Psi) 
=-\frac{1}{4}H_{1/2}'(\Psi),
\end{align*}
this argument can be regarded as an extension of \eqref{9-11-1} in Proposition \ref{pro2}.

The right hand sides of \eqref{2-5-9} and \eqref{2-5-10} are numerically 
calculated as shown in  Figs. \ref{g1} and \ref{g2} 
when the pure entangled state $\ket{\Psi}$
is given as a pure state $\ket{\Psi(\lambda)}$:
\begin{equation}\label{eq def Psi lambda}
 \ket{\Psi(\lambda)}=\sqrt{\lambda}\left(\sum_{i=1}^{d-1} \ket{ii} \right)+\sqrt{1-(d-1)\lambda}\ket{dd},
\end{equation}
where $\lambda$ satisfies $0\le \lambda \le 1/\sqrt{d}$.
The graphs in Figs. \ref{g1} and \ref{g2} show 
the typical points  
$r_{\to}$ and $r_{\leftrightarrow}$ on the horizontal line and
$\log d_A d_B - H_0(\Psi)$, $\log d_A d_B - H_{1/2}(\Psi)$, and
$\log d_A d_B - H_1(\Psi)$ on the vertical line.
Note that $\ket{\Psi(0)}$ is a product state
and $\ket{\Psi(1/\sqrt{d})}$ is a maximally entangled state. 
The results in Figs. \ref{g1} and \ref{g2} show that
two-way LOCC improves the Hoeffding bound when $r$ is large.

\begin{figure}[htbp]
\begin{center}
\includegraphics[scale=0.8]{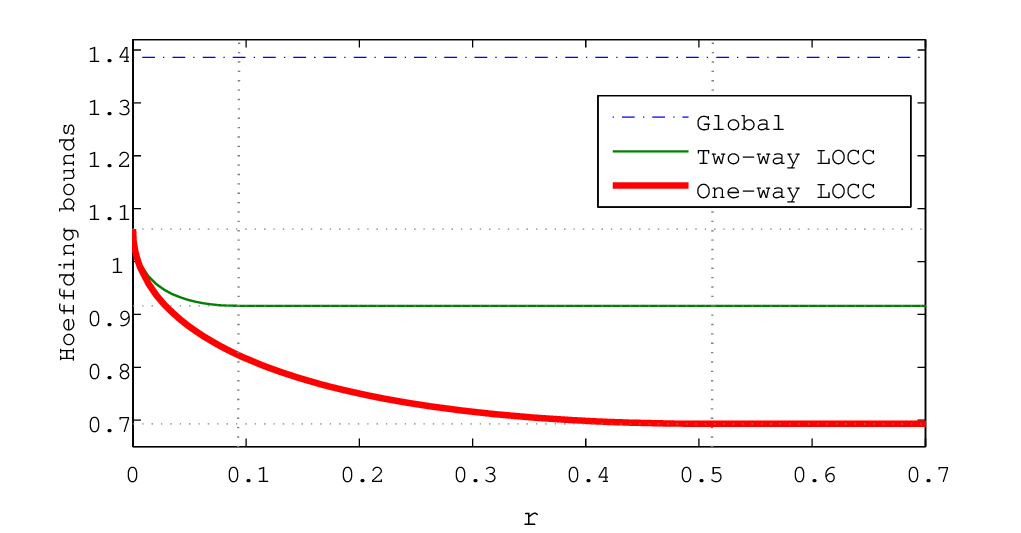}
\end{center}
\caption{Comparison of Hoeffding bounds in one-way LOCC and two-way LOCC 
 when $d=2$ and $\lambda=0.1$.
In this case, we have $r_{\to}=0.511$, $r_{\leftrightarrow}=0.092$,
$\log d_A d_B - H_0(\Psi)=0.693$,
$\log d_A d_B - H_{1/2}(\Psi)=0.916$,
and
$\log d_A d_B - H_1(\Psi)=1.061$.}
\Label{g1}
\end{figure}%

\begin{figure}[htbp]
\begin{center}
\includegraphics[scale=0.8]{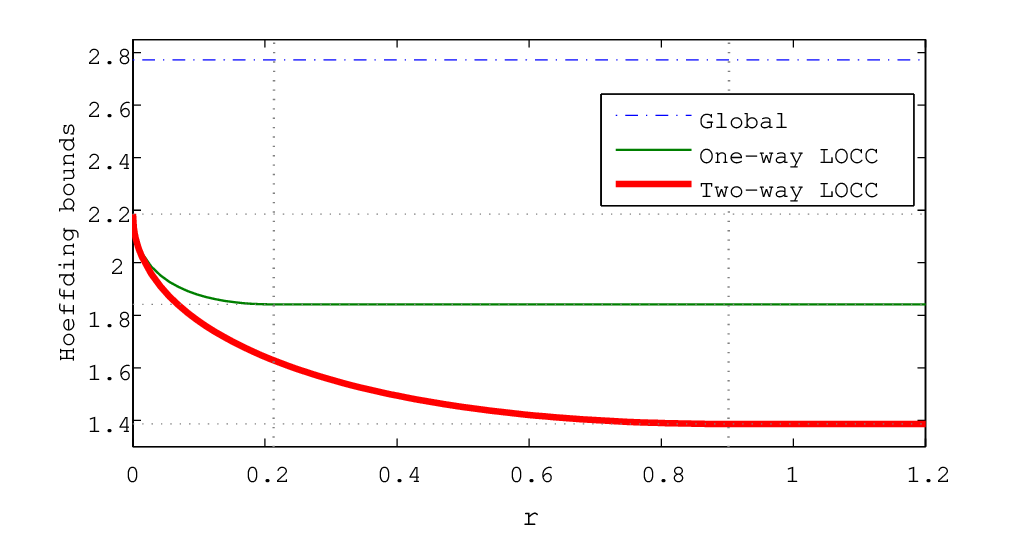}
\end{center}
\caption{Comparison of Hoeffding bounds in one-way LOCC and two-way LOCC 
 when $d=4$ and $\lambda=0.05$.
In this case, we have $r_{\to}=0.911$, $r_{\leftrightarrow}=0.212$,
$\log d_A d_B - H_0(\Psi)=1.386$,
$\log d_A d_B - H_{1/2}(\Psi)=1.841$,
and
$\log d_A d_B - H_1(\Psi)=2.185$.}
\Label{g2}
\end{figure}%

\section{Hypothesis testing under one-way LOCC POVMs} \Label{sec one-way}
In this section, 
to show the relations for the one-way LOCC POVMs in Proposition \ref{pro2} and Theorems \ref{thm1} and \ref{thm1-B} 
(\eqref{2-5-9},
(\ref{2-5-7}), and 
$\beta_{n,\rightarrow}\left(\epsilon|\Psi \|\rho_{mix}\right) 
=\max\{0,1- \bar{d}^n \epsilon\}$),
we consider $C=\rightarrow$, that is, the local hypothesis testing under one-way LOCC POVMs.
In this case, it turns out that our results can be formulated in terms of the following state
\begin{equation}\Label{eq def sigma psi}
 \sigma_{\Psi}\stackrel{\rm def}{=} \sum_{i=1}^d\lambda_i\ket{i}\bra{i}\otimes \ket{i}\bra{i},
\end{equation}
where $\left\{\ket{i}\otimes \ket{j}\right\}_{i,j}$ 
is the Schmidt basis of $\ket{\Psi}$ [see Eq.(\ref{eq schmidt decomposisiton psi})].
Then, our hypothesis testing is reduced to that with states
$ \sigma_{\Psi}$ and $\rho_{mix}$.
That is, the last paper \cite{OH11} showed the following lemma:
\begin{proposition}{Lemma 1 of \cite{OH11}} \Label{sec one-way lemma}
For all $\alpha >0$, we have
\begin{align}
\beta_{n,\rightarrow}
 (\alpha|\Psi \| \rho_{mix})&=\beta_{n}(\alpha|\sigma_{\Psi}\| \rho_{mix}), \Label{eq
 one-way equal classical 2}
\end{align}
where the optimal type-2 error probability $\beta_{n,\rightarrow}
 (\alpha|\rho \|\sigma)$ is defined by Eq.
(\ref{eq def beta n C alpha rho sigma}). 
\hfill $\square$\end{proposition}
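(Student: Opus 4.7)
The plan is to establish the equality by proving the two inequalities separately, reducing the one-way LOCC discrimination problem to a classical Neyman--Pearson problem on Alice's outcome distribution together with a pure-versus-maximally-mixed subproblem on Bob's side. The key structural observation is that for a pure state $\ket{\Psi}$, any rank-one POVM on Alice's side collapses Bob's conditional state under the null to a pure state, which makes Bob's subtest amount to an almost trivial optimization.

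For the direction $\beta_{n,\rightarrow}(\alpha|\Psi\|\rho_{mix})\le\beta_{n}(\alpha|\sigma_\Psi\|\rho_{mix})$ (achievability), I would implement the optimal global test for $\sigma_\Psi^{\otimes n}$ versus $\rho_{mix}^{\otimes n}$ by a one-way LOCC protocol. The local dephasing channel that projects each party onto its Schmidt basis sends $\ket{\Psi}\bra{\Psi}^{\otimes n}$ to $\sigma_\Psi^{\otimes n}$ and fixes $\rho_{mix}^{\otimes n}$, and since $\sigma_\Psi$ and $\rho_{mix}$ commute, the optimal test is diagonal in the Schmidt basis. Such a test is realized by Alice measuring in her Schmidt basis, sending her outcome to Bob, and both parties following a classical decision rule.

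For the converse, take any one-way LOCC test $T=\sum_a M_a\otimes B_a$ with $\alpha_n(T)\le\alpha$, and first reduce to $M_a=c_a\ket{v_a}\bra{v_a}$ rank-one (with $\|v_a\|=1$). Under $\ket{\Psi}^{\otimes n}$, Alice gets outcome $a$ with probability $P_\Psi(a)=c_a\la v_a|\Lambda|v_a\ra$, leaving Bob in the pure state $\ket{\hat w_a}\propto\Lambda^{1/2}\ket{\bar v_a}$, while under $\rho_{mix}^{\otimes n}$ the probability is $P_{mix}(a)=c_a/d_A^n$ and Bob sees $I/d_B^n$. For each $a$, Bob performs Neyman--Pearson between $\ket{\hat w_a}\bra{\hat w_a}$ and $I/d_B^n$, whose optimum gives $\Tr[B_a]=d_B^n-1+\alpha_a$, with $\alpha_a=\la\hat w_a|B_a|\hat w_a\ra$ Bob's conditional type-one error. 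Assembling,
\[
\beta_n(T) = \frac{1}{d_B^n}\Bigl(1-\sum_a P_{mix}(a)\alpha_a\Bigr),\qquad \sum_a P_\Psi(a)\alpha_a\le\alpha,
\]
so optimizing over Bob's sub-POVMs reduces $\beta_n(T)$ to $\beta_{NP}(\alpha|P_\Psi\|P_{mix})/d_B^n$, where $\beta_{NP}$ is the classical Neyman--Pearson type-two function. Now Alice's POVM is a measurement channel sending $\Lambda\mapsto P_\Psi$ and $I/d_A^n\mapsto P_{mix}$, so the classical data-processing inequality yields $\beta_{NP}(\alpha|P_\Psi\|P_{mix})\ge\beta_{NP}(\alpha|\{\lambda_{\vec i}\}\|\{1/d_A^n\})$, with equality when Alice measures in the Schmidt basis. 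A direct calculation on the commuting pair $\sigma_\Psi^{\otimes n}$ and $\rho_{mix}^{\otimes n}$, splitting off the $\vec i\ne\vec j$ outcomes (which contribute $1-1/d_B^n$ for free) from the $\vec i=\vec j$ outcomes (which absorb the entire type-one budget), identifies $\beta_{NP}(\alpha|\{\lambda_{\vec i}\}\|\{1/d_A^n\})/d_B^n$ with $\beta_n(\alpha|\sigma_\Psi\|\rho_{mix})$, closing the converse.

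The step I expect to require the most care is the initial reduction to a rank-one Alice POVM. The intuition is that a higher-rank $M_a$ leaves Bob with a mixed conditional state $\tau_a$ under the null, and mixed states are harder to discriminate from $I/d_B^n$ than the pure components of any rank-one refinement (for example, at $\alpha_a=0$ one has $\beta_{\mathrm{Bob}}=\mathrm{rank}(\tau_a)/d_B^n$, minimized at rank one). Promoting this observation to a uniform monotonicity across all $\alpha_a\in[0,1]$, so that refining Alice weakly improves the total $\beta$, is the bookkeeping that has to be handled before the Neyman--Pearson and DPI machinery can take over.
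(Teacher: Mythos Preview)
The paper does not prove this proposition; it is simply quoted as Lemma~1 of the earlier paper \cite{OH11}. There is therefore nothing in the present paper to compare your argument against.

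That said, your argument is correct. The achievability direction is immediate: the optimal test between the commuting states $\sigma_\Psi^{\otimes n}$ and $\rho_{mix}^{\otimes n}$ is diagonal in the product Schmidt basis and hence realizable by local basis measurements, and these measurements produce the same outcome statistics on $\ket{\Psi}^{\otimes n}$ as on $\sigma_\Psi^{\otimes n}$. The converse via a Neyman--Pearson reduction on Bob's side (where the optimal $B_a$ against a pure conditional state has $\Tr B_a=d_B^n-1+\alpha_a$), followed by classical data processing on Alice's outcome distribution, is exactly the natural route and the calculations you sketch all check out.

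Your one worry --- the reduction to rank-one $M_a$ --- is misplaced. Refining $M_a$ into rank-one pieces $\{M_{a,k}\}_k$ with $\sum_k M_{a,k}=M_a$ and having Alice send $(a,k)$ instead of $a$ can only weakly improve the achievable $\beta$ at the same type-1 budget: Bob can always ignore the extra index $k$ and reproduce the original protocol. This is a one-line monotonicity in the amount of classical side information Bob receives, not a delicate bookkeeping step; no ``uniform monotonicity across all $\alpha_a$'' is required. With that observation in hand, the rest of your converse goes through directly.
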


\begin{proofsof}{(\ref{2-5-7}) and (\ref{2-5-9})}
Since
\begin{align}
D(\Psi \| \rho_{mix})&= \log d_A d_B- H_1(\Psi),\quad
V(\Psi \| \rho_{mix})=V(\Psi),\\
\psi(s|\rho_{mix}\|\Psi )&= (1-s)(H_{s}(\Psi) -\log d_A d_B ),
\end{align}
by applying (\ref{2-4-4})
to the commutative states 
$\rho=\rho_{mix}$ and $\sigma= \sigma_{\Psi}$,
Proposition \ref{sec one-way lemma} yields
(\ref{2-5-7}).
Similarly, 
applying (\ref{2-4-3}),
Proposition \ref{sec one-way lemma} reproduces the existing result
(\ref{2-5-9}).
Therefore, we obtain the results for the one-way LOCC case.
\end{proofsof}

\begin{proofof}{$\beta_{n,\rightarrow}\left(\epsilon|\Psi \|\rho_{mix}\right) =\max\{0,1- \bar{d}^n \epsilon\}$}
For the two hypotheses $\sigma_{\Psi}$ and $ \rho_{mix}$,
the optimal test $T$ has the support in the $n$-tensor product space of the subspace spanned by $\{|ii\rangle \}_{i=1}^d$
when $\epsilon \le \frac{1}{\bar{d}^n}$. 
In this case, when 
$\Tr (I^n-T_n) \rho_{mix}^{\otimes n}= \epsilon$,
we have 
$\Tr (I^n-T_n) \sigma_{\Psi}^{\otimes n}= \bar{d}^n \epsilon$.
So, we obtain $\beta_{n,\rightarrow}\left(\epsilon|\Psi \|\rho_{mix}\right) =1- \bar{d}^n \epsilon$ 
\end{proofof}

\section{Hypothesis testing under separable POVM}\Label{sec separable}
\subsection{Uniform case: Proof of Theorem \ref{thm1-B}}
First, we consider the most simple case when 
the Schmidt coefficient is uniform, i.e., $\sqrt{\frac{1}{d^n}}$
because $d=\min (d_A,d_B)$.
Then, for any separable POVM $\{T_n ,I^n-T_n\}$, we have\cite{HMMOV06}
\begin{align}
\Tr T_n |\Psi\rangle \langle \Psi|
\le \frac{1}{d^n}\Tr T_n
= d^n \Tr T_n \rho_{mix}.
\end{align}
Hence, when the first kind of error probability is restricted to
$\Tr T_n |\Psi\rangle \langle \Psi|=\epsilon$,
the second kind of error probability
is evaluated as $\Tr (I^n-T_n) \rho_{mix} \ge 1-\bar{d} \epsilon$.
Hence, we have
\begin{align}
\beta_{n,sep}\left(\epsilon|\Psi \|\rho_{mix}\right) 
\ge \max\{0,1- \bar{d} \epsilon\}
\end{align}
Since this lower bound can be attained by one-way LOCC, as mentioned in Section \ref{sec one-way},
we obtain \eqref{2-5-8B}.

\subsection{Hypothesis testing with a composite hypothesis: Proof of Theorem \ref{thm1-B2}}
In this subsection, 
in order to consider hypothesis testing under separable POVM
for a pure state with the Schmidt decomposition 
$\sum_{i=1}^d \sqrt{p_i}|i\rangle\otimes |i\rangle$,
we consider a pure state $|\varphi\rangle\stackrel{\rm def}{=}\sum_{i=1}^d \sqrt{p_i}|i\rangle\in \mathbb{C}^d$ and
a specific composite hypothesis testing on $(\mathbb{C}^d)^{\otimes n}$
by employing the results in \cite{OH10}. 
Here, we assume that $p_1\ge p_2 \ge \ldots \ge p_d$.
\subsubsection{Single-shot setting}\Label{subsub1}
Although our problem is based on $n$-fold setting,
it is quite hard to find the relation between our problem and the results in  \cite{OH10}.
To reduce the difficulty, we firstly discuss this relation 
with the single-shot setting.
That is, in this subsubsection, we consider this specific composite hypothesis testing with the single-shot setting.
Here, we assume that the distribution $p=(p_i)$ is not uniform
due to the assumption of Theorem \ref{thm1}.
The following type of composite hypothesis testing
plays a key role in our analysis of our hypothesis testing in the bipartite system.
The null hypothesis is given as the pure state $|\varphi\rangle$ 
in the system $\mathbb{C}^d$.
To give the alternative hypothesis, we introduce a notation.
In the quantum system $\mathbb{C}^d$,
the basis is written as $\ket{j}$ by using $j \in \{1, \ldots, d\}$.
Hence, the quantum system $\mathbb{C}^d$ is spanned by
$\{\ket{j} \}_{j \in {\cal D}}$,
where ${\cal D}\stackrel{\rm def}{=}\{1, \ldots, d\}$.
Then, the alternative hypothesis ${\cal S}_{0}$ is the set of states
$\left \{ \ket{\phi_{L}} \right \}_{L \in \mathbb{Z}^{d}_2}$,
where $\ket{\phi_{L}} \in \mathbb{C}^{d}$ is defined as 
\begin{equation}\Label{eq def phi l nB}
 \ket{\phi_{L}}\stackrel{\rm def}{=}
\sqrt{\frac{1}{d}} \sum _{j \in {\cal D}}
  (-1)^{L_j} \ket{j}, \quad L \in \mathbb{Z}_2^{d},
\end{equation}
where $L_j \in \mathbb{Z}_2$ is the $j$th entry of $L\in \mathbb{Z}_2^{d}$.
That is, an element of the alternative hypothesis is characterized by
an element of $ \mathbb{Z}_2^{d}$.
Hence, the cardinality of the alternative hypothesis is $2^{d}$. 

For a two-valued POVM $\{ S, I-S \}$ on
$\mathbb{C}^{d}$, 
the type-1 error $\alpha(S)$ and type-2 error $\beta(S)$ are defined as 
\begin{align}
\alpha(S) &\stackrel{\rm def}{=} 
\Tr S \ket{\varphi}\bra{\varphi}
\Label{eq def a nB}  \\
\beta(S) &\stackrel{\rm def}{=} 
\max_{\rho \in {\cal S}_{0}} \Tr (I_{d}-S) \rho,
\Label{eq def b nB}
\end{align}
where $I_{d}$ is an identity operator on $\mathbb{C}^{d}$.
The optimal type-2 error under the restriction on the condition that the
type-1 error is no more than $\alpha \ge 0$ can be written as 
\begin{equation}\Label{eq def gamma n aB}
 \beta ( \alpha |\varphi ) 
  \stackrel{\rm def}{=} \min_{0 \le S \le I_{d }} \{ \beta(S) |
  \alpha(S) \le \alpha\}.
\end{equation}
Similarly, we define $\alpha ( \beta |\varphi )$ as
\begin{equation}
 \alpha ( \beta |\varphi ) 
  \stackrel{\rm def}{=} \min_{0 \le S \le I_{d }} \{ \alpha(S) |
  \beta(S) \le \beta\}.\Label{eq sep x epsilon = max lastB}
\end{equation}
In the rest of this subsection, 
we often abbreviate $ \beta( \alpha |\varphi)$
as $ \beta \left( \alpha \right )$.

Now, we define the subset $S(R)\stackrel{\rm def}{=}
\{ j \in {\cal D}
| \log p_{j} \ge  R\} $ of ${\cal D}$.
We also employ the following notations:
\begin{align}
P_{s}(R) \stackrel{\rm def}{=} & \sum_{j \in S(R)} (p_{j})^s 
\hbox{ for }s=0,1/2,1,
\\
\epsilon(R)\stackrel{\rm def}{=} &
\Big(\frac{P_{1/2}(R)^2}{d P_{1}(R)}\Big)^{\frac{1}{2}}.
\end{align}
When 
\begin{align}
1\ge \frac{P_{1/2}(\hat{R})^2}{P_{0}(R) P_{1}(\hat{R})},
\Label{C2-1}
\end{align}
we define
\begin{align}
&a(R,\hat{R}) \nonumber \\
\stackrel{\rm def}{=}&1-
P_{1}({R})
\Biggl(
\frac{P_{1/2}(R)P_{1/2}(\hat{R})}{P_{1}(\hat{R})^{\frac{1}{2}}P_{1}(R)^{\frac{1}{2}} P_{0}(R)}
+
\Bigl(
1-\frac{P_{1/2}(\hat{R})^2}{P_{1}(\hat{R})P_{0}(R)}
\Bigr)^{\frac{1}{2}}
\Bigl(
1-\frac{P_{1/2}(R)^2}{P_{1}(R) P_{0}(R)}
\Bigr)^{\frac{1}{2}}
\Biggr)^2
 \Label{11-05-7B}.
\end{align}
Then, we have the following lemma.

\begin{lemma}\Label{Th1B-1}
When the inequality $\epsilon(R) \ge \epsilon(\hat{R})$ holds,
the condition \eqref{C2-1} holds.
\hfill $\square$\end{lemma}
\begin{proof}
Since Schwarz inequality implies that
$P_{0}(R) \ge \frac{P_{1/2}(R)^2}{P_{1}(R)}$, which implies 
the condition \eqref{C2-1}.
\end{proof}

Lemma \ref{L-2-18B} in Appendix \ref{A4} yields the following lemma.
\begin{lemma}\Label{Th1B-2}
Any two distinct elements $R$ and  $\hat{R}$ of $\{ \log p_i\}_i$ with $R>\hat{R}$
satisfy the inequality $\epsilon(\hat{R}) > \epsilon({R})$.
\hfill $\square$\end{lemma}

So, we have the following lemma.
\begin{lemma}\Label{Th1B}
We fix $\hat{R} \in \{\log p_{i}\}_{i}$.
Then, we have the following items.
\begin{description}
\item[(1)]
When a real number $R(\in \{\log p_{i}\}_{i}) <\hat{R}$
satisfies 
\begin{align}
\frac{P_{0}(R)}{P_{1/2}(R)}
e^{\frac{ R}{2}} & \ge
1-
\frac{\Bigl(\frac{P_{1}(R)P_{0}(R)}{P_{1/2}(R)^2}-1\Bigr)^{\frac{1}{2}}
}
{\Bigl(\frac{P_{1}(\hat{R})P_{0}(R)}{P_{1/2}(\hat{R})^2}
-1\Bigr)^{\frac{1}{2}}
},
\Label{11-4-1B} 
\end{align}
we have
\begin{align}
\alpha\left(
\left.
\frac{P_{1/2}(\hat{R})^2}{d P_{1}(\hat{R})}
\right|\varphi\right)
\le
a(R,\hat{R}) 
\Label{4-6-1bB} .
\end{align}
\item[(2)]
We assume that there exists an element $R_\alpha$ in $\{\log p_{i}\}_{i}$ 
satisfying the inequality \eqref{11-4-1B} and $ R_\alpha< \hat{R}$.
We denote all of distinct elements of $\{\log p_{i} \}_{i}$ 
by $\tilde{R}_k < \tilde{R}_{k-1}< \ldots < \tilde{R}_2 < \tilde{R}_{1}$.
We also assume that an element $R_\beta$ in $\{\log p_{i}\}_{i}$ satisfying the following condition;
Any element $\tilde{R}_j (\le R_{\beta})\in \{\log p_{i}\}_{i}$ satisfies
\begin{align}
\frac{P_{0}(\tilde{R}_{j+1})}{P_{1/2}(\tilde{R}_{j})}
e^{\frac{ \tilde{R}_{j}}{2}} & <
1-
\frac{\Bigl(\frac{P_{1}(\tilde{R}_{j+1})
P_{0}(\tilde{R}_{j+1})}{P_{1/2}(\tilde{R}_{j})^2}-1\Bigr)^{\frac{1}{2}}
}
{\Bigl(\frac{P_{1}(\hat{R})P_{0}(\tilde{R}_{j})}{P_{1/2}(\hat{R})^2}
-1\Bigr)^{\frac{1}{2}}
}.
\Label{11-4-1B-P} 
\end{align}
Then, the real numbers $\hat{R}$ and $R_\beta$ satisfy the inequality
\begin{align}
\alpha\left(
\left.
\frac{P_{1/2}(\hat{R})^2}{d P_{1}(\hat{R})}
\right|\varphi\right)
\ge
a(R_\beta,\hat{R}) 
\Label{4-6-1bB-P} .
\end{align}
\hfill $\square$
\end{description}
\end{lemma}

\begin{proofof}{Lemma \ref{Th1B-2}}
Here, we employ notations summarized in Appendix \ref{A2}.
That is, we define the real vectors $u_l$ and $v_l$ on $\mathbb{R}^l$ as 
$u_l \stackrel{\rm def}{=}\left( \sqrt{p_1}, \cdots , \sqrt{p_l} \right)$ 
and $v_l \stackrel{\rm def}{=} \left( 1,  \cdots , 1 \right) /\sqrt{d}$
for an integer $l$ satisfying $1 \le l \le d$.
We consider only the case when 
\begin{align}
l:= | S(R)| =P_{0}(R),\quad 
\hat{l}:=P_{0}(\hat{R}), \quad
\epsilon^2:=\epsilon(\hat{R})^2
=(\frac{u_{\hat{l}}\cdot v_{\hat{l}}}{\|u_{\hat{l}}\|})^2
=\frac{P_{1/2}(\hat{R})^2}{d P_{1}(\hat{R})}.
\Label{2-18-1}
\end{align}
Since $R$ and $\hat{R}$
are two distinct elements of $\{ \log p_i\}_i$ and $R>\hat{R}$,
we have the inequality $l < \hat{l}$.
Due to the above final relation, 
Lemma \ref{L-2-18B} in Appendix \ref{A4} directly implies Lemma \ref{Th1B-2}.
\end{proofof}

\begin{proofof}{Lemma \ref{Th1B}}
We prove Lemma \ref{Th1B} by using the notations in the above proof of Lemma \ref{Th1B-2}.
For this purpose, we employ results in \cite{OH10}, which are summarized in Appendix \ref{A2}.
Due to \eqref{2-18-1}, we have
\begin{align}
u_l \cdot v_l= \frac{P_{1/2}(R)}{d^{1/2}},\quad
\|v_l \|^2= \frac{P_{0}(R)}{d},\quad
\|u_l \|^2= P_{1}(R).
\Label{2-18-2}
\end{align}
Since $R< \hat{R}$, we have $l \ge \hat{l}$.

\PF{Item (1)}
Firstly, we show Item (1) by using the properties of
${x}^*(u_l,v_l,\epsilon)$ given in Proposition \ref{Lemma 9}.
That is, we show \eqref{4-6-1bB} by assuming 
\eqref{11-4-1B}.
Since $R<\hat{R}$, Lemma \ref{Th1B-2} guarantees that
$\frac{u_l \cdot v_l}{\|u_l \| }=
\epsilon(R) > \epsilon(\hat{R})=\epsilon$, which implies 
\eqref{C2-1} by using Lemma \ref{Th1B-1}.
Hence, the vector ${x}^*(u_l,v_l,\epsilon)$ 
defined in \eqref{eq def x*} of Proposition \ref{Lemma 9} in Appendix \ref{A2} is written as
\begin{align}
& {x}^*(u_l,v_l,\epsilon) \nonumber \\
=&
\frac{1}{\sqrt{
\|u_l\|^2 \|v_l\|^2-  (u_l \cdot v_l)^2}}
\bigg(
\sqrt{\|v_l\|^2 -\epsilon^2}u_l \nonumber \\
& \hspace{23ex}
+
\frac{
\epsilon \sqrt{\|u_l\|^2\|v_l\|^2- (u_l\cdot v_l)^2}
-u_l\cdot v_l \sqrt{\|v_l\|^2-\epsilon^2}}
{\|v_l\|^2}v_l \bigg).
\Label{19-eq}
\end{align}
Due to Lemma \ref{L-2-18C} in Appendix \ref{A4},
all entries of ${x}^*(u_l,v_l,\epsilon)$
are non-negative if and only if
\begin{align}
\sqrt{p_l}
\frac{d^{1/2} \|v_l\|^2}{u_l \cdot v_l}
\ge
\left(1-
\frac{\sqrt{\frac{\|u_l\|^2\|v_l\|^2}{(u_l \cdot v_l)^2}-1}}
{\sqrt{\frac{\|v_l\|^2}{\epsilon^2}-1}}
\right) ,
\end{align}
which is equivalent to \eqref{11-4-1B}, due to the relations
\eqref{2-18-1} and \eqref{2-18-2}.
So, all entries of ${x}^*(u_l,v_l,\epsilon)$ are non-negative.
Thus, 
for any $L \in \mathbb{Z}_2^d$,
we find that 
\begin{align}
\langle x^*(u_l,v_l,\epsilon)|\phi_L\rangle^2
\stackrel{(a)}{\le }
\Big\langle x^*(u_l,v_l,\epsilon)\Big| \sum_{i=1}^d\frac{1}{\sqrt{d}}\Big|i\Big\rangle^2
\stackrel{(b)}{=}
\langle x^*(u_l,v_l,\epsilon)| v_l \rangle^2
\stackrel{(c)}{\le }
 \epsilon^2, \Label{19-eq2}
\end{align}
where 
$(a)$, $(b)$, and $(c)$ follow from 
the non-negativity of all entries of $x^*(u_l,v_l,\epsilon)$,
the equations \eqref{19-eq}, and 
the property of ${x}^*(u_l,v_l,\epsilon)$ given in Proposition \ref{Lemma 9}, respectively.
Thus, since $|\varphi\rangle= u_d$,
using \eqref{19-eq} and \eqref{19-eq2}, we have
\begin{align}
&1- \alpha\left(
\left.
\epsilon^2
\right|\varphi\right)
\stackrel{(a)}{\ge }
\langle \varphi| x^*(u_l,v_l,\epsilon) \rangle^2
\nonumber\\
\stackrel{(b)}{=}&
\frac{(
u_l \cdot v_l \epsilon+\sqrt{
(\|v_l\|^2 -\epsilon^2)
(\|v_l\|^2 \|u_l\|^2- (u_l \cdot v_l)^2)})^2
}{\|v_l\|^4} \nonumber \\
\stackrel{(c)}{=}&
\frac{
\left(
\frac{P_{1/2}(R)}{d^{1/2}}
\frac{P_{1/2}(\hat{R})}{d^{1/2}P_{1}(\hat{R})^{1/2}}
+
\sqrt{
(\frac{P_{0}(R)}{d} -\frac{P_{1/2}(\hat{R})^2}{d P_{1}(\hat{R})})
(\frac{P_{0}(R)P_{1}(R)}{d} -\frac{P_{1/2}(R)^2}{d})
}
\right)^2
}{\frac{P_{0}(R)^2}{d^{2}}} \nonumber \\
=& 
P_{1}({R})
\Biggl(
\frac{P_{1/2}(R)P_{1/2}(\hat{R})}{P_{1}(\hat{R})^{\frac{1}{2}}P_{1}(R)^{\frac{1}{2}} P_{0}(R)}
+
\Bigl(
1-\frac{P_{1/2}(\hat{R})^2}{P_{1}(\hat{R})P_{0}(R)}
\Bigr)^{\frac{1}{2}}
\Bigl(
1-\frac{P_{1/2}(R)^2}{P_{1}(R) P_{0}(R)}
\Bigr)^{\frac{1}{2}}
\Biggr)^2,
\Label{11-07-26}
\end{align}
where
$(a)$, $(b)$, and $(c)$ follow from 
\eqref{19-eq2}, \eqref{19-eq} with $|\varphi\rangle= u_d$, and \eqref{2-18-2}, respectively.
So, we obtain the inequality \eqref{4-6-1bB}.

\PF{Item (2)}
\noindent {\it Step 1:)}
Next, we proceed to the proof of Item (2) by combining Propositions \ref{Lemma 9} and \ref{Theorem 4}.
That is, we will show \eqref{4-6-1bB-P} by assuming \eqref{11-4-1B-P}.
Now, we outline the derivation of \eqref{4-6-1bB-P}. 
For the preparation, we choose 
$l_\beta:= | S(R_\beta)| =P_{0}(R_\beta)$,
$\hat{l}:= | S(\hat{R})| =P_{0}(\hat{R})$,
$l_\alpha:= | S(R_\alpha)| =P_{0}(R_\alpha)$, $\epsilon:= \epsilon(\hat{R})
=\frac{u_{\hat{l}}\cdot v_{\hat{l}}}{\|u_{\hat{l}}\|}$,
and $\eta:=\eta_\epsilon$,
where $\eta_\epsilon$ is defined in Appendix \ref{A2}.
In Step 2:), we show the inequality $l_\beta > \eta $. 
In Step 3:), we show 
\begin{align}
1- \alpha\left(
\left.
\epsilon^2
\right|\varphi\right)
=
\frac{\Big(
u_\eta \cdot v_\eta \epsilon+\sqrt{
(\|v_\eta\|^2 -\epsilon^2)
(\|v_\eta\|^2 \|u_\eta\|^2- (u_\eta \cdot v_\eta)^2)}\Big)^2
}{\|v_\eta\|^4} \Label{1-5-10C},
\end{align}
and
\begin{align}
& \max
\{ \langle u_{l_\beta} | \phi\rangle ^2
| |\phi\rangle \in \mathbb{R}^{l_\beta}, \|\phi\|^2=1,
\langle v_{l_\beta} | \phi\rangle \le \epsilon\} \nonumber \\
=&
\frac{\Big(
u_{l_\beta} \cdot v_{l_\beta} \epsilon+\sqrt{
(\|v_{l_\beta}\|^2 -\epsilon^2)
(\|v_{l_\beta}\|^2 \|u_{l_\beta}\|^2-  
(u_{l_\beta} \cdot v_{l_\beta})^2)}\Big)^2
}{\|v_{l_\beta}\|^4}  \Label{1-5-10}.
\end{align}
In Step 4:), combining these relations, we show the inequality \eqref{4-6-1bB-P}.

\noindent {\it Step 2:)}
Firstly, we show that the condition A1), A2), nor A3) 
in Appendix \ref{A2} does not hold for any integer $l$ satisfying $l \ge l_\beta$. 
Due to Lemma \ref{L-2-18}, A2) does not hold because $R_\beta < \log p_1$.
Since $l_\beta > \hat{l} $, 
Lemma \ref{L-2-18B} guarantees that A1) does not hold for any integer $l$ satisfying $l \ge l_\beta$. 

Now, to show the inequality $l_\beta > \eta $,
we show that A3) does not hold for any integer $l$ satisfying $l \ge l_\beta$.
We choose $\tilde{l}_{j}:=| S(\tilde{R}_{j})| =P_{0}(\tilde{R}_{j})$.
For a given integer $l \ge l_\beta$, we choose $j$ such that $\tilde{l}_{j+1}> l \ge \tilde{l}_{j}$, which implies the relations 
\begin{align}
p_l \le p_{\tilde{l}_{j}} ,~ 
\|v_{\tilde{l}_{j}}\|
\le\|v_l\| \le \|v_{\tilde{l}_{j+1}}\|,~
\|u_l\| \le \|u_{\tilde{l}_{j+1}}\|,~
u_{\tilde{l}_{j}} \cdot v_{\tilde{l}_{j}}.
\le u_l \cdot v_l \le 
u_{\tilde{l}_{j+1}} \cdot v_{\tilde{l}_{j+1}}.
\Label{8-29-1}
\end{align}
Then, we have
\begin{align}
\sqrt{p_l}
\frac{d^{1/2} \|v_l\|^2}{u_l \cdot v_l}
\stackrel{(a)}{\le}
\sqrt{p_{\tilde{l}_{j}}}
\frac{d^{1/2} \|v_{\tilde{l}_{j+1}}\|^2}{u_{\tilde{l}_{j}} \cdot v_{\tilde{l}_{j}}}
\stackrel{(b)}{<}
\left(1-
\frac{\sqrt{\frac{\|u_{\tilde{l}_{j+1}}\|^2\|v_{\tilde{l}_{j+1}}\|^2}{(u_{\tilde{l}_{j}} \cdot v_{\tilde{l}_{j}})^2}-1}}
{\sqrt{\frac{\|v_{\tilde{l}_{j}}\|^2}{\epsilon^2}-1}}
\right)
\stackrel{(c)}{\le}
\left(1-
\frac{\sqrt{\frac{\|u_l\|^2\|v_l\|^2}{(u_l \cdot v_l)^2}-1}}
{\sqrt{\frac{\|v_l\|^2}{\epsilon^2}-1}}
\right) ,
\end{align}
where 
$(b)$ follows from the condition \eqref{11-4-1B-P},
and 
$(a)$ and $(c)$ follow from \eqref{8-29-1}.
This inequality shows that the condition \eqref{L-2-18C-Eq} in Lemma \ref{L-2-18C} does not hold. 
Since Lemma \ref{L-2-18B} guarantees that $\frac{u_l \cdot v_l}{\| u_l \|}$
is strictly monotone increasing for $l$,
we have
$\frac{u_l \cdot v_l}{\| u_l \|} 
\ge \frac{u_{l_\beta} \cdot v_{l_\beta}}{\| u_{l_\beta} \|} 
> \frac{u_{\hat{l}} \cdot v_{\hat{l}}}{\| u_{\hat{l}} \|} = \epsilon$
because $l_\beta> \hat{l}$.
By using these two statements,
Lemma \ref{L-2-18C} guarantees that
the ${l}$-th entry of $x^*(u_{{l}},v_{{l}},\epsilon)$ is negative for the integer $l$.
So, A3) does not hold for any integer $l$ satisfying $l \ge l_\beta$.
Thus, the assumption of Item (2) implies that neither A1), A2), nor A3) does not hold for any integer $l$ satisfying $l \ge l_\beta$.
Hence, we have the desired inequality $l_\beta > \eta $.

\noindent {\it Step 3:)}
Since $R_\alpha<\hat{R}$, Lemma \ref{Th1B-2} implies that
$\frac{u_{l_\alpha} \cdot v_{l_\alpha}}{\|u_{l_\alpha}\|}
= \epsilon (R_\alpha) >\epsilon (\hat{R})=\epsilon $.
Item (1) guarantees that $l_\alpha$ satisfies Condition A3).
So, $\eta\ge l_\alpha$.
Thus, Lemma \ref{L-2-18B} yields that 
$\frac{u_{\eta} \cdot v_{\eta}}{\|u_{\eta}\|}
\ge \frac{u_{l_\alpha} \cdot v_{l_\alpha}}{\|u_{l_\alpha}\|}
>\epsilon$.
Hence, B1) does not hold. 
Since $R_\alpha<\hat{R}$ implies $R_\alpha<\log p_1$, 
we have $\log p_\eta \le R_\alpha<\log p_1$.
So, B2) does not hold due to Lemma \ref{L-2-18}.
Thus, B3) holds.
So, Proposition \ref{Theorem 4} guarantees \eqref{1-5-10C},
and the maximum 
$1- \alpha\left( \left. \epsilon^2 \right|\varphi\right)$ 
in \eqref{1-5-10C} is attained by
the vector $x^*(u_\eta,v_\eta,\epsilon)$. 

Then, we apply Proposition \ref{Lemma 9} to the case with 
$y=u_{l_\beta}$ and $z=v_{l_\beta}$.
Since the condition D3), i.e., 
the relation $y/\|y\| \neq z/\|z\| $ and $y\cdot z > \epsilon \|y\|$ holds,
we obtain \eqref{1-5-10}.

\noindent {\it Step 4:)}
We show the inequality \eqref{4-6-1bB-P}.
Since the inequality $l_\beta > \eta $ implies the equation
$\langle v_{l_\beta} | x^*(u_\eta,v_\eta,\epsilon)\rangle
=\langle v_{\eta} | x^*(u_\eta,v_\eta,\epsilon)\rangle$,
we find that 
the vector $x^*(u_\eta,v_\eta,\epsilon)$ also satisfies the condition for 
the real vector $|\phi\rangle$ in the maximum in the LHS of \eqref{1-5-10}.
So, we have
\begin{align}
& \max
\{ \langle u_{l_\beta} | \phi\rangle ^2
| |\phi\rangle \in \mathbb{R}^{l_\beta}, \|\phi\|^2=1,
\langle v_{l_\beta} | \phi\rangle \le \epsilon\} \nonumber \\
\ge &
\frac{\Big(
u_\eta \cdot v_\eta \epsilon+\sqrt{
(\|v_\eta\|^2 -\epsilon^2)
(\|v_\eta\|^2 \|u_\eta\|^2- (u_\eta \cdot v_\eta)^2)}\Big)^2
}{\|v_\eta\|^4} \Label{1-5-10B}.
\end{align}
Combining \eqref{1-5-10C}, \eqref{1-5-10}, and \eqref{1-5-10B},
we have
\begin{align}
& 1- \alpha \left(
\left.
\epsilon^2
\right|\varphi\right) \nonumber \\
\le &
\frac{\Big(
u_{l_\beta} \cdot v_{l_\beta} \epsilon+\sqrt{
(\|v_{l_\beta}\|^2 -\epsilon^2)
(\|v_{l_\beta}\|^2 \|u_{l_\beta}\|^2- 
(u_{l_\beta} \cdot v_{l_\beta})^2)}\Big)^2
}{\|v_{l_\beta}\|^4} .
\end{align}
Hence, combining the same discussion as \eqref{11-07-26}, we obtain 
the inequality \eqref{4-6-1bB-P}.
\end{proofof}

Note that
it is quite difficult to derive the tight evaluation of 
$\alpha\left(
\left.
\frac{P_{1/2}(\hat{R})^2}{d P_{1}(\hat{R})}
\right|\varphi\right)$
because our choice of $l$ is limited to $l= | S(R)| =P_{0}(R)$.
We obtain lower and upper bounds as \eqref{4-6-1bB}.

Using $P_{1}^c(R)\stackrel{\rm def}{=} 1-P_{1}(R)$, we have the following lemma.
\begin{lemma}\Label{11-06-6B}
When $R<\hat{R}$,
the number $a(R,\hat{R}) $ is bounded as follows.
\begin{align}
P_{1}^c(R)
\le
a(R,\hat{R}) 
\le
P_{1}^c(\hat{R}).
\end{align}
\hfill $\square$\end{lemma}

\begin{proof}
To show Lemma \ref{11-06-6B}, we will show the following.
\begin{align}
P_{1}(\hat{R})
\le
1- a(R,\hat{R}) 
\le
P_{1}(R) \Label{11-07-22b}.
\end{align}

First, we show the second inequality of \eqref{11-07-22b}.
Since
\begin{align*}
&\frac{P_{1/2}(\hat{R})^2}{P_{0}(R)P_{1}(\hat{R})}
+
\frac{P_{1/2}(R)^2}{P_{0}(R)P_{1}(R)}
\ge
2 
\Big(\frac{P_{1/2}(\hat{R})^2}{P_{0}(R)P_{1}(\hat{R})}
\cdot
\frac{P_{1/2}(R)^2}{P_{0}(R)P_{1}(R)}\Big)^{\frac{1}{2}} \\
=&
2 
\frac{P_{1/2}(R)P_{1/2}(\hat{R})}{P_{0}(R)
P_{1}(R)^{\frac{1}{2}} P_{1}(\hat{R})^{\frac{1}{2}}},
\end{align*}
we have
\begin{align*}
& \Bigl(
1-
\frac{P_{1/2}(\hat{R})^2}{P_{0}(R)P_{1}(\hat{R})}
\Bigr)
\Bigl(
1-
\frac{P_{1/2}(R)^2}{P_{0}(R)P_{1}(R)}
\Bigr) \\
\le &
\Bigl(
1-
\frac{P_{1/2}(R)P_{1/2}(\hat{R})}{P_{0}(R)
P_{1}(R)^{\frac{1}{2}} P_{1}(\hat{R})^{\frac{1}{2}}}
\Bigr)^2,
\end{align*}
i.e.,
\begin{align*}
& \Bigl(
1-
\frac{P_{1/2}(\hat{R})^2}{P_{0}(R)P_{1}(\hat{R})}
\Bigr)^{\frac{1}{2}}
\Bigl(
1-
\frac{P_{1/2}(R)^2}{P_{0}(R)P_{1}(R)}
\Bigr)^{\frac{1}{2}} \\
\le &
1-
\frac{P_{1/2}(R)P_{1/2}(\hat{R})}{P_{0}(R)
P_{1}(R)^{\frac{1}{2}} P_{1}(\hat{R})^{\frac{1}{2}}}.
\end{align*}
Therefore,
\begin{align*}
&
P_{1}(R)
\Biggl(
\frac{P_{1/2}(R)P_{1/2}(\hat{R})}{P_{0}(R)
P_{1}(R)^{\frac{1}{2}} P_{1}(\hat{R})^{\frac{1}{2}}} \\
& \hspace{18ex} 
 +
\Bigl(
1-\frac{P_{1/2}(\hat{R})^2}{P_{0}(R)P_{1}(\hat{R})}
\Bigr)^{\frac{1}{2}}
\Bigl(
1-\frac{P_{1/2}(R)^2}{P_{0}(R)P_{1}(R)}
\Bigr)^{\frac{1}{2}}
\Biggr)^2 \\
\le &
P_{1}(R)
\Biggl(
\frac{P_{1/2}(R)P_{1/2}(\hat{R})}{P_{0}(R)
P_{1}(R)^{\frac{1}{2}} P_{1}(\hat{R})^{\frac{1}{2}}} 
+1-
\frac{P_{1/2}(R)P_{1/2}(\hat{R})}{P_{0}(R)
P_{1}(R)^{\frac{1}{2}} P_{1}(\hat{R})^{\frac{1}{2}}}
\Biggr)^2 \\
=&
P_{1}(R).
\end{align*}
Then, we obtain the second inequality of \eqref{11-07-22b}.

To show the first inequality of \eqref{11-07-22b},
we employ the notation given in Appendix \ref{A2},
and choose the integers $l:= | S(R)| =P_{0}(R)$ and $\hat{l}:= | S(\hat{R})| =P_{0}(\hat{R})$
in the same way as the proof of Lemma \ref{Th1B}. 
So, the condition $R<\hat{R}$ implies that $\hat{l} \le l $.
Hence, we have $\|u_{\hat{l}}\|^2=u_l\cdot u_{\hat{l}} $.
We apply Proposition \ref{Lemma 9} to the case when 
$y=u_l$, $z=v_l$, and $\epsilon= \frac{u_{\hat{l}}\cdot v_{\hat{l}}}{\|u_{\hat{l}}\|}$.
Then, we find that $x=\frac{u_{\hat{l}}}{\|u_{\hat{l}}\|}$ satisfies the condition in $M(u_l,v_l,\epsilon)$ given in \eqref{H5}.

Now, we show that
\begin{align}
\|u_{\hat{l}}\|^2
\le 
\left(\frac{
u_l \cdot v_l \epsilon+\sqrt{
(\|v_l\|^2 -\epsilon^2)
(\|v_l\|^2 \|u_l\|^2- (u_l \cdot v_l)^2)}
}{\|v_l\|^2}\right)^2
\Label{11-07-20aT}.
\end{align}
When $l=\hat{l}$, the RHS of \eqref{11-07-20aT} equals 
$\|u_{\hat{l}}\|^2$.
So, we show \eqref{11-07-20aT} when $l>\hat{l}$ as follows.
In this case, Lemma \ref{L-2-18B} implies that 
$\frac{u_l \cdot v_l}{\|u_l\|}
> \frac{u_{\hat{l}} \cdot v_{\hat{l}}}{\|u_{\hat{l}}\|}
=\epsilon$.
Proposition \ref{Lemma 9} with Case D3 guarantees that
\begin{align}
\|u_{\hat{l}}\|
=u_l
\cdot \frac{u_{\hat{l}}}{\|u_{\hat{l}}\|}
\le 
\frac{
u_l \cdot v_l \epsilon+\sqrt{
(\|v_l\|^2 -\epsilon^2)
(\|v_l\|^2 \|u_l\|^2- (u_l \cdot v_l)^2)}
}{\|v_l\|^2},
\end{align}
which implies \eqref{11-07-20aT}.

Therefore,
\begin{align}
& P_{1}(\hat{R})=
\|u_{\hat{l}}\|^2
\le 
\left(\frac{
u_l \cdot v_l \epsilon+\sqrt{
(\|v_l\|^2 -\epsilon^2)
(\|v_l\|^2 \|u_l\|^2- (u_l \cdot v_l)^2)}
}{\|v_l\|^2}\right)^2
\nonumber \\
\stackrel{(a)}{=} &
P_{1}({R})
\Biggl(
\frac{P_{1/2}(R)P_{1/2}(\hat{R})}{P_{1}(\hat{R})^{\frac{1}{2}}P_{1}(R)^{\frac{1}{2}} P_{0}(R)}
+
\Bigl(
1-\frac{P_{1/2}(\hat{R})^2}{P_{1}(\hat{R})P_{0}(R)}
\Bigr)^{\frac{1}{2}}
\Bigl(
1-\frac{P_{1/2}(R)^2}{P_{1}(R) P_{0}(R)}
\Bigr)^{\frac{1}{2}}
\Biggr)^2
=
1- a(R,\hat{R}) ,
\Label{11-07-20a}
\end{align}
where $(a)$ follows from \eqref{11-07-26}.
Then, we obtain the first inequality of \eqref{11-07-22b}.
\end{proof}

\subsubsection{$n$-fold i.i.d. setting}
In this subsection, we rewrite the results in the previous subsection in 
$n$-fold i.i.d. setting.
In this setting,
The null hypothesis is given as the pure state $|\varphi^{\otimes n}\rangle$ in the $n$-tensor product system 
$(\mathbb{C}^d)^{\otimes n}$.
To give the alternative hypothesis, we introduce a notation.
In the quantum system $(\mathbb{C}^d)^{\otimes n}$,
the basis $\ket{i_1}\otimes \cdots \otimes \ket{i_n} $
is simplified to $\ket{J}$ by using $J \in \{1, \ldots, d\}^n$.
Hence, 
the quantum system $(\mathbb{C}^d)^{\otimes n}$ is spanned by
$\{\ket{J} \}_{J \in {\cal D}^n}=\{ 
\ket{i_1}\otimes \cdots \otimes \ket{i_n} \}_{i_1,\dots, i_n}$,
where ${\cal D}^n\stackrel{\rm def}{=}\{1, \ldots, d\}^n$.
Then, the alternative hypothesis ${\cal S}_{n,0}$
is the set of states
$\left \{ \ket{\phi_{L}^n} \right \}_{L \in \mathbb{Z}^{d^n}_2}$,
where $\ket{\phi_{L}^{n}} \in 
 \left(\mathbb{C}^{d}\right)^{\otimes n}$ is defined as 
\begin{equation}\Label{eq def phi l n}
 \ket{\phi_{L}^n}\stackrel{\rm def}{=}
\sqrt{\frac{1}{d^n}} \sum _{J \in {\cal D}^n}
  (-1)^{L_J} \ket{J}, \quad L \in \mathbb{Z}_2^{d^n},
\end{equation}
where $L_J \in \mathbb{Z}_2$ is the $J$th entry of $L\in \mathbb{Z}_2^{d^n}$.
That is, an element of the alternative hypothesis is characterized by
an element of $ \mathbb{Z}_2^{d^n}$.
Hence, the cardinality of the alternative hypothesis is $2^{d^n}$,
which is double exponential with respect to the number $n$.

For a two-valued POVM $\{ S_n, I_d^n-S_n \}$ on
$\left(\mathbb{C}^{d}\right)^{\otimes n}$, 
the type-1 error $\alpha_n(S_n)$ and type-2 error $\beta_n(S_n)$ are defined as 
\begin{align}
\alpha_n(S_n) &\stackrel{\rm def}{=} 
\Tr S_n \ket{\varphi}\bra{\varphi}^{\otimes n}
\Label{eq def a n}  \\
\beta_n(S_n) &\stackrel{\rm def}{=} 
\max_{\rho \in {\cal S}_{n,0}} \Tr (I_{d}^n-S_n) \rho,
\Label{eq def b n}
\end{align}
where $I_{d}^n$ is an identity operator on
$(\mathbb{C}^{d})^{\otimes n}$.
The optimal type-2 error under the restriction on the condition that the
type-1 error is no more than $\alpha \ge0$ can be written as 
\begin{equation}\Label{eq def gamma n a}
 \beta_n ( \alpha |\varphi ) 
  \stackrel{\rm def}{=} \min_{0 \le S_n \le I_{d }^n} \{ \beta_n(S_n) |
  \alpha_n(S_n) \le \alpha\}.
\end{equation}
Similarly, we define $\alpha_n ( \beta |\varphi )$ as
\begin{equation}
 \alpha_n ( \beta |\varphi ) 
  \stackrel{\rm def}{=} \min_{0 \le S_n \le I_{d }^n} \{ \alpha_n(S_n) |
  \beta_n(S_n) \le \beta\}.
\end{equation}
In the rest of this subsection, 
we often abbreviate $ \beta_n( \alpha |\varphi)$
as $ \beta_n \left( \alpha \right )$.

Now, we define the subset $S_n(R)\stackrel{\rm def}{=}
\{ J \in {\cal D}^n
| \log p^n_{J} \ge n R\} $ of ${\cal D}^n$,
where $p^n_{J}\stackrel{\rm def}{=} p_{i_1} \cdots p_{i_n} $ for $J=(i_1, \ldots, i_n) $.
We employ the following notations:
\begin{align*}
P_{n,s}(R) \stackrel{\rm def}{=} & \sum_{J \in S_n(R)} (p^n_{J})^s 
\hbox{ for }s=0,1/2,1, \\
\epsilon_n(R)\stackrel{\rm def}{=} &
\Big(\frac{P_{n,1/2}(R)^2}{d^n P_{n,1}(R)}\Big)^{\frac{1}{2}}.
\end{align*}
When
\begin{align}
1\ge \frac{P_{n,1/2}(\hat{R})^2}{P_{n,0}(R) P_{n,1}(\hat{R})},
\Label{C2-1A}
\end{align}
we define
\begin{align}
&a_n(R,\hat{R}) \nonumber \\
\stackrel{\rm def}{=}&1-
P_{n,1}({R})
\Biggl(
\frac{P_{n,1/2}(R)P_{n,1/2}(\hat{R})}{P_{n,1}(\hat{R})^{\frac{1}{2}}P_{n,1}(R)^{\frac{1}{2}} P_{n,0}(R)}
+
\Bigl(
1-\frac{P_{n,1/2}(\hat{R})^2}{P_{n,1}(\hat{R})P_{n,0}(R)}
\Bigr)^{\frac{1}{2}}
\Bigl(
1-\frac{P_{n,1/2}(R)^2}{P_{n,1}(R) P_{n,0}(R)}
\Bigr)^{\frac{1}{2}}
\Biggr)^2
\Label{11-05-7}.
\end{align}

Then, Lemmas \ref{Th1B} and \ref{11-06-6B}
are rewritten as follows.
\begin{lemma}\Label{Th1}
We fix $\hat{R} \in \{ \frac{1}{n}\log p_J^n\}_{J \in {\cal D}^n}$.
Then, we have the following items.
\begin{description}
\item[(1)]
When a real number $R(\in \{ \frac{1}{n}\log p_J^n\}_{J \in {\cal D}^n})< \hat{R}$
satisfies 
\begin{align}
\frac{P_{n,0}(R)}{P_{n,1/2}(R)}
e^{\frac{n R}{2}} 
+
\frac{\Bigl(\frac{P_{n,1}(R)P_{n,0}(R)}{P_{n,1/2}(R)^2}-1\Bigr)^{\frac{1}{2}}
}
{\Bigl(\frac{P_{n,1}(\hat{R})P_{n,0}(R)}{P_{n,1/2}(\hat{R})^2}
-1\Bigr)^{\frac{1}{2}}
}
\ge 1,
\Label{11-4-1} 
\end{align}
we have
\begin{align}
\alpha_n\left(
\left.
\frac{P_{n,1/2}(\hat{R})^2}{d^n P_{n,1}(\hat{R})}
\right|\varphi\right)
\le
a_n(R,\hat{R}) 
\Label{4-6-1b} .
\end{align}
\item[(2)]
We assume that there exists an element $R_\alpha$ in $\{ \frac{1}{n}\log p_J^n\}_{J \in {\cal D}^n}$
satisfying the inequality \eqref{11-4-1} and $ R_\alpha< \hat{R}$.
We denote all of distinct elements of $\{ \frac{1}{n}\log p_J^n\}_{J \in {\cal D}^n}$ 
by $\tilde{R}_k < \tilde{R}_{k-1}< \ldots < \tilde{R}_2 < \tilde{R}_{1}$.
We also assume that an element $R_\beta$ in $\{ \frac{1}{n}\log p_J^n\}_{J \in {\cal D}^n}$ satisfying the following condition;
Any element $\tilde{R}_j (\le R_{\beta})\in \{ \frac{1}{n}\log p_J^n\}_{J \in {\cal D}^n}$ satisfies
\begin{align}
\frac{P_{n,0}(\tilde{R}_{j+1})}{P_{n,1/2}(\tilde{R}_{j})}
e^{\frac{ \tilde{R}_{j}}{2}} +
\frac{\Bigl(\frac{P_{n,1}(\tilde{R}_{j+1})
P_{n,0}(\tilde{R}_{j+1})}{P_{n,1/2}(\tilde{R}_{j})^2}-1\Bigr)^{\frac{1}{2}}
}
{\Bigl(\frac{P_{n,1}(\hat{R})P_{n,0}(\tilde{R}_{j})}{P_{n,1/2}(\hat{R})^2}
-1\Bigr)^{\frac{1}{2}}
}
< 1.
\Label{11-4-1-P} 
\end{align}
Then, the real numbers $\hat{R}$ and $R_\beta$ satisfy the inequality
\begin{align}
\alpha_n\left(
\left.
\frac{P_{n,1/2}(\hat{R})^2}{d^n P_{n,1}(\hat{R})}
\right|\varphi\right)
\ge
a_n(R_\beta,\hat{R}) 
\Label{4-6-1b-P} .
\end{align}
\hfill $\square$
\end{description}
\end{lemma}

\begin{lemma}\Label{11-06-6}
When $R<\hat{R}$,
the number $a_n(R,\hat{R}) $ is evaluated as
\begin{align}
P_{n,1}^c(R)
\le
a_n(R,\hat{R}) 
\le
P_{n,1}^c(\hat{R}),
\end{align}
where
$P_{n,1}^c(R)\stackrel{\rm def}{=} 1-P_{n,1}(R)$.
\hfill $\square$\end{lemma}

\subsubsection{Constant constraint for type-1 error}\Label{s412}
Under a constant constraint for type-1 error, we have the following theorem.
\begin{theorem}\Label{11-07-2T}
We have 
\begin{align}
\log \beta_n(\epsilon|\varphi) 
=& n (H(p) -\log d) 
- \sqrt{n} \sqrt{V(p)} \Phi^{-1}(\epsilon)
-\log n +O(1),\Label{11-07-2}
\end{align}
where 
$H(p)\stackrel{\rm def}{=} -\sum_i p_i \log p_i$
and $V(p)\stackrel{\rm def}{=} \sum_i p_i (H(p)+\log p_i)^2$.
\hfill $\square$\end{theorem}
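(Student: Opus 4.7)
The plan is to combine Proposition \ref{Th1} and Lemma \ref{11-06-6} to reduce the problem to a sharp asymptotic analysis of $P_{n,1}(R)$ and $P_{n,1/2}(R)$ at $R \approx -H(p)$, and then to apply the Bahadur-Rao strong large deviation result (Proposition \ref{11-4-4} of Appendix \ref{A1}). The conversion between the two error functions uses the duality $\beta_n(\alpha|\varphi)\le\beta \iff \alpha_n(\beta|\varphi)\le\alpha$.

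For the achievability, I pick $R' = -H(p) + \sqrt{V(p)/n}\,\Phi^{-1}(\epsilon) + O(1/n)$, with an $O(1/n)$ lattice adjustment so that the central limit theorem yields $P_{n,1}^c(R') = \epsilon + o(1)$. Setting $\beta^{\ast} := P_{n,1/2}(R')^2/(d^n P_{n,1}(R'))$, the upper inequality of (\ref{4-6-1b}) combined with Lemma \ref{11-06-6} gives $\alpha_n(\beta^{\ast}|\varphi)\le P_{n,1}^c(R') \le \epsilon$, so $\beta_n(\epsilon|\varphi)\le \beta^{\ast}$. To evaluate $\beta^{\ast}$, rewrite $P_{n,1/2}(R')$ as a tilted tail probability,
\begin{equation*}
P_{n,1/2}(R') = Z^n\,\Pr\nolimits_{q^{\otimes n}}\!\bigl[\tfrac{1}{n}\textstyle\sum_{i} \log p_{X_i}\ge R'\bigr],\quad q_i:=\sqrt{p_i}/Z,\ Z:=\textstyle\sum_j\sqrt{p_j}.
\end{equation*}
Since $p$ is non-uniform, $E_q[\log p_X] < -H(p)$, so $R'$ lies in the upper large-deviation tail under $q^{\otimes n}$. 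The cumulant generating function $\Lambda_q(s)=\log(Z^{-1}\sum_i p_i^{s+1/2})$ satisfies $\Lambda_q'(1/2)=-H(p)$, giving optimal tilt $s^{\ast}=1/2$ and rate $\Lambda_q^{\ast}(-H(p))=\log Z-H(p)/2$. Proposition \ref{11-4-4} then yields
\begin{equation*}
P_{n,1/2}(R') = \frac{C}{\sqrt{n}}\,\exp\!\Bigl(\tfrac{nH(p)}{2} - \tfrac{1}{2}\sqrt{nV(p)}\,\Phi^{-1}(\epsilon)\Bigr)(1+o(1)).
\end{equation*}
Squaring produces the crucial $\Theta(1/n)$ prefactor---the exact source of the $-\log n$ third-order term---and substitution into $\beta^{\ast}$ gives the upper bound in (\ref{11-07-2}).

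For the converse, the lower inequality of (\ref{4-6-1b}) combined with Lemma \ref{11-06-6} gives $\alpha_n(\beta^{\ast}|\varphi) > P_{n,1}^c(\tilde R)$. Because the values of $\log p_J^n$ lie on a lattice of spacing $O(1/n)$ (only the type of $J$ matters), $\tilde R$ and $R$ differ by $O(1/n)$, so a small lattice-level shift of $R'$ produces $P_{n,1}^c(\tilde R) > \epsilon$ while preserving the third-order asymptotics of $\beta^{\ast}$; by duality this yields the matching lower bound. The main obstacle is making the converse sharp at the $O(1)$ level: one must simultaneously track the Berry-Esseen-type remainder in $P_{n,1}$, the lattice gap $R-\tilde R$, and the strong-LDP prefactor in $P_{n,1/2}$. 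This is precisely why the Bahadur-Rao expansion, rather than a soft central limit theorem, is indispensable for the sharp $-\log n$ coefficient.
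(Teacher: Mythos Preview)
Your achievability half is correct and is essentially the paper's argument stripped to its minimum: take $R=R'$ in Proposition~\ref{Th1} (then condition \eqref{11-4-1} is trivially satisfied and $a_n(R',R')=P_{n,1}^c(R')$ exactly), so $\alpha_n(\beta^\ast|\varphi)\le P_{n,1}^c(R')$, and Bahadur--Rao on $P_{n,1/2}$ gives the $-\log n$ term.

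The converse, however, has a real gap. From \eqref{4-6-1b} and Lemma~\ref{11-06-6} you correctly extract $\alpha_n(\beta^\ast|\varphi)>P_{n,1}^c(\tilde R)$, but to conclude you need $\tilde R-R'=O(1/n)$, not merely $R-\tilde R=O(1/n)$. Your lattice remark addresses only the latter: it says adjacent ``jump points'' of the step functions $P_{n,\bullet}(\cdot)$ are $O(1/n)$ apart, hence the smallest $R$ satisfying \eqref{11-4-1} and the largest $\tilde R$ violating it are neighbors. It says nothing about where that threshold sits relative to $R'$. A priori the threshold could be at $R'+\Theta(1/\sqrt n)$ or worse, which would destroy the $O(1)$ claim.

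Locating the threshold is exactly the work the paper does and you skip: one must analyze condition \eqref{11-4-1} itself. Using Proposition~\ref{11-4-4} for \emph{both} the counting measure and the $\sqrt{p}$-measure (not only the latter) gives $\frac{P_{n,0}(R)}{P_{n,1/2}(R)}e^{-nR/2}\to e^{g(d)}$ and $\frac{P_{n,0}(R)}{P_{n,1/2}(R')^2}\asymp\sqrt{n}\,e^{\,n(R'-R)}$, from which one reads off that asymptotic equality in \eqref{11-4-1} forces $R-R'=-\tfrac{2}{n}\log(1-e^{g(d)})+o(1/n)=O(1/n)$. Only then does the sandwich $P_{n,1}^c(\tilde R)<\alpha_n(\beta^\ast|\varphi)\le P_{n,1}^c(R')$ close to $O(1)$. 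So your outline is right, but the converse needs one more invocation of strong large deviations---on $P_{n,0}/P_{n,1/2}$---that you have not supplied.
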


For a preparation of the proof of Theorem \ref{11-07-2T}, 
we introduce several notations.
First, we choose $A_\epsilon \stackrel{\rm def}{=} \sqrt{V(p)} \Phi^{-1}(\epsilon)$. 
Remember that $\Phi$ is the cumulative distribution function of the standard Gaussian distribution.
We fix $d_S$ to be the lattice span of the random variable $-\log p_I$
when the index $I$ is subject to the distribution $p$.
Hence, the set $\{ \frac{1}{n}\log p_J^n\}_{J \in {\cal D}^n}$ has the lattice structure with the span $\frac{d_S}{n}$. 
For the precise definition of $d_S$, see Appendix \ref{A1}.
Then, we define the functions $g_1$, $g_2$, and $g_3$ as
\begin{align}
g_1(d_S)
&:=
\left\{
\begin{array}{ll}
-\log 2
& \hbox{ if } d_S=0 \\
\log \frac{1-e^{-\frac{1}{2}d_S}}{1-e^{-d_S}}
& \hbox{ if } d_S>0,
\end{array}
\right. 
\\
g_2(d_S)
&:=
\left\{
\begin{array}{ll}
-\frac{1}{2}\log 2 \pi+\frac{1}{2V(p)}+2\log 2
& \hbox{ if } d_S=0 \\
-\frac{1}{2}\log 2 \pi+\frac{1}{2V(p)}+
\log \frac{1-e^{-d_S}}{(1-e^{-\frac{1}{2}d_S})^2}
& \hbox{ if } d_S>0,
\end{array}
\right. \\
g_3(d_S)
&:=
\left\{
\begin{array}{ll}
-\frac{1}{2}\log 2 \pi +\log 2
+ \frac{1}{2 V(p)}
& \hbox{ if } d_S=0 \\
-\frac{1}{2}\log 2 \pi + \frac{1}{2 V(p)}
+ \log \frac{d_S}{1-e^{-\frac{1}{2}d_S}} 
& \hbox{ if } d_S>0.
\end{array}
\right. 
\end{align}

Then, we have the following lemma, which will be shown after the proof of Theorem \ref{11-07-2T}.
\begin{lemma}\Label{L2-1}
For real numbers $B_i$ with $i=1,2,3,4,5$,
we define ${R}_{n,i} \stackrel{\rm def}{=} 
-H(p)+\frac{A_\epsilon}{\sqrt{n}}+\frac{B_i}{n}$
with $i=1,2,3,4,5$.
\begin{align}
\log \frac{P_{n,1/2}(R_{n,1})P_{n,1/2}(R_{n,2})}
{P_{n,1}(R_{n,3})^{\frac{1}{2}} P_{n,1}(R_{n,4})^{\frac{1}{2}}
P_{n,0}(R_{n,5})}
=
-\frac{1}{2} \log n 
+B_5-\frac{B_3+B_4}{2} +g_2(d_S)
-\log (1-\epsilon) +o(1).
\Label{1-5-2X}
\end{align}
The convergences of the differences between the LHSs and RHSs 
are compact uniform for $B_i$.

Assume that
$\hat{R}_{n}:= -H(p)+\frac{A_\epsilon}{\sqrt{n}}+\frac{\hat{B}_n}{n}$,
$R_n=-H(p)+\frac{A_\epsilon}{\sqrt{n}}+\frac{B_n}{n}$,
and $R_n'=-H(p)+\frac{A_\epsilon}{\sqrt{n}}+\frac{B_n'}{n}$.

When $B_n$ and $\hat{B}_{n}$ are bounded, and $B_n-B_n'$ converges,
we have
\begin{align}
&
\log \epsilon_n(R_{n})
=
\log \frac{P_{n,1/2}(R_{n})^2}{d^n P_{n,1}(R_{n})} \nonumber \\
=&n (H(p)-\log d )
-\sqrt{n} A_\epsilon -\log n 
-B_n-\frac{A_\epsilon^2}{V(p)} +2 g_3(d_S)
-\log (1-\epsilon) +o(1), \Label{11-07-6B} \\
&\lim_{n \to \infty}\log
\frac{P_{n,0}(R_{n}')}{P_{n,1/2}(R_{n})}
e^{\frac{n R_{n}}{2}} 
= g_1(0)+\lim_{n \to \infty} \frac{B_n-B_n'}{2} , \Label{9-7-2b} \\
& a_n(R_{n},R_{n}') =\epsilon+o(1) ,\Label{11-07-5B} \\
&\frac{\Bigl(\frac{P_{n,1}(R_{n}')P_{n,0}(R_{n}')}{P_{n,1/2}(R_{n})^2}-1\Bigr)^{\frac{1}{2}}
}
{\Bigl(\frac{P_{n,1}(\hat{R}_n)P_{n,0}(R_{n})}{P_{n,1/2}(\hat{R}_n)^2}
-1\Bigr)^{\frac{1}{2}}
}
=
e^{\frac{2B_n -B_n '-\hat{B}_{n}}{2}}+o(1) \Label{9-7-3b}.
\end{align}

When $B_n \to - \infty$, $\hat{B}_{n}$ is bounded, and $B_n-B_n'$ converges,
\begin{align}
&\lim_{n \to \infty}\log
\frac{P_{n,0}(R_{n}')}{P_{n,1/2}(R_{n})}
e^{\frac{n R_{n}}{2}} 
\le g_1(0)
+\lim_{n \to \infty} \frac{B_n-B_n'}{2}
\Label{9-7-2}\\
&\lim_{n \to \infty}
\frac{\Bigl(\frac{P_{n,1}(R_{n}')P_{n,0}(R_{n}')}{P_{n,1/2}(R_{n})^2}-1\Bigr)^{\frac{1}{2}}
}
{\Bigl(\frac{P_{n,1}(\hat{R}_n)P_{n,0}(R_{n})}{P_{n,1/2}(\hat{R}_n)^2}
-1\Bigr)^{\frac{1}{2}}
}
=0\Label{9-7-4} .
\end{align}
\hfill $\square$\end{lemma}

\begin{proofof}{Theorem \ref{11-07-2T}}\par
\PF{Non-lattice case}
\noindent{\it Step 1:)}
For simplicity, we first consider the case when $d_S=0$, i.e., the non-lattice case.
We fix $\hat{B}$.
Due to the non-lattice property (Lemma \ref{L9-20}), we can choose
we can choose
$\hat{B}_n$ such that
$\lim_{n\to \infty}\hat{B}_n = \hat{B}$
and $\hat{R}_{n}:= -H(p)+\frac{A_\epsilon}{\sqrt{n}}+\frac{\hat{B}_n}{n}
 \in \{ \frac{1}{n}\log p_J^n\}_{J \in {\cal D}^n}$.
Then, we will show 
\begin{align}
\lim_{n \to \infty}
\alpha_n\left(
\left.
\frac{P_{n,1/2}(\hat{R}_n)^2}{d^n P_{n,1}(\hat{R}_n)}
\right|\varphi\right)
=\epsilon.\Label{7-6-eq}
\end{align}
Since $\frac{P_{n,1/2}(\hat{R}_n)^2}{d^n P_{n,1}(\hat{R}_n)}$ is characterized 
by \eqref{11-07-6B},
\eqref{7-6-eq} implies the desired argument when $d_S=0$.
Now, we outline the derivation of \eqref{7-6-eq}. 
To show \eqref{7-6-eq}, we find upper and lower bounds of \eqref{7-6-eq} whose limit is $\epsilon$.
For this purpose,
in Step 2:), we find its upper bound by using Item (1) of Lemma \ref{Th1},
and 
in Step 3:), we find its lower bound by using Item (2) of Lemma \ref{Th1}.
In Step 4:),
calculating both bounds, we show \eqref{7-6-eq}.

\noindent{\it Step 2:)}
Assume that $\lim_{n\to \infty}B_n$ converges.
We choose $R_{n}:=-H(p)+\frac{A_\epsilon}{\sqrt{n}}+\frac{B_n}{n}$.
Using 
\eqref{9-7-3b} and \eqref{9-7-2b},
we have
\begin{align}
\frac{P_{n,0}(R_{n})}{P_{n,1/2}(R_{n})}
e^{\frac{n R_{n}}{2}} 
+
\frac{\Bigl(\frac{P_{n,1}(R_{n})P_{n,0}(R_{n})}{P_{n,1/2}(R_{n})^2}-1\Bigr)^{\frac{1}{2}}
}
{\Bigl(\frac{P_{n,1}(\hat{R}_n)P_{n,0}(R_{n})}{P_{n,1/2}(\hat{R}_n)^2}
-1\Bigr)^{\frac{1}{2}}
}
=
e^{g_1(0)}+e^{\frac{B_n- \hat{B}}{2}}+o(1).\Label{9-7-6}
\end{align}
Given $\delta>0$,
due to the non-lattice property (See Lemma \ref{L9-20} in Appendix \ref{A1}),
we can chose $B_{\alpha,n}$ such that
$R_{\alpha,n}:=-H(p)+\frac{A_\epsilon}{\sqrt{n}}+\frac{B_{\alpha,n}}{n}$
belongs to 
$\{ \frac{1}{n}\log p_J^n| \frac{1}{n}\log p_J^n < R_n \}_{J \in {\cal D}^n}$ and
\begin{align}
\lim_{n\to \infty}
B_{\alpha,n}= \hat{B}+2 \log (1-e^{g_1(0)})+\delta
\Label{2-22-D1}.
\end{align}

Then, 
\begin{align}
e^{g_1(0)}+ e^{\frac{\lim_{n\to \infty} B_{\alpha,n}- \hat{B}}{2}}
=e^{g_1(0)}+ (1-e^{g_1(0)})e^{\delta} >1.
\end{align}
With sufficiently large $n$,
$R_{\alpha,n}$
satisfies 
\begin{align}
\frac{P_{n,0}(R_{\alpha,n})}{P_{n,1/2}(R_{\alpha,n})}
e^{\frac{n R_{\alpha,n}}{2}} 
+
\frac{\Bigl(\frac{P_{n,1}(R_{\alpha,n})P_{n,0}(R_{\alpha,n})}{P_{n,1/2}(R_{\alpha,n})^2}-1\Bigr)^{\frac{1}{2}}
}
{\Bigl(\frac{P_{n,1}(\hat{R}_n)P_{n,0}(R_{\alpha,n})}{P_{n,1/2}(\hat{R}_n)^2}
-1\Bigr)^{\frac{1}{2}}
}
& >1\Label{8-27-9Y} \\
R_{\alpha,n} &< \hat{R}_{n}\Label{8-27-9X}.
\end{align}
Thus, we can apply Item (1) of Lemma \ref{Th1} to this case.
Hence, we obtain
\begin{align}
\alpha_n\left(
\left.
\frac{P_{n,1/2}(\hat{R}_n)^2}{d^n P_{n,1}(\hat{R}_n)}
\right|\varphi\right)
\le
a_n(R_{\alpha,n},\hat{R}_n) \Label{2-22-A}.
\end{align}

\noindent{\it Step 3:)}
We choose $B_{n}$ as $R_n=-H(p)+\frac{A_\epsilon}{\sqrt{n}}+\frac{B_n}{n}$.
Then, we choose $R_n'$ as the maximum element in 
$\{ \frac{1}{n}\log p_J^n| \frac{1}{n}\log p_J^n < R_n \}_{J \in {\cal D}^n}$.
So, the non-lattice property (See Lemma \ref{L9-20} in Appendix \ref{A1}) guarantees $\lim_{n \to \infty}n(R_n-R_n')=0$. 
When $B_n \to -\infty$,
\eqref{9-7-4} and \eqref{9-7-2} imply that
\begin{align}
\lim_{n \to \infty}
\frac{P_{n,0}(R_{n}')}{P_{n,1/2}(R_{n})}
e^{\frac{n R_{n}}{2}} 
+
\frac{\Bigl(\frac{P_{n,1}(R_{n}')P_{n,0}(R_{n}')}{P_{n,1/2}(R_{n})^2}-1\Bigr)^{\frac{1}{2}}
}
{\Bigl(\frac{P_{n,1}(\hat{R}_n)P_{n,0}(R_{n})}{P_{n,1/2}(\hat{R}_n)^2}
-1\Bigr)^{\frac{1}{2}}
}
\le e^{g_1(0)} <1\Label{9-7-5b}.
\end{align}
When $ B_{n}$ is bounded, the combination of \eqref{9-7-3b}  and \eqref{9-7-2b} implies that
\begin{align}
\frac{P_{n,0}(R_{n}')}{P_{n,1/2}(R_{n})}
e^{\frac{n R_{n}}{2}} 
+
\frac{\Bigl(\frac{P_{n,1}(R_{n}')P_{n,0}(R_{n}')}{P_{n,1/2}(R_{n})^2}-1\Bigr)^{\frac{1}{2}}
}
{\Bigl(\frac{P_{n,1}(\hat{R}_n)P_{n,0}(R_{n})}{P_{n,1/2}(\hat{R}_n)^2}
-1\Bigr)^{\frac{1}{2}}
}
=e^{g_1(0)}+ e^{\frac{B_{n}- \hat{B}}{2}}+o(1). \Label{9-6-2}
\end{align}
Then, due to the non-lattice property,
we can chose $B_{\beta,n}$ such that 
$R_{\beta,n}:=-H(p)+\frac{A_\epsilon}{\sqrt{n}}+\frac{B_{\beta,n}}{n}$ 
belongs to 
$\{ \frac{1}{n}\log p_J^n| \frac{1}{n}\log p_J^n < R_n \}_{J \in {\cal D}^n}$ and
\begin{align}
\lim_{n \to \infty}B_{\beta,n}= \hat{B}+2 \log (1-e^{g_1(0)})-\delta \Label{2-22-D1B}.
\end{align}
So, when $B_n \le B_{\beta,n}$,
with sufficiently large $n$, we have
\begin{align}
e^{g_1(0)}+ e^{\frac{B_{n}- \hat{B}}{2}}
\le e^{g_1(0)}+ (1-e^{g_1(0)})e^{-\delta} <1.
\end{align}
In this case,
with sufficiently large $n$,
we have
\begin{align}
\frac{P_{n,0}(R_{n}')}{P_{n,1/2}(R_{n})}
e^{\frac{n R_{n}}{2}} 
+
\frac{\Bigl(\frac{P_{n,1}(R_{n}')P_{n,0}(R_{n}')}{P_{n,1/2}(R_{n})^2}-1\Bigr)^{\frac{1}{2}}
}
{\Bigl(\frac{P_{n,1}(\hat{R}_n)P_{n,0}(R_{n})}{P_{n,1/2}(\hat{R}_n)^2}
-1\Bigr)^{\frac{1}{2}}
}
<1 \Label{9-6-2T}.
\end{align}
Thus, $R_{\beta,n}$ satisfies the conditions for $R_\beta$ in Item (2) of Lemma \ref{Th1} with $\hat{R}=\hat{R}_n$.
Due to \eqref{8-27-9Y} and \eqref{8-27-9X}, we can apply Item (2) of Lemma \ref{Th1} to the case with $\hat{R}=\hat{R}_n$, $R_\alpha=R_{\alpha,n}$, and $R_\beta=R_{\beta,n}$.
Hence, we obtain
\begin{align}
\alpha_n\left(
\left.
\frac{P_{n,1/2}(\hat{R}_n)^2}{d^n P_{n,1}(\hat{R}_n)}
\right|\varphi\right)
\ge
a_n(R_{\beta,n},\hat{R}_n) \Label{2-22-B}.
\end{align}

\noindent{\it Step 4:)}
\eqref{2-22-D1} and \eqref{2-22-D1B} show that 
the sequences $B_{\alpha,n}$ and $B_{\beta,n}$ converge to constants as well as $\hat{B}_n$.
Thus, \eqref{11-07-5B} implies that
\begin{align}
\lim_{n \to \infty}
a_n(R_{\alpha,n},\hat{R}_n)
=
\lim_{n \to \infty}
a_n(R_{\beta,n},\hat{R}_n)
=\epsilon.\Label{8-27A5}
\end{align}
Combining \eqref{2-22-A} and \eqref{2-22-B}, we obtain \eqref{7-6-eq}.

\PF{Lattice case}
Next, we proceed to the lattice case with $d_S>0$.
The different points from the non-lattice case 
are the following.
Firstly, we cannot necessarily choose $\hat{B}_n$ such that 
the limit $\lim_{n \to \infty}\hat{B}_n$ exists.
However, we can choose $\hat{B}_n$ such that 
$\hat{B}_n$ is bounded, i.e., $\hat{B}_n$ behaves within an interval with width $d_S$.
The above proof works even with such a bounded case.
The second point is the relation $\lim_{n \to \infty}n(R_n-R_n')=d_S>0$, which appears only in Steps 2:) and 3:).
In these steps, we need to replace $g_1(0)$ by $g_1(d_S)$.
In Step 2:), the relations \eqref{9-7-6} and \eqref{2-22-D1} are replaced by
\begin{align}
&\frac{P_{n,0}(R_{n})}{P_{n,1/2}(R_{n})}
e^{\frac{n R_{n}}{2}} 
+
\frac{\Bigl(\frac{P_{n,1}(R_{n})P_{n,0}(R_{n})}{P_{n,1/2}(R_{n})^2}-1\Bigr)^{\frac{1}{2}}
}
{\Bigl(\frac{P_{n,1}(\hat{R}_n)P_{n,0}(R_{n})}{P_{n,1/2}(\hat{R}_n)^2}
-1\Bigr)^{\frac{1}{2}}
}
=
e^{g_1(d_S)+\frac{d_S}{2}}+e^{\frac{B_n- \hat{B}}{2}+d_S}+o(1).
\Label{9-7-6B}
\\
&B_{\alpha,n}:= \hat{B}-2 d_S+2 \log (1-e^{g_1(d_S)+\frac{d_S}{2}})
+\delta
\Label{2-22-D1-B}.
\end{align}
In Step 3:), the relations \eqref{9-7-5b}, \eqref{9-6-2}, and \eqref{2-22-D1B}
are replaced by
\begin{align}
&\lim_{n \to \infty}
\frac{P_{n,0}(R_{n}')}{P_{n,1/2}(R_{n})}
e^{\frac{n R_{n}}{2}} 
+
\frac{\Bigl(\frac{P_{n,1}(R_{n}')P_{n,0}(R_{n}')}{P_{n,1/2}(R_{n})^2}-1\Bigr)^{\frac{1}{2}}
}
{\Bigl(\frac{P_{n,1}(\hat{R}_n)P_{n,0}(R_{n})}{P_{n,1/2}(\hat{R}_n)^2}
-1\Bigr)^{\frac{1}{2}}
}
\le e^{g_1(d_S)+d_S} <1\Label{9-7-5b-2},
\\
&\frac{P_{n,0}(R_{n}')}{P_{n,1/2}(R_{n})}
e^{\frac{n R_{n}}{2}} 
+
\frac{\Bigl(\frac{P_{n,1}(R_{n}')P_{n,0}(R_{n}')}{P_{n,1/2}(R_{n})^2}-1\Bigr)^{\frac{1}{2}}
}
{\Bigl(\frac{P_{n,1}(\hat{R}_n)P_{n,0}(R_{n})}{P_{n,1/2}(\hat{R}_n)^2}
-1\Bigr)^{\frac{1}{2}}
}
=e^{g_1(d_S)+d_S}+ e^{\frac{B_{n}- \hat{B}}{2}+d_S}+o(1), \Label{9-6-2-2}
\\
&B_{\beta,n}:= \hat{B}-2d_S+2 \log (1-e^{g_1(d_S)+d_S})-\delta \Label{2-22-D1B-2}.
\end{align}
Hence, the sequence $B_{\beta,n}$ is bounded as well as $\hat{B}_n$ and $B_{\alpha,n}$.
Thus, we obtain \eqref{8-27A5}.
Combining \eqref{2-22-A} and \eqref{2-22-B}, we obtain \eqref{7-6-eq} even in the lattice case
$d_S>0$.
\end{proofof}

\begin{proofof}{Lemma \ref{L2-1}}\par
\PF{Proofs of \eqref{1-5-2X}, \eqref{11-07-6B}, and \eqref{9-7-2b}}
We show the desired relations by applying Proposition \ref{11-4-4} in Appendix \ref{A1}. 
When the distribution $p$ in Proposition \ref{11-4-4} 
is the measure $\{p_{i}\}_{i}$ and $X$ is $ \log p_{i}$, 
we denote the functions given in Proposition \ref{11-4-4} 
by adding superscript $1$, 
like $\chi^{1}_0$, $\chi^{1}_1$, $\tau^{1}(s)$, etc.
Similarly,
when the distribution $p$ in Proposition \ref{11-4-4} 
is the measure $\{\sqrt{p_{i}}\}_{i}$ (the counting measure) and
$X$ is $ \log p_{i}$. 
We denote them by adding superscript $\frac{1}{2}$ ($0$), 
like 
$\chi^{\frac{1}{2}}_0$, $\chi^{\frac{1}{2}}_1$, $\tau^{\frac{1}{2}}(s)$, 
($\chi^{0}_0$, $\chi^{0}_1$, $\tau^{0}(s)$) etc.
We also employ the function
$\psi_p(s) \stackrel{\rm def}{=} \log \sum_{i}p_i^{1+s}$.
Then, we have
\begin{align}
\tau^{t}(s) &=\psi_p(s+1-t) ,\quad
\eta^{t}(R) ={\psi_p'}^{-1}(R) +1-t
\end{align}
for $t=0,\frac{1}{2},1$.
Hence,
\begin{align}
{\eta^{t}}'(R) &=\frac{1}{\psi_p''({\psi_p'}^{-1}(R))}\\
\chi_{0}^{t}(R)
&=  -R ({\psi_p'}^{-1}(R) +1 -t) + \psi_p({\psi_p'}^{-1}(R)) \Label{9-17-eq1}\\
{\chi_{0}^{t}}'(R)
&=-{\psi_p'}^{-1}(R) -1+t\\
{\chi_{0}^{t}}''(R) &={\chi_{0}^{1/2}}''(R)
=-\frac{d{\psi_p'}^{-1}}{d R}(R) 
=-\frac{1}{\psi_p''({\psi_p'}^{-1}(R))} \\
\chi_{1}^{t}(R)
&=
\left\{
\begin{array}{ll}
-\frac{1}{2}\log 2 \pi -\log ({\psi_p'}^{-1}(R)+1-t)
+ \frac{1}{2 \psi_p''({\psi_p'}^{-1}(R))}
& \hbox{ if } d_S=0 \\
-\frac{1}{2}\log 2 \pi + \frac{1}{2 \psi_p''({\psi_p'}^{-1}(R))}
+ \log \frac{d_S}{1-e^{-d_S ({\psi_p'}^{-1}(R)+1-t)}} 
& \hbox{ if } d_S>0.
\end{array}
\right.
\end{align}

Generally, Proposition \ref{11-4-4} implies that
\begin{align}
\log P_{n,1/2}(R)
=& 
n (  -R ({\psi_p'}^{-1}(R) +\frac{1}{2} ) + \psi_p ({\psi_p'}^{-1}(R)))
-\frac{1}{2}\log n +\chi_1^{\frac{1}{2}}(R)
+o(1)
\Label{1-32-1}
 \\
\log P_{n,0}(R)
=& 
n (  -R ({\psi_p'}^{-1}(R) +1 ) + \psi_p ({\psi_p'}^{-1}(R)))
-\frac{1}{2}\log n +\chi_1^{0}(R)
+o(1)
\Label{1-32-2}.
\end{align}
Using $\Delta_n \stackrel{\rm def}{=}  \sqrt{n} {\psi_p'}^{-1}(-H(p)+\frac{A_\epsilon}{\sqrt{n}}+\frac{B}{n})
= \frac{A_\epsilon}{V(p)}+o(1)$,
for any real number $B$, 
we have
\begin{align*}
\psi_p({\psi_p'}^{-1}(-H(p)+\frac{A_\epsilon}{\sqrt{n}}+\frac{B}{n}))
&= 
\psi_p'(0)\frac{\Delta_n}{\sqrt{n}}
+\psi_p''(0)\frac{\Delta_n^2}{2n}
+ o(\frac{1}{n}) \\
&= 
-H(p)\frac{\Delta_n}{\sqrt{n}}
+\frac{A_\epsilon^2}{2V(p)n}
+ o(\frac{1}{n}) \\
-(-H(p)+\frac{A_\epsilon}{\sqrt{n}}+\frac{B}{n}) {\psi_p'}^{-1}(-H(p)+\frac{A_\epsilon}{\sqrt{n}}+\frac{B}{n})
&=
H(p)\frac{\Delta_n}{\sqrt{n}}
-A_\epsilon \Delta_n \frac{1}{n}
+ o(\frac{1}{n}) \\
&=
H(p)\frac{\Delta_n}{\sqrt{n}}
-\frac{A_\epsilon^2}{V(p)n}
+ o(\frac{1}{n}) .
\end{align*}
Thus, we have
\begin{align}
-(-H(p)+\frac{A_\epsilon}{\sqrt{n}}+\frac{B}{n}) {\psi_p'}^{-1}(-H(p)+\frac{A_\epsilon}{\sqrt{n}}+\frac{B}{n})
+\psi_p({\psi_p'}^{-1}(-H(p)+\frac{A_\epsilon}{\sqrt{n}}+\frac{B}{n}))
=
-\frac{A_\epsilon^2}{2 V(p)n}+ o(\frac{1}{n}).\Label{2-02-1B}
\end{align}
Applying \eqref{2-02-1B} to \eqref{1-32-1} and \eqref{1-32-2}, 
we have
\begin{align}
\log P_{n,1/2}(-H(p)+\frac{A_\epsilon}{\sqrt{n}}+\frac{B}{n})
=& \frac{1}{2}(n H(p)-\sqrt{n} A_\epsilon-B)
-\frac{1}{2}\log n 
-\frac{A_\epsilon^2}{2 V(p)} +\chi_1^{\frac{1}{2}}(-H(p)) +o(1)
\Label{1-31-5} \\
\log P_{n,0}(-H(p)+\frac{A_\epsilon}{\sqrt{n}}+\frac{B}{n})
=& (n H(p)-\sqrt{n} A_\epsilon-B)
-\frac{1}{2}\log n 
-\frac{A_\epsilon^2}{2 V(p)} +\chi_1^{0}(-H(p)) +o(1)
\Label{1-31-6} .
\end{align}
Here, the LHS minus the RHS approach to zero, whose convergence is compact uniform for the choice of $B$.

Also, the central limit theorem yields
\begin{align}
\lim_{n \to \infty} P_{n,1}(-H(p)+\frac{A_\epsilon}{\sqrt{n}} +\frac{B}{n})
= 1-\Phi(\frac{A_\epsilon}{\sqrt{V(p)}})
=1-\epsilon
\Label{11-07-4}.
\end{align}
Since
\begin{align}
2 \chi_{1}^{1/2}(-H(p)) -\chi_{1}^0(-H(p))
&= g_2(d_S) \\
2 \chi_{1}^{1/2}(-H(p)) 
&=
2 g_3(d_S) \\
\chi_{1}^{0}(-H(p)) - \chi_{1}^{1/2}(-H(p)) 
&= g_1(d_S) ,
\end{align}
combining \eqref{1-31-5}, \eqref{1-31-6}, and \eqref{11-07-4}, we obtain 
\eqref{1-5-2X}, \eqref{11-07-6B}, and \eqref{9-7-2b}.
Indeed, while $B$ depends on $n$ in \eqref{11-07-6B} and \eqref{9-7-2b},
since the convergence is compact uniform for the choice of $B$,
the relations \eqref{11-07-6B} and \eqref{9-7-2b} hold.

\PF{Proof of \eqref{11-07-5B}}
Due to \eqref{1-5-2X}, we find that 
\begin{align}
\frac{P_{n,1/2}(R_{n,1})P_{n,1/2}({R}_{n,2})}
{P_{n,1}({R}_{n,2})^{\frac{1}{2}}P_{n,1}(R_{n,1})^{\frac{1}{2}} 
P_{n,0}(R_{n,1})}
&\to 0 \Label{22-E1}
\\
\frac{P_{n,1/2}({R}_{n,2})^2}{P_{n,1}({R}_{n,2})P_{n,0}(R_{n,1})}
&\to 0 \Label{22-E2}
\\
\frac{P_{n,1/2}(R_{n,1})^2}{P_{n,1}(R_{n,1}) P_{n,0}(R_{n,1})}
&\to 0. \Label{22-E3}
\end{align}
Since \eqref{11-07-4} implies
\begin{align}
P_{n,1}({R}_{n,1}) \to 1-\epsilon,
\end{align}
we obtain \eqref{11-07-5B}.
The compact uniformness of these convergences are guaranteed by the compact uniformness of the convergences in Proposition \ref{11-4-4}.

\PF{Proof of \eqref{9-7-3b}}
When $B_n$ and $\hat{B}_{n}$ are bounded, and $B_n-B_n'$ converges,
using the relation \eqref{1-5-2X}, we have
\begin{align}
\log \frac{P_{n,1}(R_{n}')P_{n,0}(R_{n}')}{P_{n,1/2}(R_{n})^2}
&= \frac{1}{2}\log n -(B_n'-B_n)
-g_2(0) -\log (1-\epsilon)+o(1), \Label{9-8-8} \\
\log \frac{P_{n,1}(\hat{R}_n)P_{n,0}(R_{n})}{P_{n,1/2}(\hat{R}_n)^2}
&= \frac{1}{2}\log n -(B_{n}-\hat{B})
-g_2(0) -\log (1-\epsilon)+o(1) .\Label{9-8-7}
\end{align}
Therefore, we obtain \eqref{9-7-3b}.

\PF{Proof of \eqref{9-7-2}}
The relations \eqref{1-32-1} and \eqref{1-32-2} show that
\begin{align}
\log
\frac{P_{n,0}(R_{n}')}{P_{n,1/2}(R_{n})}
e^{\frac{n R_{n}}{2}} 
&=
\frac{B_{n}'-B_n}{2}
+\chi_1^0(R_n')
-\chi_1^{\frac{1}{2}}(R_n)
+o(1) \nonumber \\
&=
\frac{B_{n}'-B_n}{2}
+\chi_1^0(R_n')
-\chi_1^{\frac{1}{2}}(R_n)
+o(1) \Label{9-9-2}.
\end{align}
When $B_n \to - \infty$ and $B_n-B_n'$ converges,
since $\chi_1^0(R)-\chi_1^{\frac{1}{2}}(R)$ is monotone increasing for $R$,
we have
\begin{align}
\chi_1^0(R_n')
-\chi_1^{\frac{1}{2}}(R_n)
& \le 
\chi_1^0(- H(p) +\frac{A_\epsilon}{\sqrt{n}})
-\chi_1^{\frac{1}{2}}(- H(p) +\frac{A_\epsilon}{\sqrt{n}}) \nonumber \\
&= \chi_1^0(- H(p) )-\chi_1^{\frac{1}{2}}(- H(p))+o(1)
=g_1(d_s)+o(1).\Label{9-9-4}
\end{align}
So, combinig \eqref{9-9-2} and \eqref{9-9-4},
we obtain \eqref{9-7-2}.


\PF{Proof of \eqref{9-7-4}}
Assume that $B_n \to - \infty$, $\hat{B}_{n}$ is bounded, and $B_n-B_n'$ converges to $C$.
We fix a sufficiently large number $A>0$. 
We have $R_{n}< \hat{R}_{n}-A$ for sufficiently large $n$ because $B_n \to - \infty$.
So,
\begin{align}
P_{n,\frac{1}{2}}(R_{n}) \ge
P_{n,\frac{1}{2}}(\hat{R}_{n}) \Label{9-8-1}.
\end{align}
Since
\begin{align}
P_{n,1}(\hat{R}_n)P_{n,0}(\hat{R}_{n}) \ge P_{n,1/2}(\hat{R}_n)^2,
\Label{9-8-2}
\end{align}
with sufficiently large $n$, we have 
\begin{align}
\frac{P_{n,1}(\hat{R}_n)P_{n,0}(R_{n})}{P_{n,1/2}(\hat{R}_n)^2}
=
\frac{P_{n,1}(\hat{R}_n)P_{n,0}(\hat{R}_{n})}{P_{n,1/2}(\hat{R}_n)^2}
\cdot \frac{P_{n,0}(R_{n})}{P_{n,0}(\hat{R}_{n})}
\ge
\frac{P_{n,0}(R_{n})}{P_{n,0}(\hat{R}_{n})}
\ge
\frac{P_{n,0}(\hat{R}_{n}-A)}{P_{n,0}(\hat{R}_{n})}
\stackrel{(a)}{\to} e^{A}.
\end{align}
where $(a)$ follows from \eqref{1-31-6}.
So, 
\begin{align}
\frac{P_{n,1}(\hat{R}_n)P_{n,0}(R_{n})}{P_{n,1/2}(\hat{R}_n)^2}-1
\ge
\frac{e^{A}-1}{e^{A}}
\frac{P_{n,1}(\hat{R}_n)P_{n,0}(R_{n})}{P_{n,1/2}(\hat{R}_n)^2}.
\Label{8-27U}
\end{align}
Using \eqref{1-31-5} and \eqref{1-31-6},
we have
\begin{align}
\log \frac{P_{n,0}(\hat{R}_{n})}{P_{n,1/2}(\hat{R}_n)^2}
&= -\frac{1}{2}\log n 
-g_2(0) +o(1) ,
\end{align}
i.e.,
\begin{align}
\frac{P_{n,0}(\hat{R}_{n})}{P_{n,1/2}(\hat{R}_n)^2} \to 0.
\Label{8-27T}
\end{align}

Using \eqref{1-32-2}, we have
\begin{align}
\log
\frac{P_{n,0}(R_{n}')}
{ P_{n,0}(R_{n})}
&= (B_{n}-B_{n}') ({\psi_p'}^{-1}(R_n)+1)
+o(1). \Label{8-27S}
\end{align}

With sufficiently large $n$, we have
\begin{align}
\frac{\Bigl(\frac{P_{n,1}(R_{n}')P_{n,0}(R_{n}')}{P_{n,1/2}(R_{n})^2}-1\Bigr)^{\frac{1}{2}}
}
{\Bigl(\frac{P_{n,1}(\hat{R}_n)P_{n,0}(R_{n})}{P_{n,1/2}(\hat{R}_n)^2}
-1\Bigr)^{\frac{1}{2}}
}
\le &
\frac{\Bigl(\frac{P_{n,1}(R_{n}')P_{n,0}(R_{n}')}{P_{n,1/2}(R_{n})^2}\Bigr)^{\frac{1}{2}}
}
{\Bigl(\frac{P_{n,1}(\hat{R}_n)P_{n,0}(R_{n})}{P_{n,1/2}(\hat{R}_n)^2}
-1\Bigr)^{\frac{1}{2}}
}
\stackrel{(a)}{\le}
\frac{e^{A}}{e^{A}-1}
\frac{\Bigl(\frac{P_{n,1}(R_{n}')P_{n,0}(R_{n}')}{P_{n,1/2}(R_{n})^2}\Bigr)^{\frac{1}{2}}
}
{\Bigl(\frac{P_{n,1}(\hat{R}_n)P_{n,0}(R_{n})}{P_{n,1/2}(\hat{R}_n)^2}
\Bigr)^{\frac{1}{2}}
} \nonumber \\
= &
\frac{e^{A}}{e^{A}-1}
\Bigl(
\frac{P_{n,1}(R_{n}')P_{n,0}(R_{n}')P_{n,1/2}(\hat{R}_n)^2}
{P_{n,1/2}(R_{n})^2P_{n,1}(\hat{R}_n)P_{n,0}(R_{n})}
\Bigr)^{\frac{1}{2}}
\nonumber \\
\stackrel{(b)}{\le}
 &
\frac{e^{A}}{e^{A}-1}
\Bigl(
\frac{P_{n,1}(R_{n}')P_{n,0}(R_{n}')
P_{n,0}(\hat{R}_n)
P_{n,1}(\hat{R}_n)
}
{P_{n,1/2}(R_{n})^2P_{n,1}(\hat{R}_n)P_{n,0}(R_{n})}
\Bigr)^{\frac{1}{2}}
\nonumber \\
= &
\frac{e^{A}}{e^{A}-1}
\Bigl(
\frac{P_{n,1}(R_{n}') P_{n,0}(\hat{R}_n)}
{P_{n,1/2}(R_{n})^2 }
\Bigr)^{\frac{1}{2}}
\cdot
\Bigl(
\frac{P_{n,0}(R_{n}')}
{ P_{n,0}(R_{n})}
\Bigr)^{\frac{1}{2}}
\nonumber \\
\stackrel{(c)}{\le}
 &
\frac{e^{A}}{e^{A}-1}
\Bigl(
\frac{ P_{n,0}(\hat{R}_n)}
{P_{n,1/2}(R_{n})^2 }
\Bigr)^{\frac{1}{2}}
\cdot
\Bigl(
\frac{P_{n,0}(R_{n}')}
{ P_{n,0}(R_{n})}
\Bigr)^{\frac{1}{2}}
\nonumber \\
\stackrel{(d)}{\le}
 &
\frac{e^{A}}{e^{A}-1}
\Bigl(
\frac{ P_{n,0}(\hat{R}_n)}
{P_{n,1/2}(\hat{R}_{n})^2 }
\Bigr)^{\frac{1}{2}}
\cdot
\Bigl(
\frac{P_{n,0}(R_{n}')}
{ P_{n,0}(R_{n})}
\Bigr)^{\frac{1}{2}}
\stackrel{(e)}{\to}
 0,\Label{9-9-1}
\end{align}
where
$(a)$, $(b)$, $(c)$, $(d)$, and $(e)$ follow from 
\eqref{8-27U},
\eqref{9-8-2},
$P_{n,1}(R_{n}') \le 1$,
\eqref{9-8-1},
and the combination of \eqref{8-27T} and \eqref{8-27S}, respectively.
So, we obtain \eqref{9-7-4}.
\end{proofof}

\subsubsection{Exponential constraint}

\begin{theorem}\Label{11-07-10T}
\begin{align}
\log \beta_n(e^{-nr}| \varphi)
= n (2 \min_{0 \le s <1} (\frac{s}{1-s}r+\frac{1}{2}H_{\frac{1+s}{2}}(p) )
- \log d)
- \log n + O(1).\Label{11-07-10}
\end{align}
\hfill $\square$\end{theorem}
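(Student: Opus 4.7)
The plan is to apply Proposition \ref{Th1} and Lemma \ref{11-06-6} at the exponential scale: choose $R, R'$ as constants strictly below $-H(p)$ (as opposed to the $n^{-1/2}$-perturbations of $-H(p)$ used in Theorem \ref{11-07-2T}), since the type-1 budget $e^{-nr}$ with $r>0$ forces $R'$ into the large-deviation regime via Cram\'{e}r's theorem. Throughout, write $X:=\log p_i$, $\psi_p(t):=\log\sum_i p_i^{1+t}$, and let $I_X(R):=\sup_{t}(tR-\psi_p(t))$ be the corresponding rate function.

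For the achievability bound, combining the second inequality of \eqref{4-6-1b} with $a_n(R,R')\le P_{n,1}^c(R')$ from Lemma \ref{11-06-6} yields
\begin{align*}
\beta_n\!\bigl(P_{n,1}^c(R')\,\big|\,\varphi\bigr)\;\le\;\frac{P_{n,1/2}(R')^2}{d^n\,P_{n,1}(R')}.
\end{align*}
Take $R'=R'(r)$ with $I_X(R')=r$; then $P_{n,1}^c(R')\doteq e^{-nr}$ by Cram\'{e}r. Standard Cram\'{e}r--Chernoff estimates give $\tfrac{1}{n}\log P_{n,1/2}(R')\le -tR'+\psi_p(t-\tfrac{1}{2})$ for all $t\ge 0$, and $\tfrac{1}{n}\log P_{n,1}(R')\to 0$ for $R'<-H(p)$. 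Substituting and optimizing over $t$ and admissible $R'$, with the change of variables $u:=(\psi_p')^{-1}(R')\in[-\tfrac{1}{2},0)$, $s:=2u+1\in[0,1)$, and using the Legendre identity $u\psi_p'(u)-\psi_p(u)=r$ together with the algebraic identity $\tfrac{1}{2}H_{(1+s)/2}(p)=\tfrac{\psi_p(-(1-s)/2)}{1-s}$, produces exactly $2\bigl(\tfrac{sr}{1-s}+\tfrac{1}{2}H_{(1+s)/2}(p)\bigr)-\log d$ at the joint Chernoff/$R'$ optimum, so $\tfrac{1}{n}\log\beta_n(e^{-nr}|\varphi)\le 2\min_{s\in[0,1)}(\cdots)-\log d+o(1)$.

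For the converse, I would use the first inequality of \eqref{4-6-1b} together with $a_n(\tilde R,R')\ge P_{n,1}^c(\tilde R)$ (Lemma \ref{11-06-6}) to conclude
\begin{align*}
\alpha_n\!\left(\tfrac{P_{n,1/2}(R')^2}{d^n P_{n,1}(R')}\,\bigg|\,\varphi\right)\;>\;P_{n,1}^c(\tilde R).
\end{align*}
Paralleling the estimates \eqref{11-4-1f}--\eqref{11-4-1g} from the proof of Theorem \ref{11-07-2T}, one shows $\tilde R=R+o(1)$ uniformly on the relevant exponential window, so $P_{n,1}^c(\tilde R)\doteq P_{n,1}^c(R)\doteq e^{-nr}$ up to subexponential factors. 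The standard monotone duality between $\alpha_n(\beta|\varphi)$ and $\beta_n(\alpha|\varphi)$ then converts this $\alpha_n$-lower bound into a matching lower bound for $\beta_n(e^{-nr}|\varphi)$ at the same leading exponent, closing the argument.

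The main obstacle will be the case split at the threshold $r_{\leftrightarrow}=-\tfrac{1}{4}H_{1/2}'(p)$: for $r<r_{\leftrightarrow}$ the Chernoff minimizer for $P_{n,1/2}(R')$ is strictly interior (giving $s\in(0,1)$ in \eqref{11-07-10}), whereas for $r\ge r_{\leftrightarrow}$ it is pinned at the boundary $s=0$ and the exponent freezes at $H_{1/2}(p)-\log d$, recovering the Stein-type rate already implicit in Proposition \ref{pro2}. This case distinction, together with the uniform $o(1)$ control of $\tilde R-R$ at the exponential scale (which is more delicate than the $O(1/n)$ control that sufficed in the $\sqrt{n}$-regime of Theorem \ref{11-07-2T}), is the principal technical difficulty.
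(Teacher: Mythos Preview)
Your proposal follows essentially the same route as the paper's proof: both combine Proposition~\ref{Th1} with Lemma~\ref{11-06-6} to sandwich $\alpha_n\bigl(\tfrac{P_{n,1/2}(R')^2}{d^nP_{n,1}(R')}\bigr)$ between $P_{n,1}^c(\tilde R)$ and $P_{n,1}^c(R')$, choose $R'$ at the Cram\'er point $I_X(R')=r$, and then identify the exponent $2\chi_0^{1/2}(R')-\log d$ with the variational expression in \eqref{11-07-10} via the Legendre substitution $u=(\psi_p')^{-1}(R')$, $s=2u+1$. Your achievability half is complete as written (Cram\'er alone suffices there). The paper differs only in that, for the converse, it does not merely assert $\tilde R=R'+o(1)$ but pins down $R,\tilde R=R'+O(1/n)$ by applying the strong large deviation (Proposition~\ref{11-4-4}) to compute $\tfrac{P_{n,0}(R)}{P_{n,1/2}(R)}e^{-nR/2}=e^{h_1(R)}+o(1)$ and the ratio in \eqref{11-4-1} to $O(1)$ precision; this is exactly the ``more delicate'' control you flag as the principal difficulty, and the paper's resolution of it is the same Bahadur--Rao machinery used in Theorem~\ref{11-07-2T} rather than anything new. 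The case split at $r_{\leftrightarrow}$ that you highlight is handled in the paper only implicitly through the formula (the boundary $s=0$ corresponds to $u=-\tfrac12$, where the Cram\'er minimizer for $P_{n,1/2}$ leaves the admissible range), so your explicit discussion of it is a useful addition.
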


For the following discussion, 
given $r$, we define $R(r)$ and $s_r \le 0$ such that
\begin{align}
-r =\chi_0^{1}(-R(r)), \quad 
\psi_p'(s_r)&=-R(r).
\Label{11-07-20}
\end{align}
This definition is equivalent with
\begin{align}
-r = -\psi_p'(s_r)s_r+\psi_p(s_r), \quad
-R(r)= \psi_p'(s_r).
\end{align}
Since $\psi_p'$ is strictly monotone increasing, $R(r)> H(p)$.

We prepare the following lemmas.
\begin{lemma}\Label{L21}
We have the relations
\begin{align}
(s_r+ \frac{1}{2}) R(r) + \psi_p(s_r) 
&= \min_{0 \le s <1} \Big(\frac{s}{1-s}r+\frac{1}{2}H_{\frac{1+s}{2}}(p) \Big)
\Label{2-1-H} \\
s_r R(r) + \psi_p(s_r) &=-r.
\Label{2-1-H2}
\end{align}
\hfill $\square$\end{lemma}

\begin{lemma}\Label{L21C}
There exist three functions $h_i(r,d_S)$ ($i=1,2,3$) satisfying the following conditions.
Given real numbers $B_i$ with $i=1,2,3,4,5$,
we define
$R_{n,i}:=-R(r)+\frac{B_i}{n}$ with $i=1,2,3,4,5$.
Then,
\begin{align}
&
\log \frac{P_{n,1/2}(R_{n,1})P_{n,1/2}(R_{n,2})}
{P_{n,1}(R_{n,3})^{\frac{1}{2}} P_{n,1}(R_{n,4})^{\frac{1}{2}}
P_{n,0}(R_{n,5})}
=
-n r -\frac{1}{2} \log n 
+B_5-\frac{B_3+B_4}{2} + h_2(r,d_S) +o(1).
 \Label{1-5-2XC}
\end{align}
The convergences of the differences between the LHSs and RHSs 
are compact uniform for $B_i$.

Assume that
$\hat{R}_{n}:= -R(r)+\frac{\hat{B}_n}{n}$,
$R_n=-R(r)+\frac{B_n}{n}$,
and $R_n'=-R(r)+\frac{B_n'}{n}$.
When $B_n$ and $\hat{B}_{n}$ are bounded, and $B_n-B_n'$ converges,
we have
\begin{align}
&\log \epsilon_n(R_{n})
=
\log \frac{P_{n,1/2}(R_{n})^2}{d^n P_{n,1}(R_{n})} \nonumber \\
=&
n \Big(2 \min_{0 \le s <1} \Big(\frac{s}{1-s}r+\frac{1}{2}H_{\frac{1+s}{2}}(p) \Big)- \log d\Big)
-\log n + 2(s_r+ \frac{1}{2}) B_n +2 h_3(r,d_S) 
+ o(1),
\Label{11-07-6C}
\\
& \log \frac{P_{n,0}(R_{n}')}{P_{n,1/2}(R_{n})}
e^{-\frac{n R_{n}}{2}} 
= h_1(r,d_S)+\frac{B_n-B_{n}'}{2}+o(1) \Label{11-4-1gC} ,
\\
& \log a_n(R_{n},R_{n}') 
=-n r+o(1) ,\Label{11-07-5C} 
\\
&\frac{\Bigl(\frac{P_{n,1}(R_{n}')P_{n,0}(R_{n}')}{P_{n,1/2}(R_{n})^2}-1\Bigr)^{\frac{1}{2}}
}
{\Bigl(\frac{P_{n,1}(\hat{R}_n)P_{n,0}(R_{n})}{P_{n,1/2}(\hat{R}_n)^2}
-1\Bigr)^{\frac{1}{2}}
}
=
e^{\frac{2B_n -B_n '-\hat{B}_{n}}{2}}+o(1) \Label{9-7-3bC}.
\end{align}

When $B_n \to - \infty$, $\hat{B}_{n}$ is bounded, and $B_n-B_n'$ converge,
\begin{align}
&\lim_{n \to \infty}\log
\frac{P_{n,0}(R_{n}')}{P_{n,1/2}(R_{n})}
e^{\frac{n R_{n}}{2}} 
\le h_1(r,d_S)
+\lim_{n \to \infty} \frac{B_n-B_n'}{2}
\Label{9-7-2C}\\
&\lim_{n \to \infty}
\frac{\Bigl(\frac{P_{n,1}(R_{n}')P_{n,0}(R_{n}')}{P_{n,1/2}(R_{n})^2}-1\Bigr)^{\frac{1}{2}}
}
{\Bigl(\frac{P_{n,1}(\hat{R}_n)P_{n,0}(R_{n})}{P_{n,1/2}(\hat{R}_n)^2}
-1\Bigr)^{\frac{1}{2}}
}
=0\Label{9-7-4C} .
\end{align}
\hfill $\square$\end{lemma}
The concrete construction of $h_i$ will be given in the proof of Lemma \ref{L21C}.

\begin{proofof}{Theorem \ref{11-07-10T}}\par
\PF{Non-lattice case}
\noindent{\it Step 1:)}
For simplicity, we first consider the case when $d_S=0$, i.e., the non-lattice case.
We fix $\hat{B}$.
Due to the non-lattice property (Lemma \ref{L9-20}), we can choose
$\hat{B}_n$ and $\hat{R}_{n}:= -R(r)+\frac{\hat{B}_n}{n}
 \in \{ \frac{1}{n}\log p_J^n\}_{J \in {\cal D}^n}$
such that $\hat{B}_n \to \hat{B}$.
Then, we will show 
\begin{align}
\lim_{n \to \infty}\frac{-1}{n}\log 
\alpha_n\left(
\left.
\frac{P_{n,1/2}(\hat{R}_n)^2}{d^n P_{n,1}(\hat{R}_n)}
\right|\varphi\right)
=r.\Label{7-6-eqC}
\end{align}
Since $\frac{P_{n,1/2}(\hat{R}_n)^2}{d^n P_{n,1}(\hat{R}_n)}$ is characterized by \eqref{11-07-6C},
\eqref{7-6-eqC} implies the desired argument when $d_S=0$.
Now, we outline the derivation of \eqref{7-6-eqC}. 
To show \eqref{7-6-eqC}, we find upper and lower bounds of \eqref{7-6-eqC} whose limit behaves as $e^{-nr}$.
For this purpose,
in Step 2:), we find its upper bound by using Item (1) of Lemma \ref{Th1},
and 
in Step 3:), we find its lower bound by using Item (2) of Lemma \ref{Th1}.
In Step 4:),
calculating both bounds, we show \eqref{7-6-eqC}.

\noindent{\it Step 2:)}
Assume that $\lim_{n\to \infty}B_n$ converges.
We choose $R_{n}:=-R(r)+\frac{B_n}{n}$.
Using \eqref{11-4-1gC} and \eqref{9-7-3bC},
we have
\begin{align}
\frac{P_{n,0}(R_{n})}{P_{n,1/2}(R_{n})}
e^{\frac{n R_{n}}{2}} 
+
\frac{\Bigl(\frac{P_{n,1}(R_{n})P_{n,0}(R_{n})}{P_{n,1/2}(R_{n})^2}-1\Bigr)^{\frac{1}{2}}
}
{\Bigl(\frac{P_{n,1}(\hat{R}_n)P_{n,0}(R_{n})}{P_{n,1/2}(\hat{R}_n)^2}
-1\Bigr)^{\frac{1}{2}}
}
=
e^{h_1(r,0)}+e^{\frac{B_n- \hat{B}}{2}}+o(1).\Label{9-7-6C}
\end{align}
Given $\delta>0$,
due to the non-lattice property (Lemma \ref{L9-20}), 
we chose $B_{\alpha,n}$ such that
$R_{\alpha,n}:=-R(r)+\frac{B_{\alpha,n}}{n}$
belongs to 
$\{ \frac{1}{n}\log p_J^n| \frac{1}{n}\log p_J^n < R_n \}_{J \in {\cal D}^n}$ and
\begin{align}
\lim_{n \to \infty}B_{\alpha,n}= \hat{B}+2 \log (1-e^{h_1(r,0)})+\delta
\Label{2-22-D1C}.
\end{align}
Then, in the same way as Step 2:) of the proof of Theorem \ref{11-07-2T}, we can show that $R_{\alpha,n}$ satisfies \eqref{2-22-A}.

\noindent{\it Step 3:)}
We choose $B_{n}$ as $R_n=-R(r)+\frac{B_n}{n}$.
Then, we choose $R_n'$ as the maximum element in 
$\{ \frac{1}{n}\log p_J^n| \frac{1}{n}\log p_J^n < R_n \}_{J \in {\cal D}^n}$.
So, the non-lattice property guarantees $\lim_{n \to \infty}n(R_n-R_n')=0$. 
When $B_n \to -\infty$,
\eqref{9-7-4C} and \eqref{9-7-2C} imply that
\begin{align}
\lim_{n \to \infty}
\frac{P_{n,0}(R_{n}')}{P_{n,1/2}(R_{n})}
e^{\frac{n R_{n}}{2}} 
+
\frac{\Bigl(\frac{P_{n,1}(R_{n}')P_{n,0}(R_{n}')}{P_{n,1/2}(R_{n})^2}-1\Bigr)^{\frac{1}{2}}
}
{\Bigl(\frac{P_{n,1}(\hat{R}_n)P_{n,0}(R_{n})}{P_{n,1/2}(\hat{R}_n)^2}
-1\Bigr)^{\frac{1}{2}}
}
\stackrel{(a)}{=}
e^{h_1(r,0)} <1\Label{9-7-5bC}.
\end{align}
where 
$(a)$ follows from 
$h_1(r,0)= \chi_1^0(R(r))-\chi_1^{\frac{1}{2}}(R(r))
=\log \frac{{\psi_p'}^{-1}(R(r))+\frac{1}{2}  }{{\psi_p'}^{-1}(R(r))+1}
<0$.

When $ B_{n}$ is bounded, the combination of \eqref{11-4-1gC}  and \eqref{9-7-3bC} implies that
\begin{align}
\frac{P_{n,0}(R_{n}')}{P_{n,1/2}(R_{n})}
e^{\frac{n R_{n}}{2}} 
+
\frac{\Bigl(\frac{P_{n,1}(R_{n}')P_{n,0}(R_{n}')}{P_{n,1/2}(R_{n})^2}-1\Bigr)^{\frac{1}{2}}
}
{\Bigl(\frac{P_{n,1}(\hat{R}_n)P_{n,0}(R_{n})}{P_{n,1/2}(\hat{R}_n)^2}
-1\Bigr)^{\frac{1}{2}}
}
=e^{h_1(r,0)}+ e^{\frac{B_{n}- \hat{B}}{2}}+o(1). \Label{9-6-2C}
\end{align}
Then, 
due to the non-lattice property (Lemma \ref{L9-20}), 
we can chose $B_{\beta,n}$ such that
$R_{\beta,n}:=-R(r)+\frac{B_{\beta,n}}{n}$
belongs to 
$\{ \frac{1}{n}\log p_J^n| \frac{1}{n}\log p_J^n < R_n \}_{J \in {\cal D}^n}$ and
\begin{align}
\lim_{n \to \infty}B_{\beta,n} = \hat{B}+2 \log (1-e^{h_1(r,0)})-\delta \Label{2-22-D1BC}.
\end{align}
In the same way as Step 3:) of the proof of Theorem \ref{11-07-2T}, we can show that
$R_{\beta,n}$
satisfies \eqref{2-22-B}.

\noindent{\it Step 4:)}
\eqref{2-22-D1C} and \eqref{2-22-D1BC} show that 
the sequences $B_{\alpha,n}$ and $B_{\beta,n}$ converge to constants as well as $\hat{B}_n$.
Thus, \eqref{11-07-5C} implies that
\begin{align}
\lim_{n \to \infty}\frac{-1}{n}\log
a_n(R_{\alpha,n},\hat{R}_n)
=
\lim_{n \to \infty}\frac{-1}{n}\log
a_n(R_{\beta,n},\hat{R}_n)
=r.\Label{8-27A5C}
\end{align}
Combining \eqref{2-22-A} and \eqref{2-22-B}, we obtain \eqref{7-6-eqC}.

\PF{Lattice case}
The lattice case ($d_S>0$) can be shown in the same way as the proof of Theorem \ref{11-07-2T}
by replacing $-H(p)+\frac{A_\epsilon}{\sqrt{n}}$ and $g_i(d_S)$
by $-R(r)$ and $h_i(r,d_S)$.

Next, we proceed to the lattice case with $d_S>0$.
Similar to the proof of Theorem \ref{11-07-2T},
the different points from the non-lattice case 
are the following.
Firstly, we notice that the limit $\lim_{n \to \infty}\hat{B}_n$ does not necessarily exist.
However, we can choose $\hat{B}_n$ such that 
$\hat{B}_n$ is bounded.
The above proof works even with such a bounded case.
The second point is the relation $\lim_{n \to \infty}n(R_n-R_n')=d_S>0$, which appears only in Steps 2:) and 3:).
In these steps, we need to replace $h_1(r,0)$ by $h_1(r,d_S)$.
In Step 2:), the relations \eqref{9-7-6C} and \eqref{2-22-D1C} are replaced by
\begin{align}
&\frac{P_{n,0}(R_{n})}{P_{n,1/2}(R_{n})}
e^{\frac{n R_{n}}{2}} 
+
\frac{\Bigl(\frac{P_{n,1}(R_{n})P_{n,0}(R_{n})}{P_{n,1/2}(R_{n})^2}-1\Bigr)^{\frac{1}{2}}
}
{\Bigl(\frac{P_{n,1}(\hat{R}_n)P_{n,0}(R_{n})}{P_{n,1/2}(\hat{R}_n)^2}
-1\Bigr)^{\frac{1}{2}}
}
=
e^{h_1(r,d_S)+\frac{d_S}{2}}+e^{\frac{B_n- \hat{B}}{2}+d_S}+o(1).
\Label{9-7-6BC}
\\
&B_{\alpha,n}:= \hat{B}-2 d_S+2 \log (1-e^{h_1(r,d_S)+\frac{d_S}{2}})
+\delta
\Label{2-22-D1B-C}.
\end{align}
In Step 3:), the relations \eqref{9-7-5bC}, \eqref{9-6-2C}, and \eqref{2-22-D1BC}
are replaced by
\begin{align}
&\lim_{n \to \infty}
\frac{P_{n,0}(R_{n}')}{P_{n,1/2}(R_{n})}
e^{\frac{n R_{n}}{2}} 
+
\frac{\Bigl(\frac{P_{n,1}(R_{n}')P_{n,0}(R_{n}')}{P_{n,1/2}(R_{n})^2}-1\Bigr)^{\frac{1}{2}}
}
{\Bigl(\frac{P_{n,1}(\hat{R}_n)P_{n,0}(R_{n})}{P_{n,1/2}(\hat{R}_n)^2}
-1\Bigr)^{\frac{1}{2}}
}
\le e^{h_1(r,d_S)+d_S} <1\Label{9-7-5b-2C},
\\
&\frac{P_{n,0}(R_{n}')}{P_{n,1/2}(R_{n})}
e^{\frac{n R_{n}}{2}} 
+
\frac{\Bigl(\frac{P_{n,1}(R_{n}')P_{n,0}(R_{n}')}{P_{n,1/2}(R_{n})^2}-1\Bigr)^{\frac{1}{2}}
}
{\Bigl(\frac{P_{n,1}(\hat{R}_n)P_{n,0}(R_{n})}{P_{n,1/2}(\hat{R}_n)^2}
-1\Bigr)^{\frac{1}{2}}
}
=e^{h_1(r,d_S)+d_S}+ e^{\frac{B_{n}- \hat{B}}{2}+d_S}+o(1), \Label{9-6-2-2C}
\\
&B_{\beta,n}:= \hat{B}-2d_S+2 \log (1-e^{h_1(r,d_S)+d_S})-\delta \Label{2-22-D1B-2C}.
\end{align}
Hence, the sequence $B_{\beta,n}$ is bounded as well as $\hat{B}_n$ and $B_{\alpha,n}$.
Thus, we obtain \eqref{8-27A5C}.
Combining \eqref{2-22-A} and \eqref{2-22-B}, we obtain \eqref{7-6-eqC} even in the lattice case
$d_S>0$.
\end{proofof}

\begin{proofof}{Lemma \ref{L21}}From 
Since $\psi_p'$ is monotone decreasing and $\psi_p'(0)=-H(p)$,
$R(r) $ 

Relation \eqref{9-17-eq1}, Condition \eqref{11-07-20}, and Proposition \ref{11-4-4}, 
we have
\begin{align*}
-r=\chi_0^{1}(-R(r)) 
= R(r) {\psi_p'}^{-1}(-R(r))+\psi_p(({\psi_p'}^{-1}(-R(r)))
= R(r) s_r+\psi_p(s_r).
\end{align*}
Thus,
\begin{align}
-r=-s_r \psi_p'(s_r)+\psi_p(s_r), 
\end{align}
which implies that
$\frac{d}{dt} \frac{2t-1}{2t} r - \frac{\psi_p(t)}{2t} |_{t=s_r}=0$.
We also have
$\frac{d}{dt}(-\frac{2t+1}{2t} r - \frac{\psi_p(t)}{2t})
=\frac{\psi_p(t)-t \psi_p'(t)+1}{2t^2}$.
The derivative of denominator is $-t \psi_p''(t) \ge 0$ for $t \le 1$.
So, the derivative $\frac{d}{dt}(-\frac{2t+1}{2t} r - \frac{\psi_p(t)}{2t})
$ is non-negative if and only if $t \ge s_r $.
So, the minimum $ \min_{t \le 0} -\frac{2t+1}{2t} r - \frac{\psi_p(t)}{2t} $
is realized when $t=s_r$.
Hence,
\begin{align*}
&\chi_0^{1/2}(-R(r)) 
= R(r) ({\psi_p'}^{-1}(-R(r))+\frac{1}{2})+\psi_p({\psi_p'}^{-1}(-R(r)) )
= \frac{R(r)}{2} -r \\
=& -\frac{2 s_r+1}{2 s_r} r - \frac{\psi_p(s_r)}{2 s_r}
= \min_{t \le 0} -\frac{2t+1}{2t} r - \frac{\psi_p(t)}{2t} \\
=& \min_{t \le 0} -\frac{2t+1}{2t} r + \frac{H_{1+t}(p)}{2}
= \min_{0 \le s <1} \frac{s}{1-s}r+\frac{1}{2}H_{\frac{1+s}{2}}(p) ,
\end{align*}
where $t=-\frac{1-s}{2}$.
\end{proofof}

\begin{proofof}{Lemma \ref{L21C}}
\noindent {\it Step 1:)}
Similar to the proof of Lemma \ref{L2-1}, we show the desired relations by applying Proposition \ref{11-4-4} in Appendix \ref{A1}. 
In Step 1:), we prepare several relations and give the form of the function $h_i$.
We reuse \eqref{1-32-1} and \eqref{1-32-2} in the proof of Lemma \ref{L2-1}.
Using Proposition \ref{11-4-4}, for $R< -H(p)$, we have the following relation.
\begin{align}
\log P_{n,1}^c(R)
=& 
n (  -R {\psi_p'}^{-1}(R)  + \psi_p({\psi_p'}^{-1}(R))
-\frac{1}{2}\log n +\chi_1^{1}(R)
+o(1)
\Label{1-32-3}.
\end{align}
Using $s_r= {\psi_p'}^{-1}(-R(r))$
and $\Delta \stackrel{\rm def}{=}  n {\psi_p'}^{-1}(-R(r)+\frac{B}{n})-s_r$, 
for any real number $B$, 
we have
\begin{align}
\psi_p({\psi_p'}^{-1}(-R(r)+\frac{B}{n}))
&= \psi_p(s_r)+\psi_p'(s_r)\frac{\Delta}{n}+ o(\frac{1}{n}) \\
-(-R(r)+\frac{B}{n}) {\psi_p'}^{-1}(-R(r)+\frac{B}{n})
&=R(r)(s_r+ \frac{\Delta}{n})- \frac{B}{n}s_r+ o(\frac{1}{n}).
\end{align}
Since $ \psi_p'(s_r)= - R(r)$,
we have
\begin{align}
-\Big(-R(r)+\frac{B}{n}\Big) {\psi_p'}^{-1}\Big(-R(r)+\frac{B}{n} \Big)
+\psi_p \Big({\psi_p'}^{-1} \Big(-R(r)+\frac{B}{n}\Big)\Big)
=
R(r) s_r+ \psi_p(s_r)
+ \frac{B}{n}s_r+ o(\frac{1}{n}).\Label{2-02-1}
\end{align}
Applying \eqref{2-02-1} to \eqref{1-32-1}, \eqref{1-32-2}, and \eqref{1-32-3}, 
we have
\begin{align}
\log P_{n,1}^c(-R(r)+\frac{B}{n})
=& 
n (s_r R(r) + \psi_p(s_r))-\frac{1}{2}\log n - s_r B  +\chi_1^{1}(-R(r))+ o(1)
\Label{1-31-5A} \\
\log P_{n,1/2}(-R(r)+\frac{B}{n})
=& 
n ((s_r+ \frac{1}{2}) R(r) + \psi_p(s_r))
-\frac{1}{2}\log n - (s_r+ \frac{1}{2}) B +\chi_1^{\frac{1}{2}}(-R(r))+ o(1)
\Label{1-31-5B} \\
\log P_{n,0}(-R(r)+\frac{B}{n})
=& 
n ((s_r+ 1) R(r) + \psi_p(s_r))
-\frac{1}{2}\log n - (s_r+ 1) B +\chi_1^{0}(-R(r))+ o(1)
\Label{1-31-6B} .
\end{align}
Now, we choose 
\begin{align}
h_{1}(r,d_S)&:= \chi_{1}^{0}(R(r)) - \chi_{1}^{\frac{1}{2}}(R(r)) \\
h_{2}(r,d_S)&:= 2 \chi_{1}^{\frac{1}{2}}(R(r)) - \chi_{1}^{0}(R(r)) \\
h_{3}(r,d_S)&:=\chi_{1}^{\frac{1}{2}}(R(r)).
\end{align}

\noindent {\it Step 2:)}
\PF{Proofs of \eqref{1-5-2XC} - \eqref{11-07-5C}}
Combining \eqref{1-31-5A}, \eqref{1-31-5B}, \eqref{1-31-6B},
and \eqref{2-1-H2} of Lemma \ref{L21},
we obtain \eqref{1-5-2XC}. 
Here, the compact uniformness of these convergence is guaranteed by the compact uniformness of the convergences in Proposition \ref{11-4-4}.
Combining \eqref{1-31-5B} and \eqref{2-1-H} of Lemma \ref{L21},
we obtain \eqref{11-07-6C}.
Combining \eqref{1-31-5B} and \eqref{1-31-6B},
we obtain \eqref{11-4-1gC}.
Using \eqref{1-5-2XC}, we obtain \eqref{22-E1}, \eqref{22-E2}, and \eqref{22-E3} in the same way as the proof of Lemma \ref{L2-1}.
Thus, combining \eqref{1-31-5A}, we obtain \eqref{11-07-5C}.

\PF{Proof of \eqref{9-7-3bC}}
When $B_n$ and $\hat{B}_{n}$ are bounded, and $B_n-B_n'$ converges,
using the relation \eqref{1-5-2XC}, we have
\begin{align}
\log \frac{P_{n,1}(R_{n}')P_{n,0}(R_{n}')}{P_{n,1/2}(R_{n})^2}
&= nr+ \frac{1}{2}\log n -(B_n'-B_n)
+h_2(r, d_S) +o(1), \Label{9-8-8C} \\
\log \frac{P_{n,1}(\hat{R}_n)P_{n,0}(R_{n})}{P_{n,1/2}(\hat{R}_n)^2}
&= nr+\frac{1}{2}\log n -(B_{n}-\hat{B})
+h_2(r, d_S) +o(1) .\Label{9-8-7C}
\end{align}
Therefore, we obtain \eqref{9-7-3bC}.

\PF{Proof of \eqref{9-7-2C}}
The relation \eqref{9-9-2} of the proof of \eqref{9-7-2} holds even in the current situation.
When $B_n \to - \infty$ and $B_n-B_n'$ converges,
since $\chi_1^0(R)-\chi_1^{\frac{1}{2}}(R)$ is monotone increasing for $R$,
we have
\begin{align}
\chi_1^0(R_n')
-\chi_1^{\frac{1}{2}}(R_n)
& \le 
\chi_1^0(- R(r))
-\chi_1^{\frac{1}{2}}(- R(r)) 
= h_1(r,d_S) \Label{9-9-3}.
\end{align}
Combinig \eqref{9-9-2} and \eqref{9-9-3}, we obtain \eqref{9-7-2C}.

\PF{Proof of \eqref{9-7-4C}}
\eqref{9-7-4C} can be shown as the same way as \eqref{9-7-4}.
The different point is \eqref{8-27T}, which is
replaced as follows.
Using \eqref{1-32-1} and \eqref{1-32-2},
we have
\begin{align}
\log \frac{P_{n,0}(\hat{R}_{n})}{P_{n,1/2}(\hat{R}_n)^2}
&= nr+\frac{1}{2}\log n 
+h_2(r, d_S) +o(1) .\Label{8-27TC}
\end{align}
Here, \eqref{8-27S} holds even in the curret situation.
Hence, using the same discussion as \eqref{9-9-1}, we obtain \eqref{9-7-4C}.
\end{proofof}

\subsection{Application to hypothesis testing under separable POVMs}
\Label{sec separableb}
Now, we choose the dimension $d\stackrel{\rm def}{=} \min (d_A,d_B)$
and the pure state $\varphi= \sum_{i=1}^d \sqrt{\lambda_i} 
|i\rangle \in \mathbb{C}^d$
by using the Schmidt coefficient $\{ \lambda
 _i \}_{i=1}^{d}$ of $\ket{\Psi}$. 
Then, we have the following proposition.
\begin{proposition}[\protect{\cite[Theorem 5]{OH10}}]\Label{theorem previous paper 1}
\begin{equation}
\beta_{n,sep}(\alpha | \Psi\| \rho_{mix})=
\bar{d}^{-n} \beta_n\left( \alpha |\varphi \right ),
\Label{2-10-1}
\end{equation}
where $\bar{d}$ is defined as 
\begin{equation}\Label{eq def d max}
\bar{d} \stackrel{\rm def}{=}\max \left( d_A,  d_B \right).
\end{equation}
\hfill $\square$\end{proposition}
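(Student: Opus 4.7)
The plan is to construct a bijective correspondence between feasible separable POVM elements $T_n$ on $\Hi_{AB}^{\otimes n}$ and feasible POVM elements $S_n$ on $(\mathbb{C}^d)^{\otimes n}$ that matches the type-1 errors exactly and relates the type-2 errors by the factor $d_{max}^{-n}$. Without loss of generality assume $d=d_A\le d_B=d_{max}$.

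First I would exploit the structural fact that every $T_n\in\mathrm{sep}$ with $0\le T_n\le I$ admits a decomposition $T_n=\sum_k \ket{\psi_k}\bra{\psi_k}$ with $\ket{\psi_k}=\ket{a_k}_A\otimes\ket{b_k}_B$ (unnormalized). Expanding $\ket{\Psi}^{\otimes n}=\sum_{I\in\{1,\ldots,d\}^n}\sqrt{\lambda_I}\ket{I}_A\ket{I}_B$ in the Schmidt basis gives
\[
\langle\psi_k|\Psi\rangle^{\otimes n}=\sum_{I\in\{1,\ldots,d\}^n}\sqrt{\lambda_I}\,\langle a_k|I\rangle\langle b_k|I\rangle,
\]
where $\lambda_I=\prod_j\lambda_{i_j}$. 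Defining $\ket{y_k}\in(\mathbb{C}^d)^{\otimes n}$ by $\langle I|y_k\rangle:=\overline{\langle a_k|I\rangle\langle b_k|I\rangle}$ and setting $S_n:=\sum_k\ket{y_k}\bra{y_k}$ yields $\langle\varphi|^{\otimes n}S_n\ket{\varphi}^{\otimes n}=\langle\Psi|^{\otimes n}T_n\ket{\Psi}^{\otimes n}$, so the type-1 constraints coincide exactly.

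Next I would establish the key identity
\[
(d_Ad_B)^{-n}\Tr[T_n]\;=\;d_{max}^{-n}\cdot\max_{L\in\mathbb{Z}_2^{d^n}}\langle\phi_L^n|S_n|\phi_L^n\rangle
\]
(after suitable normalization and at extremal points), which would equate the two type-2 errors up to the claimed factor. The maximum over $L$ is natural here because
$\langle\phi_L^n|y_k\rangle=d^{-n/2}\sum_J(-1)^{L_J}\overline{\langle a_k|J\rangle\langle b_k|J\rangle}$,
and optimizing the sign pattern $L$ extracts the $\ell^1$-norm of the coefficient vector of $\ket{y_k}$, which is related to $\|a_k\|\|b_k\|$ through Cauchy--Schwarz. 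The factor $d_{max}^{-n}$ emerges because the additional $d_{max}-d$ dimensions of $\Hi_B$ act as ancillary degrees of freedom that can absorb the sign choices when translating between the two problems.

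The main obstacle will be the converse construction: given an arbitrary $0\le S_n\le I$ on $(\mathbb{C}^d)^{\otimes n}$ satisfying the type-1 constraint, one must build a separable $T_n$ on $\Hi_{AB}^{\otimes n}$ with $T_n\le I$ achieving the target type-2 value with equality. This requires decomposing $S_n$ in a form adapted simultaneously to the product structure on $\Hi_{AB}$ and to the $\{\ket{\phi_L^n}\}$ family, and using the extra ancillary dimensions of $\Hi_B$ to encode the optimal sign pattern $L^{*}$ as mutually distinguishable product states. Verifying the POVM bound $T_n\le I$ in this construction and extracting the exact scaling constant $d_{max}^{-n}$ is where the detailed constructive argument from \cite{OH10} becomes essential.
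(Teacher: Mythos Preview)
The paper does not prove this proposition; it is cited verbatim as Theorem~5 of \cite{OH10} and used as a black box. So there is no ``paper's proof'' to compare against here. That said, your sketch has a concrete flaw worth pointing out.

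Your ``key identity''
\[
(d_Ad_B)^{-n}\Tr[T_n]\;=\;d_{max}^{-n}\cdot\max_{L}\langle\phi_L^n|S_n|\phi_L^n\rangle
\]
does not give the relation you need. The type-2 error on the bipartite side is $1-(d_Ad_B)^{-n}\Tr T_n$, while $d_{max}^{-n}$ times the type-2 error of the composite problem is $d_{max}^{-n}\bigl(1-\min_L\langle\phi_L^n|S_n|\phi_L^n\rangle\bigr)$. Equating these forces
\[
(d_Ad_B)^{-n}\Tr T_n \;=\; 1-d_{max}^{-n}+d_{max}^{-n}\min_L\langle\phi_L^n|S_n|\phi_L^n\rangle,
\]
which has the opposite extremum (a $\min$, not a $\max$) and an additive shift $1-d_{max}^{-n}$ that your identity misses. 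Consequently your Cauchy--Schwarz step, which upper-bounds $|\langle\phi_L^n|y_k\rangle|^2$ by $\|a_k\|^2\|b_k\|^2$, goes in the wrong direction for what is actually needed.

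There is a second gap: your map $T_n\mapsto S_n$ does not preserve the operator inequality $S_n\le I$. From $T_n\le I_{AB}^n$ one gets $\sum_k|a_k\rangle\langle a_k|\otimes|b_k\rangle\langle b_k|\le I$, but the rank-one vectors $|y_k\rangle$ with entries $\overline{\langle a_k|J\rangle\langle b_k|J\rangle}$ can in principle pile up in a small subspace of $(\mathbb{C}^d)^{\otimes n}$; nothing in your construction rules this out. The actual argument in \cite{OH10} sets up the correspondence differently, reducing the separable optimization by symmetry to a problem on a single copy of $\mathbb{C}^d$ parametrized by real unit vectors, and then identifies that reduced problem with the composite test against the family $\{|\phi_L^n\rangle\}$. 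Your outline is in the right spirit but would need to be rebuilt around the correct error relation before it can go through.
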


Combining (\ref{2-10-1}) and Theorem \ref{11-07-2T}, 
we find that $\beta_{n,sep}\left(\epsilon |\Psi\|\rho_{mix} \right) $ 
can be given by (\ref{2-5-8}) because $\log d +\log d_{max}= \log d_A d_B$.
Similarly,
combining (\ref{2-10-1}) and Theorem \ref{11-07-10T}, 
we find that $ \beta_{n,sep}\left(e^{-nr} |\Psi\|\rho_{mix} \right)$ can be given by (\ref{2-5-10}).

\section{Hypothesis testing under two-way LOCC POVM} \Label{sec two-way}
\subsection{Construction of two-round classical communication protocol}
In this section, 
we consider $C=\leftrightarrow$, that is, the local
hypothesis testing under two-way LOCC POVMs. 
The previous paper \cite{OH10} proposed a specific class of two-round classical communication two-way LOCC protocols
that are not reduced to one-way LOCC.
In this subsection, we review their construction.
Then, in the latter subsections, we show that they can achieve the Hoeffding bound and Stein-Strassen bound for the class $C=sep$
by the following protocol.

For the entangled state $\ket{\tilde{\Psi}}\stackrel{\rm def}{=} \sum_{x \in {\cal X}}\sqrt{\lambda_x}
|x\rangle \otimes |x \rangle $
and the white noise state (the completely mixed state) $\rho_{mix}$,
For a given set $\Omega$,
a collection $\{m_\omega\}_{\omega \in \Omega}$ of non-negative measures on ${\cal X}$
is called a {\it subnormalized measure collection} on ${\cal X}$
when $\sum_{\omega \in \Omega}m_\omega(x) \le 1$ for any $x\in {\cal X}$.
Here, $\omega \in \Omega$ is an index indicating the measure $m_\omega$.
For a measure $m_\omega$ on ${\cal X}$, 
we denote the support of $m_\omega$ and its cardinality by ${\cal X}_\omega$ and $|m_\omega|$
and define the operator 
\begin{equation}\Label{eq def M omega}
		M_\omega \stackrel{\rm def}{=}
	\sum_{x \in {\cal X}} m_\omega(x) \ket{x}\bra{x} .
\end{equation} 
Then, for a collection $\{m_\omega\}_{\omega \in \Omega}$ of 
non-negative measures on ${\cal X}$,
we define the operator 
\begin{equation}\Label{eq def M omega2}
M^c\stackrel{\rm def}{=} I-\sum_{\omega \in \Omega}M_\omega. 
\end{equation} 
Then, we can define the POVM $M\stackrel{\rm def}{=}\{M_\omega \} \cup \{M^c\}$.
Using the collection $\{m_\omega\}_{\omega \in \Omega}$, 
we give a tree-step LOCC protocol to distinguish the two states
$\ket{\tilde{\Psi}}$ and $\rho_{mix}$ as follows: 

\begin{enumerate}
 \item Alice measures her state with a POVM $M$.
              When Alice's measurement outcome corresponds to $M^c$,
Alice and Bob stop the protocol and
       conclude the unknown state to be $\rho_{mix}$. Otherwise, they
       continue the protocol.

\item At the second step, Bob measures his state with a POVM 
       $\{ N_j^\omega\}_{j=0}^{|m_\omega|}$ depending on Alice's
      measurement outcome $\omega$.
      For $j \in \{1, \dots, |m_\omega| \}$, $N_j^\omega$ is defined as
      $N_j^\omega=\ket{\xi_j^\omega}\bra{\xi_j^\omega}$, where
      $\{\ket{\xi_j^\omega}\}_{j=1}^{|m_\omega|}$ is a mutually unbiased basis of the
      subspace $span\{\ket{h} \}_{h \in {\cal X}_\omega}$. Then, $N_0^\omega$ is defined as
      $N_0^\omega \stackrel{\rm def}{=} I_B - \sum_{j=1}^{|m_\omega|}
      N_j^\omega$.
      When Bob observes the measurement outcome $j=0$, Alice and Bob stop the protocol
      and conclude the unknown state to be $\rho_{mix}$. Otherwise, they
      continue the protocol.

\item At the third step, Alice measures her states with a two-valued POVM $\{
      O^{\omega j}, I_A-O^{\omega j} \}$. 
      Here, the POVM element $O^{\omega j}$ is chosen as Alice's state after Bob's measurement
      when the given state is $\ket{\tilde{\Psi}}$.
      Hence, $O^{\omega j}$ is defined as
      \begin{equation}\Label{eq def O omega j}
       O^{\omega j} \stackrel{\rm def}{=} \frac{\sqrt{M_\omega
      \sigma_A}\left(\ket{\xi_j^\omega}\bra{\xi_j^\omega} \right)^T \sqrt{M_\omega
      \sigma_A}}{\bra{\xi_j^\omega} M_\omega \sigma_A
      \ket{\xi_j^\omega}},
      \end{equation}
      where $\sigma_A \stackrel{\rm def}{=}\Tr _B \ket{\tilde{\Psi}}\bra{\tilde{\Psi}}$, 
      and $T$ is the transposition in the Schmidt basis of $\ket{\tilde{\Psi}}$.  
      When Alice's measurement result $k$ is $0$, Alice
      and Bob conclude the unknown state to be $\ket{\tilde{\Psi}}$;
      otherwise, they conclude the unknown state to be $\rho_{mix}$.
\end{enumerate}
Here, the above two-round classical communication protocol depends only on
the subnormalized measure collection $\{m_\omega\}_{\omega \in \Omega}$ on ${\cal X}$.
Hence, we denote the test given above by $T[\{m_\omega\}_{\omega \in \Omega}]$.
Then, we have the following proposition.

\begin{proposition}[\protect{\cite[Lemma 4]{OH10}}]\Label{P2-16}
The first and type-2 error probabilities of the test $T[\{m_\omega\}_{\omega \in \Omega}] $
are evaluated as
\begin{align}
\beta(T[\{m_\omega\}_{\omega \in \Omega}])
=& \Tr T[\{m_\omega\}_{\omega \in \Omega}] \rho_{mix} 
=
\sum_{\omega \in \Omega}
\frac{|m_\omega|\cdot \sum_{x \in {\cal X}} \lambda_x (m_\omega(x))^2 }
{d_Ad_B \sum_{x \in {\cal X}} \lambda_x m_\omega(x)}, \\
\alpha(T[\{m_\omega\}_{\omega \in \Omega}])
=& \langle \tilde{\Psi}| (I-T[\{m_\omega\}_{\omega \in \Omega}])|\tilde{\Psi}\rangle 
= \Tr \Tr_B|\tilde{\Psi}\rangle\langle \tilde{\Psi}| M^c \nonumber \\
=& 1-\sum_{\omega \in \Omega}\sum_{x \in {\cal X}} \lambda_x m_\omega(x).
\end{align}
\hfill $\square$\end{proposition}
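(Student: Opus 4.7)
The plan is to collapse the three-round protocol into a single ``accept'' POVM element $T[\{m_\omega\}_{\omega\in\Omega}]$ and then evaluate $\alpha$ on the pure input $\ket{\tilde\Psi}$ and $\beta$ on $\rho_{mix}$ separately, exploiting that $M_\omega$, $\sigma_A$ and $\sqrt{M_\omega\sigma_A}$ are all diagonal in the Schmidt basis $\{\ket{x}\}_{x\in{\cal X}}$ and hence mutually commute. Composing Alice's first Kraus operator $M_\omega^{1/2}$, Bob's rank-one Kraus operator $\ket{\xi_j^\omega}\bra{\xi_j^\omega}$, and Alice's final Kraus operator $(O^{\omega j})^{1/2}$ along the accepting branch gives
\[
T[\{m_\omega\}_{\omega\in\Omega}]
=\sum_{\omega\in\Omega}\sum_{j=1}^{|m_\omega|}
\bigl(M_\omega^{1/2}\, O^{\omega j}\, M_\omega^{1/2}\bigr)\otimes
\ket{\xi_j^\omega}\bra{\xi_j^\omega}.
\]

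For the type-1 error I track the three conditional steps applied to $\ket{\tilde\Psi}$. After Alice's outcome $\omega$, Bob's reduced state equals $\sum_x \lambda_x m_\omega(x)\ket{x}\bra{x}$, which is supported in $\mathrm{span}\{\ket{h}\}_{h\in\omega}$; since $\sum_{j=1}^{|m_\omega|}\ket{\xi_j^\omega}\bra{\xi_j^\omega}$ is the orthogonal projection onto that subspace, Bob's rejection outcome $N_0^\omega$ has probability zero. After Bob's outcome $j$, Alice's (unnormalized) conditional vector is $\sqrt{M_\omega\sigma_A}\,\ket{\eta_j^\omega}$, where $\ket{\eta_j^\omega}\bra{\eta_j^\omega}=(\ket{\xi_j^\omega}\bra{\xi_j^\omega})^T$; by the very definition \eqref{eq def O omega j} of $O^{\omega j}$, this operator is precisely the rank-one projector onto that vector, so Alice's final step accepts with probability one. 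Hence the sole source of rejection on $\ket{\tilde\Psi}$ is Alice's first outcome $M^c$, giving
\[
\alpha(T[\{m_\omega\}])
=\Tr[\sigma_A M^c]
=1-\sum_{\omega\in\Omega}\sum_{x\in{\cal X}}\lambda_x m_\omega(x).
\]

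For the type-2 error, $\rho_{mix}=I_{AB}/(d_A d_B)$ yields $\beta=\Tr(T[\{m_\omega\}])/(d_A d_B)$. Tracing out Bob converts each $\ket{\xi_j^\omega}\bra{\xi_j^\omega}$ to the scalar $1$, leaving $\Tr(M_\omega O^{\omega j})$ on Alice's side. Using the commutativity identity $\sqrt{M_\omega\sigma_A}\,M_\omega\,\sqrt{M_\omega\sigma_A}=M_\omega^{2}\sigma_A$ I get
\[
\Tr(M_\omega O^{\omega j})
=\frac{\bra{\eta_j^\omega} M_\omega^{2}\sigma_A\ket{\eta_j^\omega}}
{\bra{\eta_j^\omega} M_\omega\sigma_A\ket{\eta_j^\omega}}.
\]
The mutually unbiased property $|\braket{\xi_j^\omega}{x}|^{2}=1/|m_\omega|$ for $x\in\omega$ (and $0$ otherwise) makes both numerator and denominator share a common factor $1/|m_\omega|$ that cancels, so the ratio is independent of $j$ and equals $(\sum_x\lambda_x m_\omega(x)^2)/(\sum_x\lambda_x m_\omega(x))$. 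Summing over the $|m_\omega|$ values of $j$ and then over $\omega$ produces the claimed formula for $\beta$.

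The main obstacle is bookkeeping rather than analysis: one must handle the transpose $T$ consistently when passing from $\ket{\xi_j^\omega}$ to $\ket{\eta_j^\omega}$ and verify that this transpose leaves the relevant scalars invariant because every operator sandwiched between $\bra{\eta_j^\omega}$ and $\ket{\eta_j^\omega}$ is diagonal in the Schmidt basis. Once this is in place, the whole proof reduces to the MUB identity together with the defining property of $O^{\omega j}$ as the rank-one projector onto Alice's post-measurement state for the entangled input.
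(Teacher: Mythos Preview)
Your proof is correct. Note, however, that the present paper does not actually prove this proposition: it is quoted verbatim as Lemma~4 of \cite{OH10} (with the roles of $\alpha$ and $\beta$ swapped), so there is no in-paper argument to compare against. Your direct computation---writing the accept operator as $T=\sum_{\omega,j}(M_\omega^{1/2}O^{\omega j}M_\omega^{1/2})\otimes\ket{\xi_j^\omega}\bra{\xi_j^\omega}$, using that $M_\omega$ and $\sigma_A$ are simultaneously diagonal to identify $O^{\omega j}$ as the projector onto Alice's conditional state, and invoking the unbiasedness $|\braket{\xi_j^\omega}{x}|^2=1/|m_\omega|$ to make $\Tr(M_\omega O^{\omega j})$ independent of $j$---is exactly the natural route and matches the argument of \cite{OH10}. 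The transpose bookkeeping you flag is handled correctly: since every operator appearing between $\bra{\eta_j^\omega}$ and $\ket{\eta_j^\omega}$ is diagonal in the Schmidt basis, $\bra{\eta_j^\omega}D\ket{\eta_j^\omega}=\bra{\xi_j^\omega}D\ket{\xi_j^\omega}$ and the transpose is immaterial for the scalars in both the $\alpha$ and $\beta$ computations.
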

In the above proposition, the type-1 and type-2 error probabilities 
are swapped to each other from Lemma 4 of \cite{OH10}.

\subsection{Hoeffding bound}
Now, we apply the above two-round classical communication protocol to the case of 
$\ket{\tilde{\Psi}}=\ket{\Psi}^{\otimes n}$
with $\ket{{\Psi}}\stackrel{\rm def}{=} \sum_{x \in {\cal X}}\sqrt{\lambda_x}
|x,x \rangle $.
Then, we give a two-round classical communication protocol to achieve the Hoeffding bound 
$\sup _{0 \le s <1} \frac{-2s}{1-s}r-H_{\frac{1+s}{2}}(\Psi) +\log d_A d_B$
for a given $r$ as follows.
When $r \ge \log d - \frac{1}{4} H_{1/2}(\Psi)'$,
we have
$\sup_{0\le s \le 1}\frac{-2sr}{1-s}-H_{\frac{1+s}{2}}(\Psi)+\log d_A d_B 
=\log d_A d_B -H_{1/2}$,
where $H_{s}(\Psi)' := \frac{d}{dt}H_{t}(\Psi)|_{t=s}$.
Hence,
it is enough to give the following two kinds of protocols:
One is a protocol in which
the exponential decreasing rates of the type-1 and type-2 errors are $r$ and 
$\sup_{0\le s \le 1}\frac{-2sr}{1-s}-H_{\frac{1+s}{2}}(\Psi)+\log d_A d_B $
for $r < \log d - \frac{1}{4} H_{1/2}(\Psi)'$.
The other is a protocol in which
the type-1 error is zero and 
the exponential decreasing rate of the second kind of error probability
is
$\log d_A d_B -H_{1/2}$. 
Before constructing the protocols, we prepare the following lemma.
Let $P$ be a distribution $(p_x)$ on ${\cal X}$
and $P_{1/2}$ be the measure $(p_x^{1/2})$ on ${\cal X}$. 

\begin{lemma}\Label{L12}
For $r < \log d - \frac{1}{4} H_{1/2}(\Psi)'$,
we have
\begin{align}
\min_{Q:D(Q\|P)\le r}
D(Q\|P) -H(Q)
= \sup _{0 \le s <1} \frac{-2s}{1-s}r-H_{\frac{1+s}{2}}(\Psi) .
\Label{2-3-20a}
\end{align}
In particular,
\begin{align}
\min_{Q}
D(Q\|P) -H(Q)
&=
D(P_{1/2}\|P) -H(P_{1/2})=
- H_{1/2}(\Psi).\Label{2-3-21a} \\
\min_{Q:D(Q\|P)=0}
D(Q\|P) -H(Q)
&=
- H_{1}(\Psi).\Label{2-3-21b} 
\end{align}
\hfill $\square$\end{lemma}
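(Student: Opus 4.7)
The plan is to prove (\ref{2-3-20a}) by Lagrangian convex duality and then read off the two specializations. The objective rewrites as $D(Q\|P)-H(Q)=\sum_x q_x\log(q_x^2/p_x)$, and the constraint $D(Q\|P)\le r$ is convex in $Q$, so I would form the Lagrangian
\begin{align*}
L(Q,\mu)=(1+\mu)D(Q\|P)-H(Q)-\mu r,\qquad \mu\ge 0.
\end{align*}
Writing $(1+\mu)D(Q\|P)-H(Q)=(2+\mu)\sum_x q_x\log\bigl(q_x/p_x^{(1+\mu)/(2+\mu)}\bigr)$ exhibits the inner minimization as the KL-projection onto a tilted (unnormalized) measure. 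The minimizer is therefore
\begin{align*}
Q_s(x)=\frac{p_x^{(1+s)/2}}{Z_s},\qquad Z_s=\sum_x p_x^{(1+s)/2},
\end{align*}
under the change of variable $\mu=2s/(1-s)$, which is a bijection between $\mu\in[0,\infty)$ and $s\in[0,1)$.

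Substituting $Q_s$ and using $H_{(1+s)/2}(\Psi)=\frac{2\log Z_s}{1-s}$ (which is just the definition of the R\'{e}nyi entropy), a direct computation gives
\begin{align*}
\min_{Q}L(Q,\mu(s)) \;=\; -\frac{2}{1-s}\log Z_s-\frac{2s}{1-s}r \;=\; \frac{-2s}{1-s}r-H_{(1+s)/2}(\Psi).
\end{align*}
The dual problem is therefore exactly the right-hand side of (\ref{2-3-20a}). Strong duality follows from convexity of the primal together with Slater's condition: for $r>0$, the choice $Q=P$ is strictly feasible since $D(P\|P)=0<r$. This identifies the primal minimum with the dual supremum and establishes (\ref{2-3-20a}).

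The two specializations then fall out. Removing the constraint corresponds to $\mu=0$, hence $s=0$ and $Q_0=P_{1/2}$, so $D(P_{1/2}\|P)-H(P_{1/2})=-H_{1/2}(\Psi)$, giving (\ref{2-3-21a}). The rigid constraint $D(Q\|P)=0$ forces $Q=P$, and the value is $-H(P)=-H_1(\Psi)$, giving (\ref{2-3-21b}). The hypothesis $r<\log d-\tfrac14 H_{1/2}'(\Psi)$ should be read through the identity $-\tfrac14 H_{1/2}'(\Psi)=D(P_{1/2}\|P)$, which follows from a short direct computation using $H_\alpha(\Psi)=\log Y_\alpha/(1-\alpha)$ with $Y_\alpha=\sum_x p_x^\alpha$; together with the $\log d$ shift it delineates the regime where the sup is attained at an interior $s\in[0,1)$, which is precisely the range needed for the Hoeffding-bound application in Theorem~\ref{11-07-10T}.

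The main obstacle is not the algebra, which reduces to the standard tilted-distribution/KKT calculation once the correct change of variable $\mu=2s/(1-s)$ is identified. Rather, it is the careful verification that strong duality applies with an interior maximizer $s\in[0,1)$ under the stated hypothesis, so that all derivatives and substitutions are licit and the extremal distribution $Q_s$ is supported on all of $\mathcal{X}$. Once this is in place the rest is manipulation of $Z_s$ against the definition of the R\'{e}nyi entropy.
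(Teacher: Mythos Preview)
Your argument is correct. Both you and the paper identify the same one-parameter family of tilted distributions $Q_s(x)\propto p_x^{(1+s)/2}$ (the paper writes this as $P_\theta$ with $1-\theta=(1+s)/2$), so the substance is the same, but the route is genuinely different. The paper does not invoke Lagrangian duality at all: it splits the claim into two lemmas. In Lemma~\ref{L2} it verifies by direct calculus on the cumulant function $\varphi(\theta)=\log\sum_x P^{1-\theta}(x)$ that the value $D(P_{\theta(r)}\|P)-H(P_{\theta(r)})$ coincides with the right-hand side of~(\ref{2-3-20a}); in Lemma~\ref{L1} it proves, via the explicit identity $D(Q\|P)-\tfrac{1}{1-\theta}D(Q\|P_\theta)=D(P_\theta\|P)$ (which holds whenever $H(Q)=H(P_\theta)$), that $P_{\theta(r)}$ is indeed the constrained minimizer. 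Your single Lagrangian computation with the change of variable $\mu=2s/(1-s)$ collapses both steps at once, and Slater's condition via $Q=P$ gives strong duality immediately, so you avoid the somewhat ad~hoc identity the paper uses for the primal side. What the paper's decomposition buys is an explicit self-contained argument that does not rely on convex-duality machinery; what your approach buys is brevity and a transparent reason why the tilted family appears. Your identification $-\tfrac14 H_{1/2}'(\Psi)=D(P_{1/2}\|P)$, which pins down the threshold below which the constraint is active and the dual supremum is attained at an interior $s$, is also correct and makes the role of the hypothesis clearer than the paper does.
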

This lemma will be shown in Appendix \ref{A3}.

Using the above lemmas and the type method,
we make the protocols as follows.
For this purpose, 
we prepare notations for the type method.
When an $n$-trial data 
$\vec{x}_n\stackrel{\rm def}{=}(x_1,\ldots,x_n) \in {\cal X}^n$ is given,
we focus on the distribution $p(x)\stackrel{\rm def}{=}\frac{\# \{i| x_i=x \}}{n}$,
which is called the empirical distribution for data $\vec{x}_n$.
In the type method, an empirical distribution is called a type.
In the following, we denote the set of empirical distributions on ${\cal X}$
with $n$ trials by ${\cal T}_n$.
The cardinality $|{\cal T}_n|$ is bounded by $(n+1)^{|{\cal X}|-1}$ \cite{CKbook},
which increases polynomially with the number $n$.
That is,
\begin{align}
\lim_{n \to \infty} \frac{1}{n}\log |{\cal T}_n|=0.
\Label{9-6-1}
\end{align}
This property is the key idea in the type method.
Let $T_n(Q)$ be the set of $n$-trial data whose empirical distribution is $Q$.
Then, the cardinality $|T_n(Q)|$ can be evaluated as \cite{CKbook}
\begin{align}
\bigg\lceil \frac{e^{nH(Q)}}{|{\cal T}_n|} \bigg\rceil 
\le |T_n(Q)| \le \lfloor e^{nH(Q)} \rfloor,
\Label{9-7-5}
\end{align}
where $ \lceil a \rceil$ is the minimum integer $m$ satisfying $m \ge a $,
and $\lfloor a \rfloor$ is the maximum $m$ satisfying $m \le a $.
Since any element $\vec{x} \in T_n(Q)$ 
satisfies 
\begin{align}
P^{n}(\vec{x})\stackrel{\rm def}{=} P(x_1) \cdots P(x_n)
=e^{-n (D(Q\|P)+H(Q)) }, \Label{9-7-13}
\end{align}
we obtain the important formula
\begin{align}
\frac{1}{|{\cal T}_n|} e^{-n D(Q\|P)}
\le P^{n}(T_n(Q)) \le 
e^{-n D(Q\|P)}
\Label{9-7-14}.
\end{align}

Now, we are ready to mention the main theorem of this subsection.
\begin{theorem}\Label{Th2-4A}
For any $r < - \frac{1}{4} H_{1/2}(\Psi)'$ and $n$, there is 
a subnormalized measure collection $\{m_{n,\omega}^r \}_{\omega} $ on ${\cal X}^n$ such that
\begin{align}
\beta(T[\{m_{n,\omega}^r \}_{\omega}])
=& \Tr T[\{m_{n,\omega}^r \}_{\omega}] \rho_{mix}^{\otimes n} \le
4 |{\cal T}_n|^3 
(d_A d_B)^{-n}
e^{-n \sup _{0 \le s <1} \frac{-2s}{1-s}r-H_{\frac{1+s}{2}}(\Psi)},
\Label{2-3-18}\\
\alpha(T[\{m_{n,\omega}^r \}_{\omega}])
=& \langle {\Psi^{\otimes n}}| (I-T[\{m_{n,\omega}^r \}_{\omega}])|{\Psi}^{\otimes n}\rangle 
\le  |{\cal T}_n| e^{-n r}.\Label{2-3-19}
\end{align}
For the case with $r = - \frac{1}{4} H_{1/2}(\Psi)'$, we have the following statement.
For any $n$, there is a subnormalized measure collection 
$\{m_{n,\omega}^o \}_{\omega} $ on ${\cal X}^n$ 
such that
\begin{align}
\beta(T[\{m_{n,\omega}^o \}_{\omega}])
=& \Tr T[\{m_{n,\omega}^o \}_{\omega}] \rho_{mix}^{\otimes n} \le
4 |{\cal T}_n|^3 (d_A d_B)^{-n}
e^{n H_{1/2}(\Psi)},\Label{2-3-17}\\
\alpha(T[\{m_{n,\omega}^o \}_{\omega}])
=& \langle {\Psi^{\otimes n}}| (I-T[\{m_{n,\omega}^o \}_{\omega}])|{\Psi}^{\otimes n}\rangle =0.\Label{2-3-16}
\end{align}
\hfill $\square$\end{theorem}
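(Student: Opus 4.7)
The plan is to index the measures $\{m_{n,\omega}\}$ by types $Q\in\mathcal{T}_n$ (with an auxiliary sub-index if needed to partition a type class), apply Proposition~\ref{P2-16} to obtain closed-form expressions for $\alpha$ and $\beta$ in terms of type statistics, and then identify the resulting exponent with the Hoeffding expression via Lemma~\ref{L12}. For each admissible type $Q$, I let $m_{n,\omega}$ take a constant value $c_Q$ on its support, choosing the constants and the support sizes so that $\sum_\omega m_{n,\omega}(\vec{x})\le 1$ everywhere (and equal to $1$ on the support of $P^n$ in the zero-error case).

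For the first claim (Hoeffding regime), I restrict $\omega$ to types $Q$ with $D(Q\|P)\le r$. Since $P^n(\vec{x})=e^{-n(H(Q)+D(Q\|P))}$ is constant on $T_n(Q)$, Proposition~\ref{P2-16} reduces each error to a sum over types. The bound on $\alpha$ is then immediate from the Sanov-type estimate $\sum_{Q:D(Q\|P)>r}P^n(T_n(Q))\le|\mathcal{T}_n|e^{-nr}$ together with~\eqref{9-7-14}, giving~\eqref{2-3-19}. For $\beta$, substituting~\eqref{9-7-5} and~\eqref{9-7-14} yields
\[
\beta(T[\{m_{n,\omega}\}])\le 4|\mathcal{T}_n|^3(d_Ad_B)^{-n}\max_{Q:D(Q\|P)\le r}e^{n[H(Q)-D(Q\|P)]},
\]
and Lemma~\ref{L12} rewrites the maximum as $e^{-n\sup_{0\le s<1}[-\tfrac{2s}{1-s}r-H_{(1+s)/2}(\Psi)]}$, establishing~\eqref{2-3-18}.

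For the second claim, I extend $\omega$ to all types and arrange the weights so that $\sum_\omega m_{n,\omega}(\vec{x})=1$ pointwise on the support of $P^n$; then Proposition~\ref{P2-16} gives $\alpha=0$ exactly, verifying~\eqref{2-3-16}. The analysis of $\beta$ proceeds as before, but now with the unconstrained maximum on the right-hand side, which by~\eqref{2-3-21a} equals $-H_{1/2}(\Psi)$; this produces~\eqref{2-3-17}.

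The main obstacle is the calibration step: the constants $c_Q$ and the sub-index structure of $\omega$ must be chosen so that the ratio $|m_\omega|\sum_{\vec{x}}P^n(\vec{x})m_\omega(\vec{x})^2/\sum_{\vec{x}}P^n(\vec{x})m_\omega(\vec{x})$ in Proposition~\ref{P2-16} produces a contribution of order $|T_n(Q)|\cdot P^n(T_n(Q))\sim e^{n(H(Q)-D(Q\|P))}$ per type, rather than the looser $|T_n(Q)|\sim e^{nH(Q)}$ that one gets from the naive choice $c_Q=1$. Harvesting this extra factor of $e^{-nD(Q\|P)}\le e^{-nr}$ is exactly the saddle-point approximation step alluded to in the introduction, and it is also what accounts for the polynomial prefactor $4|\mathcal{T}_n|^3$ rather than a linear $|\mathcal{T}_n|$.
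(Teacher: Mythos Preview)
Your overall strategy---index measures by types, invoke Proposition~\ref{P2-16}, and convert the resulting type-optimization into the R\'enyi form via Lemma~\ref{L12}---matches the paper's, but the construction you describe cannot deliver the $\beta$-bound, and the difficulty you flag in the last paragraph is not merely a ``calibration step.''

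If $m_\omega$ is constant equal to $c_Q$ on a support $S_\omega\subset T_n(Q)$, then the term in Proposition~\ref{P2-16} collapses:
\[
\frac{|m_\omega|\sum_{\vec{x}}P^n(\vec{x})m_\omega(\vec{x})^2}{(d_Ad_B)^n\sum_{\vec{x}}P^n(\vec{x})m_\omega(\vec{x})}
=\frac{|S_\omega|\,c_Q^2\,P^n(S_\omega)}{(d_Ad_B)^n\,c_Q\,P^n(S_\omega)}
=\frac{|S_\omega|\,c_Q}{(d_Ad_B)^n}.
\]
Since the $\alpha$-bound forces $\sum_\omega m_\omega(\vec{x})=1$ on every $T_n(Q)$ with $D(Q\|P)\le r$, a double-count gives $\sum_{\omega\text{ for }Q}|S_\omega|\,c_Q=|T_n(Q)|$ regardless of how you sub-partition or overlap. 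Hence the total $\beta$-contribution from type $Q$ is exactly $|T_n(Q)|/(d_Ad_B)^n\sim e^{nH(Q)}/(d_Ad_B)^n$: the extra factor $e^{-nD(Q\|P)}$ is \emph{impossible} to harvest within your framework, not merely awkward. No choice of $c_Q$ or sub-index structure helps.

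The paper resolves this by letting each $m_{Q,j}$ be supported on \emph{two} type classes simultaneously, with \emph{different} values on each: it equals $1$ on a slice $T_n(Q)_j$ of the rare type $Q$, and equals the small constant $\lceil|T_n(Q)|/|T_n(P_n)_Q|\rceil^{-1}$ on a slice $T_n(P_n)_Q$ of the central type $P_n\approx P$. The point is that the denominator $\sum_{\vec{x}}P^n(\vec{x})m_\omega(\vec{x})$ is then dominated by the $T_n(P_n)_Q$ piece (where $P^n$ is large), while the numerator $\sum_{\vec{x}}P^n(\vec{x})m_\omega(\vec{x})^2$ picks up an extra small factor there; this mismatch is precisely what produces the $P^n(T_n(Q))\sim e^{-nD(Q\|P)}$ gain. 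This is a purely combinatorial type-method construction, not a saddle-point approximation (the latter appears in the paper only for the Stein--Strassen analysis, not here).
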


This theorem guarantees that 
\begin{align}
\liminf_{n \to \infty}
\frac{-1}{n}\log 
\beta_{n,\leftrightarrow}\left(e^{-nr} |\Psi\|\rho_{mix} \right)
\ge 
\sup _{0 \le s <1} 
\frac{-2s}{1-s}r-H_{\frac{1+s}{2}}(\Psi) +\log d_A d_B.
\end{align}
Since 
$\limsup_{n \to \infty}
\frac{-1}{n}\log 
\beta_{n,\leftrightarrow}\left(e^{-nr} |\Psi\|\rho_{mix} \right)
\le
\lim_{n \to \infty}
\frac{-1}{n}\log 
\beta_{n,sep}\left(e^{-nr} |\Psi\|\rho_{mix} \right)
=
H_{\leftrightarrow}\left(r|\Psi\|\rho_{mix} \right) $,
we obtain (\ref{2-5-10}).

In the following, we will concretely construct subnormalized measure collections to realize the conditions 
\eqref{2-3-18} and \eqref{2-3-19} (\eqref{2-3-17} and \eqref{2-3-16}).
Then, Theorem \ref{Th2-4A} will be shown as the combination of Lemmas \ref{L2-4A} and \ref{L2-4B}. 

\noindent{\it Construction of the subnormalized measure collection $\{m_{n,\omega}^r\}_{\omega \in \Omega}$ with $r < \log d - \frac{1}{4} H_{1/2}(\Psi)'$: \quad}
First, we fix the distribution $P$ so that $P(x)=\lambda_x$.
Then, we consider the case of $r < \log d - \frac{1}{4} H_{1/2}(\Psi)'$.
To choose a subnormalized measure collection $\{m_\omega^r\}_{\omega \in \Omega} $
on ${\cal X}^n$,  
we give two disjoint subsets of types by employing the type method as follows.
\begin{align*}
{\cal T}_{n,r}&\stackrel{\rm def}{=} \{Q \in {\cal T}_{n}|
- H(P) 
>
D(Q\|P) -H(Q) , D(Q\|P)\le r 
 \}, \\
{\cal T}_{n}'&\stackrel{\rm def}{=} \{Q \in {\cal T}_{n}|
- H(P)= - H_1(\Psi)\le 
D(Q\|P) -H(Q)  \}.
\end{align*}
In this construction, we fix the element $P_{n}\in {\cal T}_{n}'$ that is closest to $P$ among elements in ${\cal T}_{n}'$ 
in terms of relative entropy.
Then, we define the subset ${\cal T}_{n}''\stackrel{\rm def}{=} {\cal T}_{n}'\setminus \{P_n\}$.

Then, we divide the set $T_n(P_n)$ into 
$|{\cal T}_{n,r}|$ disjoint sets 
$ T_n(P_n)_{Q} $ ( $Q \in {\cal T}_{n,r}$)
whose cardinalities are $\lceil |T_n(P_n)|/|{\cal T}_{n,r}|\rceil$
or $\lfloor |T_n(P_n)|/|{\cal T}_{n,r}|\rfloor$.
For a type $Q \in {\cal T}_{n,r}$, we 
divide the set $T_n(Q)$ into 
$\lceil |T_n(Q)|/|T_n(P_n)_{Q}|\rceil $ disjoint sets 
$T_n(Q)_{1}, \ldots, T_n(Q)_{\lceil |T_n(Q)|/|T_n(P_n)_{Q}|\rceil}$
whose cardinalities are less than $|T_n(P_n)_{Q}|$.
Hence, for $Q \in {\cal T}_{n,r}$, (\ref{9-7-14}) yields
\begin{align}
P^n(T_n(P_n)_{Q}) 
\ge \frac{e^{-n D(P_n\|P)}}{|{\cal T}_n|\cdot
|{\cal T}_{n,r}|}
\ge \frac{e^{-n D(P_n\|P)}}{|{\cal T}_n|^2},
\Label{2-10-11}
\end{align}
and (\ref{9-7-5}) yields
\begin{align}
|T_n(P_n)_{Q}| \le \lfloor e^{n H(P_n)} \rfloor
\le \lfloor e^{n H(P)} \rfloor
\le \lfloor e^{n H(Q)} \rfloor.
\Label{2-10-12}
\end{align}

For a type $Q \in {\cal T}_{n}''$,
we define the non-negative measure $\bar{m}_{Q}$ on ${\cal X}^n$ as
\begin{align}
\bar{m}_{Q} (\vec{x})\stackrel{\rm def}{=}
\left\{
\begin{array}{ll}
1 & \hbox{if } \vec{x} \in T_n(Q) \\
0 & \hbox{otherwise.}
\end{array}
\right.
\end{align}
For a type $Q \in {\cal T}_{n,r}$
and $k=1, \ldots, \lceil |T_n(Q)|/|T_n(P_n)_{Q}|\rceil$,
we define the non-negative measure $\bar{m}_{Q,k}$ on ${\cal X}^n$ as
\begin{align}
\bar{m}_{Q,k} (\vec{x})\stackrel{\rm def}{=}
\left\{
\begin{array}{ll}
1 & \hbox{if } \vec{x} \in T_n(Q)_{k} \\
\lceil |T_n(Q)|/|T_n(P_n)_{Q}|\rceil^{-1} 
& \hbox{if } \vec{x} \in T_n(P_n) \setminus T_n(Q)_{k} \\
0 & \hbox{otherwise.}
\end{array}
\right.
\end{align}
Hence, the cardinality $|\bar{m}_{Q,k}|$ is less than 
$|T_n(P_n)_{Q}|+ |T_n(Q)_j| \le 2 |T_n(P_n)_{Q}|$.
Now, we choose the set $\Omega$ as
$\Omega:={\cal T}_{n}''\cup \{(Q,j)\}_{Q \in {\cal T}_{n,r}}$,
where $k$ takes values in $\{1, \ldots, \lceil |T_n(Q)|/|T_n(P_n)_{Q}|\rceil\}$.
Then, we define the subnormalized measure collection 
$\{m_{n,\omega}^r\}_{\omega \in \Omega}$ as
\begin{align}
m_{n,\omega}^r:=
\left\{
\begin{array}{ll}
\bar{m}_Q & 
\hbox{ if } \omega = Q \in  {\cal T}_{n}''\\
\bar{m}_{Q,j} &
\hbox{ if } \omega =(Q,k) \hbox{ with } Q \in  {\cal T}_{n,r}.
\end{array}
\right.\Label{2-4-F}
\end{align}From the above construction, we find that $\{m_{n,\omega}^r\}_{\omega \in \Omega}$
is a subnormalized measure collection on ${\cal X}^n$.
\hfill $\square$

Then, we have the following lemma.
\begin{lemma}\Label{L2-4A}
The subnormalized measure collection $\{m_{n,\omega}^r\}_{\omega \in \Omega}$ on ${\cal X}^n$
satisfies (\ref{2-3-18}) and (\ref{2-3-19}).
\hfill $\square$\end{lemma}

To show Lemma \ref{L2-4A}, we prepare the following lemma.
\begin{lemma}\Label{11-07-30L}
Assume that 
$n$ is sufficiently large. Then,
\begin{align}
D(P_n\|P) \le \frac{2d}{n} \Label{11-07-30}.
\end{align}
\hfill $\square$\end{lemma}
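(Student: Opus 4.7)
The plan is to exhibit an explicit type $Q\in{\cal T}_n'$ satisfying $D(Q\|P)\le 2d/n$; since $P_n$ is by definition the element of ${\cal T}_n'$ closest to $P$ in relative entropy, this immediately yields $D(P_n\|P)\le D(Q\|P)\le 2d/n$.

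Let $x^*$ attain $\max_x p_x$, so $p_{x^*}\ge 1/d$, and set $Q(x):=\lfloor np_x\rfloor/n$ for $x\neq x^*$ and $Q(x^*):=1-\sum_{x\neq x^*}Q(x)$. Then $Q\in{\cal T}_n$ with $Q(x)\in[p_x-1/n,\,p_x]$ for $x\neq x^*$ and $Q(x^*)-p_{x^*}\in[0,(d-1)/n)$. To verify $Q\in{\cal T}_n'$, i.e., that $f(Q):=D(Q\|P)-H(Q)+H(P)\ge 0$, I will establish the stronger inequality $H(Q)\le H(P)$; combined with $D(Q\|P)\ge 0$, this yields $f(Q)\ge 0$. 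The entropy inequality follows because $Q$ is reached from $P$ by a finite sequence of elementary mass transfers, each moving $1/n$ from some $y\neq x^*$ to $x^*$. Along any such transfer the derivative of $H$ equals $\log(P'(y)/P'(x^*))$, which is $\le 0$ since $P'(x^*)$ only grows during the process while $P'(y)\le p_y\le p_{x^*}\le P'(x^*)$ throughout.

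For the divergence bound I use $D(Q\|P)\le\sum_x(Q(x)-p_x)\log(Q(x)/p_x)$, which is immediate from $D(P\|Q)\ge 0$; every summand is non-negative. For $x\neq x^*$ the perturbation $|Q(x)-p_x|$ is at most $1/n$ and $|\log(Q(x)/p_x)|\le 2/(np_x)$ once $np_x\ge 2$, giving contribution $O(1/(n^2 p_x))$; for $x=x^*$ the contribution is at most $(d-1)^2/(n^2 p_{x^*})$. Summing yields $D(Q\|P)=O(1/n^2)$, with hidden constant depending on $P$ but not on $n$, comfortably below $2d/n$ for all sufficiently large $n$ (large enough also to ensure $np_x\ge 1$ on the support of $P$, so that $Q(x)>0$ and the estimates are valid).

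The main obstacle is the qualitative step of selecting the right rounding: the construction must push mass toward the mode $x^*$ in order to guarantee $H(Q)\le H(P)$ through the Schur-concavity / mass-transfer argument above. Once ${\cal T}_n'$-membership is secured this way, the divergence estimate is a routine rounding analysis, and the lemma's stated rate $2d/n$ is in fact quite generous compared with the true $O(1/n^2)$ decay.
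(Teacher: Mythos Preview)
Your argument is correct and takes a genuinely different route from the paper's. The paper works directly with the minimizer $P_n$: writing $\delta_{n,i}=P(i)-P_n(i)$, it invokes $|\delta_{n,i}|\le 1/n$ and a first-order bound $-\log(1+x)\le 2|x|$ (for small $|x|$) to obtain $D(P_n\|P)\le 2\sum_i|\delta_{n,i}|\le 2d/n$. That approach is shorter but leaves the coordinate-wise closeness $|\delta_{n,i}|\le 1/n$ unproved, which is not automatic since $P_n$ is the KL-minimizer only within ${\cal T}_n'$, not within all of ${\cal T}_n$. Your approach sidesteps this gap entirely: you exhibit an explicit competitor $Q$ by rounding toward the mode, verify $Q\in{\cal T}_n'$ via the entropy-decrease (majorization) argument, and then bound $D(Q\|P)$; the minimality of $P_n$ does the rest. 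This is more self-contained and in fact yields the sharper rate $D(P_n\|P)=O(1/n^2)$, which is stronger than needed.

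One cosmetic point: your phrase ``each moving $1/n$'' is slightly off, since the actual mass transferred from each $y\neq x^*$ is the fractional part $p_y-\lfloor np_y\rfloor/n<1/n$, not exactly $1/n$. But the derivative argument you give applies verbatim to a continuous transfer of any amount in $[0,1/n)$, so this does not affect the validity of the proof.
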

\begin{proof}
We denote $P(i)- P_n(i)$ by $\delta_{n,i}$.
Since $n$ is sufficiently large,
we have 
$-\log (1+\frac{\delta_{n,i}}{P_n(i)}) \le -2 \frac{\delta_{n,i}}{P_n(i)} $.
Using the relation $|\delta_{n,i}|\le \frac{1}{n} $, we have
\begin{align}
& D(P_n\|P)=\sum_{i=1}^d 
P_n(i)
\log \frac{P_n(i)}{P_n(i)+ \delta_{n,i}} 
= 
-\sum_{i=1}^d 
P_n(i)
\log (1+\frac{\delta_{n,i}}{P_n(i)}) \nonumber \\
\le & 
\sum_{i=1}^d 
P_n(i)
-2 \frac{\delta_{n,i}}{P_n(i)} 
= 
-2 \sum_{i=1}^d \delta_{n,i}
\le \frac{2d}{n}.
\end{align}
\end{proof}

\begin{proofof}{Lemma \ref{L2-4A}}
To calculate 
$\beta(T[\{m_{n,\omega}^r\}_{\omega \in \Omega}])
= \Tr T[\{m_{n,\omega}^r\}_{\omega \in \Omega} ] \rho_{mix}^{\otimes n}$, 
we firstly evaluate\par\noindent
$ \sum_{\vec{x} \in {\cal X}^n} P^n(\vec{x}) m_\omega^r (\vec{x})^2 $
and
$\sum_{\vec{x} \in {\cal X}^n} P^n(\vec{x}) m_\omega^r(\vec{x})$
as
\begin{align}
& \sum_{\vec{x} \in {\cal X}^n} 
P^n(\vec{x}) m_\omega^r (\vec{x})^2 \nonumber \\
= &
\sum_{\vec{x} \in T_n(Q)_{j}} 
P^n(\vec{x}) m_\omega^r(\vec{x})^2 
+
\sum_{\vec{x} \in T_n(P_n)_{Q}} 
P^n(\vec{x}) 
\lceil |T_n(Q)|/|T_n(P_n)_{Q}|\rceil^{-2}\nonumber \\
= &
P^n (T_n(Q)_{j})+
P^n (T_n(P_n)_{Q})
\lceil |T_n(Q)|/|T_n(P_n)_{Q}|\rceil^{-2} ,
\Label{2-10-14}
\end{align}
and
\begin{align}
& \sum_{\vec{x} \in {\cal X}^n} 
P^n(\vec{x}) m_\omega^r(\vec{x}) \nonumber\\
= &
\sum_{\vec{x} \in T_n(Q)_{j}} 
P^n(\vec{x}) m_\omega^r(\vec{x}) 
+
\sum_{\vec{x} \in T_n(P_n)_{Q}} 
P^n(\vec{x}) 
\lceil |T_n(Q)|/|T_n(P_n)_{Q}|\rceil^{-1} \nonumber \\
= &
P^n (T_n(Q)_{j})+
P^n (T_n(P_n)_{Q})
\lceil |T_n(Q)|/|T_n(P_n)_{Q}|\rceil^{-1} \nonumber \\
\ge &
P^n (T_n(P_n)_{Q})
\lceil |T_n(Q)|/|T_n(P_n)_{Q}|\rceil^{-1} .
\Label{2-10-15}
\end{align}

Now, we evaluate the two kinds of errors
for the above collection of non-negative measures.
The first kind of error probability is evaluated as
\begin{align}
&\beta(T[\{m_{n,\omega}^r\}_{\omega \in \Omega}])
= \Tr T[\{m_{n,\omega}^r\}_{\omega \in \Omega} ] \rho_{mix}^{\otimes n} \nonumber \\
\stackrel{(a)}{\le} &
\sum_{Q \in {\cal T}_{n,r}}
\sum_j 
\frac{2 |T_n(P_n)|
\cdot 
\sum_{\vec{x} \in {\cal X}^n} 
P^n(\vec{x}) (\bar{m}_{Q,j}(x))^2 
}
{d_A^n d_B^n \sum_{x \in {\cal X}} P^n(\vec{x}) \bar{m}_{Q,j}(x)} 
+ 
\sum_{Q \in {\cal T}_{n}''}
\frac{|T_n(Q)|}{d_A^n d_B^n} \nonumber \\
\le &
\sum_{Q\in {\cal T}_{n,r}}
\frac{2 |T_n(P_n)_Q|
\cdot 
\sum_j
P^n (T_n(Q)_{j})+
P^n (T_n(P_n)_{Q})
\lceil |T_n(Q)|/|T_n(P_n)_{Q}|\rceil^{-2} 
}
{d_A^n d_B^n P^n (T_n(P_n)_{Q})
\lceil |T_n(Q)|/|T_n(P_n)_{Q}|\rceil^{-1} } \nonumber \\
&+ 
\sum_{Q \in {\cal T}_{n}''}
\frac{|T_n(Q)|}{d_A^n d_B^n} \nonumber \\
= &
\sum_{Q\in {\cal T}_{n,r}}
\frac{2 |T_n(P_n)_Q|
\cdot 
P^n (T_n(Q))+
P^n (T_n(P_n)_{Q})
\lceil |T_n(Q)|/|T_n(P_n)_{Q}|\rceil^{-1} 
}
{d_A^n d_B^n P^n (T_n(P_n)_{Q})
\lceil |T_n(Q)|/|T_n(P_n)_{Q}|\rceil^{-1} } \nonumber \\
&+ 
\sum_{Q \in {\cal T}_{n}''}
\frac{|T_n(Q)|}{d_A^n d_B^n} \nonumber \\
= &
\sum_{Q\in {\cal T}_{n,r}}
\frac{2 |T_n(P_n)_Q|
\cdot P^n (T_n(Q))}
{d_A^n d_B^n P^n (T_n(P_n)_{Q})
\lceil |T_n(Q)|/|T_n(P_n)_{Q}|\rceil^{-1} } \nonumber \\
&+
\sum_{Q\in {\cal T}_{n,r}}
\frac{2 |T_n(P_n)_Q|}
{d_A^n d_B^n} 
+ 
\sum_{Q \in {\cal T}_{n}''}
\frac{|T_n(Q)|}{d_A^n d_B^n}  \nonumber \\
= &
\sum_{Q\in {\cal T}_{n,r}}
\frac{2 |T_n(P_n)_Q|
\cdot P^n (T_n(Q))}
{d_A^n d_B^n P^n (T_n(P_n)_{Q})
\lceil |T_n(Q)|/|T_n(P_n)_{Q}|\rceil^{-1} } 
+
\frac{2 |T_n(P_n)|}
{d_A^n d_B^n} \nonumber \\
&+ 
\sum_{Q \in {\cal T}_{n}''}
\frac{|T_n(Q)|}{d_A^n d_B^n}  \nonumber \\
\le &
\sum_{Q\in {\cal T}_{n,r}}
\frac{2 |T_n(P_n)_Q|\cdot  (1+|T_n(Q)|/|T_n(P_n)_{Q}|)
\cdot P^n (T_n(Q))}
{d_A^n d_B^n P^n (T_n(P_n)_{Q})
 } 
+(|{\cal T}_{n}'|+1)
\frac{e^{nH(P)}}{d_A^n d_B^n}  \nonumber \\
= &
\sum_{Q\in {\cal T}_{n,r}}
\frac{2 (|T_n(Q)|+|T_n(P_n)_Q|) \cdot P^n (T_n(Q))}
{d_A^n d_B^n P^n (T_n(P_n)_{Q}) } 
+(|{\cal T}_{n}'|+1)
\frac{e^{nH(P)}}{d_A^n d_B^n}  \nonumber \\
\stackrel{(b)}{\le} &
\sum_{Q\in {\cal T}_{n,r}}
\frac{2d \cdot 4 
|{\cal T}_n|^2  e^{n (H(Q)-D(Q\|P))}}{d_A^n d_B^n } 
+ 2|{\cal T}_{n}'|
\frac{e^{nH(P)}}{d_A^n d_B^n}  \nonumber \\
\stackrel{(c)}{\le} &
\frac{8d |{\cal T}_n|^3
 e^{-n (\min_{Q\in {\cal T}_{n,r} } D(Q\|P)-H(Q))}}
{d_A^n d_B^n } 
\stackrel{(d)}{\le}
8d |{\cal T}_n|^3 
e^{-n \sup _{0 \le s <1} \frac{-2s}{1-s}r-H_{\frac{1+s}{2}}(\Psi) },
\Label{1-5-12}
\end{align}
where
$(a)$ follows from (\ref{2-10-14}) and (\ref{2-10-15}),
$(b)$ follows from (\ref{2-10-11}), (\ref{2-10-12}), \eqref{9-7-14},
and Lemma \ref{11-07-30L},
and
$(c)$ follows from 
the inequality $\min_{Q\in {\cal T}_{n,r} } D(Q\|P)-H(Q))
\le - H_1(\Psi) \le
\min_{Q\in {\cal T}_{n}'' } D(Q\|P)-H(Q))$.

The second kind of error probability is evaluated as
\begin{align}
&\alpha(T[\{m_{n,\omega}^r\}_{\omega \in \Omega}])
= \langle {\Psi^{\otimes n}}| (I-T[\{m_{n,\omega}^r\}_{\omega \in \Omega}])|{\Psi}^{\otimes n}\rangle 
= 
\sum_{Q \in {\cal T}_{n,r,c}}
\sum_{\vec{x} \in T_n(Q)} 
P^n(\vec{x})\nonumber\\
=&
\sum_{Q \in {\cal T}_{n,r,c}}
P^n (T_n(Q)) 
\stackrel{(a)}{\le} 
|{\cal T}_n| e^{-n \min_{Q\in {\cal T}_{n,r,c} } D(Q\|P)}
\le |{\cal T}_n| e^{-n r},
\end{align}
where
$(a)$ follows from (\ref{9-7-14}).
\end{proofof}

\noindent{\it Construction of a subnormalized measure collection with $r = \log d - \frac{1}{4} H_{1/2}(\Psi)'$: \quad}
We consider the case of $r =\log d_A d_B- H_{1/2}(\Psi)$.
In this case, we change the definition of the subset
${\cal T}_{n,r}$ of ${\cal T}_{n}$ 
as 
\begin{align*}
{\cal T}_{n,r}&\stackrel{\rm def}{=} \{Q \in {\cal T}_{n}|
- H(P) > D(Q\|P) -H(Q) \}.
\end{align*}
So, we find that ${\cal T}_{n,r} \cup {\cal T}_{n}'= {\cal T}_{n}$.

Then, using the same discussion as the above,
we define the collection 
$\{\bar{m}_{Q,j} \}_{Q,j}$ of non-negative measures on ${\cal X}^n$
by using the modified subset ${\cal T}_{n,r}$.
We define the subnormalized measure collection $\{m_{n,\omega}^o\}_{\omega \in \Omega}$ on ${\cal X}^n$
by using \eqref{2-4-F}.
\hfill $\square$

Then, we have the following lemma.
\begin{lemma}\Label{L2-4B}
The subnormalized measure collection $\{m_{n,\omega}^o\}_{\omega \in \Omega}$ on ${\cal X}^n$
satisfies (\ref{2-3-17}) and (\ref{2-3-16}).
\hfill $\square$\end{lemma}

\begin{proofof}{Lemma \ref{L2-4B}}
Trivially, we have (\ref{2-3-16}).
Even in this modification, \eqref{1-5-12} still holds except for $(d)$.
Instead of $(d)$, we use \eqref{2-3-21a} of Lemma \ref{L12}.
Then, we have (\ref{2-3-17}).
\end{proofof}

\subsection{Stein-Strassen bound}
Now, we give a two-round classical communication protocol to achieve the Stein-Strassen bound.
For this purpose, we prepare the following lemma. 

\begin{lemma}\Label{2-16-10}
For a given $\epsilon>0$,
there exists a subnormalized measure collection $\{m_k\}_{k=0}^{M_n}$ such that
\begin{align}
& \log \sum_{k=1}^{M_n}
| \{\vec{x}|  m_k(\vec{x})\neq 0 \}|
\frac{\sum_{\vec{x} \in {\cal X}^n}P^n(\vec{x}) m_k(\vec{x})}
{\sum_{\vec{x} \in {\cal X}^n}P^n(\vec{x}) m_k(\vec{x})^2 } 
\nonumber \\
\le &
n H_1(\Psi) + \sqrt{n} \sqrt{V(\Psi)} \Phi^{-1}(\epsilon)
-\log n +O(1), \Label{3-1-1} \\
&\sum_{k=1}^{M_n}
\sum_{\vec{x} \in {\cal X}^n}P^n(\vec{x}) m_k(\vec{x}) 
= \epsilon +o(1).
\Label{3-1-2}
\end{align}
\hfill $\square$\end{lemma}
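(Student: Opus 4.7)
The plan is to apply Proposition~\ref{Th1} (originally from~\cite{OH10}) at the Stein--Strassen scale, extracting the precise $-\log n$ term via the strong large deviation results of Proposition~\ref{11-4-4} in the same manner as the proof of Theorem~\ref{11-07-2T}. Since Proposition~\ref{Th1} already encapsulates the two-round classical-communication LOCC protocol in terms of the classical quantities $P_{n,0}$, $P_{n,1/2}$, $P_{n,1}$, the lemma amounts to verifying that the measures underlying that protocol, evaluated at the CLT scale, meet the desired third-order bound.

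First, I would set $R' = -H(P) + \sqrt{V(P)}\,\Phi^{-1}(\epsilon)/\sqrt{n} + c/n$ with $c$ chosen so that Condition~\eqref{11-4-1} is satisfied asymptotically, mirroring the choice $R' = -H(p) + A/\sqrt{n} + B'/n$ in the proof of Theorem~\ref{11-07-2T}. Letting $R$ be the level paired with $R'$ via the asymptotic equality~\eqref{11-4-1e}, I take $\{m_j\}_{j=1}^{M_n}$ to be the measures underlying the protocol produced by Proposition~\ref{Th1} with these parameters. The central limit theorem gives $P_{n,1}(R') = 1-\epsilon + o(1)$, and Lemma~\ref{11-06-6} combined with Proposition~\ref{Th1} then yields
$\sum_j \sum_{\vec{x}} P^n(\vec{x})\, m_j(\vec{x}) \approx 1-P_{n,1}(R') = \epsilon + o(1)$,
establishing~\eqref{3-1-2}.

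To bound~\eqref{3-1-1}, I would reduce the sum on its LHS to the quantity $P_{n,1/2}(R')^2/P_{n,1}(R')$ arising from Proposition~\ref{Th1} (with the $d^n$ factor stripped off, since the $d_Ad_B$ normalization is supplied separately through~\eqref{2-10-1}). Combining the Bhadur--Rao expansions for $P_{n,1/2}$ and $P_{n,1}$ coming from Proposition~\ref{11-4-4}, as in the derivation of~\eqref{11-07-6} for Theorem~\ref{11-07-2T}, I obtain
\[
\log \frac{P_{n,1/2}(R')^2}{P_{n,1}(R')} = n\, H(P) + \sqrt{n}\,\sqrt{V(P)}\,\Phi^{-1}(\epsilon) - \log n + O(1),
\]
and the identifications $H(P) = H_1(\Psi)$ and $V(P) = V(\Psi)$ give exactly~\eqref{3-1-1}.

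The principal technical obstacle is extracting the $\log n$ coefficient $-1$. The leading $n H_1(\Psi)$ and $\sqrt{n}$-scale terms already follow from standard Cram\'er/CLT asymptotics, but the third-order logarithmic contribution requires the sharp $n^{-1/2}$ density corrections provided by Proposition~\ref{11-4-4}: two such factors appear in the numerator $P_{n,1/2}^2$, only one of which is cancelled by the $n^{-1/2}$ from $P_{n,1}$, leaving the net $n^{-1}$ that becomes $-\log n$ after taking the logarithm. A secondary concern is the lattice-span case for the random variable $-\log\lambda_i$, which brings in the correction factors $g(d)$ and $h_1$ introduced in the proof of Theorem~\ref{11-07-2T}; these propagate with only bookkeeping changes and are absorbed into the $O(1)$ term, so no new phenomenon arises here beyond what was handled in Section~\ref{sec separable}.
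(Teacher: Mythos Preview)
Your proposal conflates the separable converse machinery with the two-way LOCC achievability construction, and this is a genuine gap. Proposition~\ref{Th1} (Theorem~4 of \cite{OH10}) analyzes the \emph{composite hypothesis testing} problem that, via Proposition~\ref{theorem previous paper 1}, yields the optimal \emph{separable} error $\beta_{n,sep}$. It does not produce a collection of measures $\{m_j\}$ on ${\cal X}^n$ in the sense required by Proposition~\ref{P2-16}; the vectors $\hat{x}(u_l,v_l,\epsilon)$ appearing there are test states for the composite problem, not the first-step Alice POVM data $\{m_\omega\}$ of the three-step LOCC protocol. Consequently your sentence ``I take $\{m_j\}$ to be the measures underlying the protocol produced by Proposition~\ref{Th1}'' has no content: there are no such measures to extract, and the reduction of the LHS of~\eqref{3-1-1} to $P_{n,1/2}(R')^2/P_{n,1}(R')$ is unjustified --- that ratio is the separable type-2 error, whereas \eqref{3-1-1} bounds a sum over $j$ of $|m_j|\cdot(\sum P^n m_j)/(\sum P^n m_j^2)$, a structurally different object tied to the LOCC error formula in Proposition~\ref{P2-16}.

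The paper's proof proceeds by an explicit construction: it slices ${\cal X}^n$ into shells ${\cal R}_{k,n}$ according to $-\log P^n(\vec{x})$ around the CLT threshold, partitions each shell into $M_n\sim e^{nb}$ pieces of controlled size $N_{k,n}$, and defines $m_j$ to be supported on the $j$th piece of each shell. The $-\log n$ term then emerges not from Bahadur--Rao (Proposition~\ref{11-4-4}) but from the saddle-point/Cram\'er--Ess\'een analysis of Lemma~\ref{2-16-1}, which controls $|{\cal R}_{0,n}|$, $\sum_k|{\cal R}_{k,n}|e^{ka}$, and $\max_k P^n({\cal R}_{k,n})$ each to order $-\tfrac{1}{2}\log n$; the combination in Lemma~\ref{2-16-9} assembles two such half-log factors into the required $-\log n$. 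Your intuition about the \emph{origin} of the $-\log n$ (two $n^{-1/2}$ density factors partially cancelled) is morally correct, but it has to be realized through this explicit LOCC-compatible construction rather than by importing the separable answer wholesale.
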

This lemma will be shown as Lemma \ref{L2-4C}.

Now, we are ready to mention the main theorem of this subsection.
Applying Proposition \ref{P2-16} to the subnormalized measure collection given in Lemma \ref{2-16-10},
we have the following theorem by using $\epsilon'=1-\epsilon$.
\begin{theorem}\Label{th2-16}
For any real number $\epsilon \in (0,1)$,
there is a collection $\{m_{n,\omega} \}_{\omega} $
of non-negative measures on ${\cal X}^n$ such that
\begin{align}
\log \beta(T[\{m_{n,\omega} \}_{\omega}])
& \le 
-n (\log d_A d_B -H_{1}(\Psi)) -\sqrt{n} \sqrt{V(\Psi)} 
\Phi^{-1}(\epsilon') - \log n +O(1),
 \Label{2-3-18b}\\
\alpha(T[\{m_{n,\omega} \}_{\omega}])
& \to \epsilon'.\Label{2-3-19b}
\end{align}
\hfill $\square$\end{theorem}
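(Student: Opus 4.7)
My plan is to derive Theorem \ref{th2-16} by directly feeding the collection of measures constructed in Lemma \ref{2-16-10} into the two-round classical communication two-way LOCC test $T[\{m_\omega\}_\omega]$ introduced in Section \ref{sec two-way}, whose two error probabilities are given in closed form by Proposition \ref{P2-16}. First I would invoke Lemma \ref{2-16-10} with parameter $\epsilon = 1-\epsilon'$ to obtain a family $\{m_{n,\omega}\}_\omega$ of non-negative measures on ${\cal X}^n$ satisfying (\ref{3-1-1}) and (\ref{3-1-2}), where $P^n$ is the product distribution built from the Schmidt coefficients $(\lambda_x)$ of $\ket{\Psi}$.

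Next I would apply Proposition \ref{P2-16} to the $n$-copy protocol $T[\{m_{n,\omega}\}_\omega]$, regarded as a two-way LOCC discrimination between $\ket{\Psi}^{\otimes n}$ and $\rho_{mix}^{\otimes n}$. For the type-1 error, Proposition \ref{P2-16} yields $\alpha(T[\{m_{n,\omega}\}_\omega]) = 1 - \sum_\omega \sum_{\vec{x}} P^n(\vec{x}) m_{n,\omega}(\vec{x})$, which by (\ref{3-1-2}) equals $1-\epsilon+o(1)=\epsilon'+o(1)$, establishing (\ref{2-3-19b}). For the type-2 error the same proposition expresses $\beta(T[\{m_{n,\omega}\}_\omega])$ as $(d_Ad_B)^{-n}$ times exactly the sum controlled by (\ref{3-1-1}); taking logarithms and inserting (\ref{3-1-1}) yields the bound $\log\beta \le -n\log d_Ad_B + nH_1(\Psi) + \sqrt{n}\sqrt{V(\Psi)}\,\Phi^{-1}(\epsilon) - \log n + O(1)$. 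The identity $\Phi^{-1}(1-\epsilon') = -\Phi^{-1}(\epsilon')$ for the standard-normal quantile then converts this to precisely (\ref{2-3-18b}).

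The real content is hidden in Lemma \ref{2-16-10}, where the family $\{m_{n,\omega}\}_\omega$ must be engineered so that its support sizes $|m_{n,\omega}|$ together with its first and second $P^n$-moments combine to reproduce the third-order Stein-Strassen asymptotics; this rests on Theorem \ref{11-07-2T} together with the type method and the strong large deviation machinery of Appendix \ref{A1}. Once Lemma \ref{2-16-10} is in hand, the theorem itself is essentially a substitution, and the main obstacles at this stage are careful bookkeeping of the sign of $\Phi^{-1}$ under the change $\epsilon \leftrightarrow 1-\epsilon'$ and the identification of the relevant $n$-fold distribution on ${\cal X}^n$ with the Schmidt data of $\ket{\Psi}^{\otimes n}$.
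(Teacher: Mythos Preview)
Your proposal is correct and matches the paper's own proof essentially verbatim: the paper derives Theorem \ref{th2-16} in a single sentence by applying Proposition \ref{P2-16} to the collection of measures furnished by Lemma \ref{2-16-10} under the substitution $\epsilon'=1-\epsilon$, which is precisely your plan. Your only slightly off-target remark is the aside that Lemma \ref{2-16-10} rests on Theorem \ref{11-07-2T}; in fact the paper proves Lemma \ref{2-16-10} independently via Lemmas \ref{2-16-1} and \ref{2-16-9} (saddle-point approximation and Cram\'er--Ess\'een), while Theorem \ref{11-07-2T} handles the separable converse.
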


In Subsection \ref{sec separableb}, we have already shown that 
$\beta_{n,sep}\left(\epsilon' |\Psi\|\rho_{mix} \right) $ can be given by (\ref{2-5-8}).
Hence, $\beta_{n,\leftrightarrow}\left(\epsilon' |\Psi\|\rho_{mix} \right) 
\ge -n (\log d_A d_B -H_{1}(\Psi)) -\sqrt{n} \sqrt{V(\Psi)} \Phi^{-1}(\epsilon')
- \log n +O(1)$.
Theorem \ref{th2-16} guarantees the opposite inequality.
Hence, we obtain the remaining part of (\ref{2-5-8}).

\noindent{\it Construction of subnormalized measure collection: \quad}
Now, to show Lemma \ref{2-16-10}, we construct the subnormalized measure collection $\{m_{n|\epsilon,k}\}_{k=0}^{M_n}$ as follows.
For this purpose, when $\log P(x)-\log P(x') $ is a lattice variable, 
we define the real number $c$ to be the lattice span $d_S$.
When $\log P(x)-\log P(x') $ is a non-lattice variable, 
we define the real number $c$ to be an arbitrary positive real number.
For the definitions of lattice and non-lattice variables and the lattice span $d_S$, see Appendix \ref{A1}.
We fix $a,b>0$ such that $c>a$.

Then, we prepare the following lemma.
\begin{lemma}
The function $f(t) \stackrel{\rm def}{=}\min_{s\ge 0} -sH_{1+s}(\Psi)+(1+s) (H_{1}(\Psi) - ct)
- (H_{1}(\Psi) -b -a t)$ monotonically decreases for $t>0$,
and there uniquely exists  
$t_0>0$ such that $f(t_0)=0$.
\hfill $\square$\end{lemma}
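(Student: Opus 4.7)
The plan is to verify two properties of $f$: strict monotone decrease and existence of a unique positive zero. I would first introduce the abbreviation $G(s,t) := -sH_{1+s}(\Psi) + (1+s)(H_{1}(\Psi) - ct)$, so that $f(t) = \min_{s \ge 0} G(s,t) + at + b - H_{1}(\Psi)$. Using $-sH_{1+s}(\Psi) = \log\sum_i \lambda_i^{1+s}$, the map $s \mapsto G(s,t)$ is convex (log-sum-exp plus a linear term), and $f$ itself is the infimum of an indexed family of affine functions of $t$ plus an affine shift, so $f$ is concave in $t$ on its effective domain and hence continuous there.

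\textbf{Strict monotone decrease.} Fix $t_1 < t_2$ in the effective domain and let $s_1^\star$ attain the minimum at $t_1$. Since $G(s, t_2) - G(s, t_1) = -(1+s)c(t_2 - t_1)$,
\begin{equation*}
f(t_2) - f(t_1) \;\le\; G(s_1^\star, t_2) - G(s_1^\star, t_1) + a(t_2 - t_1) \;=\; \bigl(a - (1+s_1^\star)c\bigr)(t_2 - t_1) \;\le\; (a - c)(t_2 - t_1),
\end{equation*}
which is strictly negative because $s_1^\star \ge 0$ and $c > a$. This is the envelope-theorem style argument that drives monotonicity.

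\textbf{Value at $t = 0$ and upper bound for large $t$.} At $t = 0$, the function $G(s,0) = -sH_{1+s}(\Psi) + (1+s)H_{1}(\Psi)$ has derivative in $s$ equal to $-H_{1}(\Psi) + H_{1}(\Psi) = 0$ at $s = 0$, and convexity in $s$ promotes this critical point to the global minimizer on $[0,\infty)$ with value $H_{1}(\Psi)$. Therefore $f(0) = H_{1}(\Psi) - H_{1}(\Psi) + b = b > 0$. On the other hand, taking $s = 0$ in the definition yields the simple upper bound $f(t) \le b - (c-a)t$, which is nonpositive at $t_1 := b/(c-a)$. The intermediate value theorem applied on $[0, t_1]$, together with concavity-based continuity, produces some $t_0 \in (0, t_1]$ with $f(t_0) = 0$; uniqueness is immediate from strict monotonicity.

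\textbf{Main obstacle.} The principal subtlety is ensuring the infimum defining $f$ is finite and attained throughout $[0, b/(c-a)]$, so that $f$ is genuinely real-valued, concave, and continuous there. This reduces to controlling the large-$s$ asymptotic $G(s,t) \sim (1+s)(\log\lambda_{\max} + H_{1}(\Psi) - ct)$: the infimum escapes to $-\infty$ precisely when $ct$ exceeds $H_{1}(\Psi) + \log\lambda_{\max}$, and one must verify (or work around) that the zero of $f$ sits comfortably inside the range where this does not happen. The strict inequality $c > a$ is exactly the hypothesis that provides the slack in the envelope estimate and, via the $s=0$ upper bound, localizes the zero to $(0, b/(c-a)]$.
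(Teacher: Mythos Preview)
Your approach matches the paper's: both rewrite $f(t) = b + \min_{s \ge 0}\bigl[s(H_1(\Psi) - H_{1+s}(\Psi)) - (sc+c-a)t\bigr]$, use $sc+c-a > 0$ for strict decrease, compute $f(0) = b$ (you via convexity and the vanishing derivative at $s=0$, the paper via $H_1(\Psi) \ge H_{1+s}(\Psi)$), and bound $f$ above by a single affine piece to force a sign change. Your ``main obstacle'' about the effective domain is a legitimate subtlety that the paper's proof also does not address; it is harmless in context because $a, b > 0$ are free design parameters that one may choose small, and attainment of the infimum is not actually needed in your envelope inequality (an $\epsilon$-near-minimizer gives the same bound).
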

\begin{proof}
Since
$f(t) =b+\min_{s\ge 0} s (H_{1}(\Psi)-H_{1+s}(\Psi)) -(sc+ c-a) t$
and $sc+ c-a>0 $,
$f(t)$ is strictly monotonically decreasing for $t>0$.

Since $H_{1}(\Psi)-H_{1+s}(\Psi)\ge 0$ with $s\ge 0$ and its equality holds only with $s=0$,
we have
$f(0)=b+\min_{s\ge 0} s (H_{1}(\Psi)-H_{1+s}(\Psi)) =b>0$.
On the other hand,
for a fixed $s \ge 0$,
$b+s (H_{1}(\Psi)-H_{1+s}(\Psi)) -(sc+ c-a) t$ goes to $-\infty$
when $t$ goes to the infinity.
Hence, $f(t)$ goes to $-\infty$ when $t$ goes to infinity.
Thus, there uniquely exists  
$t_0>0$ such that $f(t_0)=0$.
\end{proof}

Now, we fix $t \in (0,t_0)$, and define
\begin{align}
{\cal R}_{k,n|\epsilon}\stackrel{\rm def}{=}
\left\{
\vec{x} \in {\cal X}^n
\left|
\begin{array}{l}
n H_1(\Psi) + \sqrt{n} \sqrt{V(\Psi)} \Phi^{-1}(\epsilon) - ck
\ge - \log P^n(\vec{x}) \\
> n H_1(\Psi) + \sqrt{n} \sqrt{V(\Psi)} \Phi^{-1}(\epsilon) - c(k+1) 
\end{array}
\right.\right\},
\end{align}
and
\begin{align}
M_n \stackrel{\rm def}{=} \lfloor e^{nb} \rfloor , ~
N_n \stackrel{\rm def}{=} |{\cal R}_{0,n}|M_n^{-1} ,~
N_{k,n} \stackrel{\rm def}{=} N_n e^{-ka} .
\end{align}

For $k \le tn$, we define $M_n$ subsets 
${\cal R}_{k,n,1|\epsilon}, \ldots, {\cal R}_{k,n,M_n|\epsilon}$
of ${\cal R}_{k,n|\epsilon}$,
whose cardinalities are $N_{k,n}$.
We define the measure $m_{n,j}$ ($j=1, \ldots, M_n$)
as the measure satisfying the following two conditions.
The support of $m_{n|\epsilon,j}$ is 
${\cal S}_{j,n}\stackrel{\rm def}{=}\cup_{k=0}^{tn} {\cal R}_{k,n,j|\epsilon}$.
For $\vec{x} \in \cup_{k=0}^{tn} {\cal R}_{k,n|\epsilon}$,
the relation $\sum_{j=1}^{M_n}m_{n|\epsilon,j}(\vec{x})=1$ holds.
That is, $\{m_{n|\epsilon,k}\}_{k=0}^{M_n}$ 
forms a subnormalized measure collection.
\hfill $\square$

\begin{lemma}\Label{L2-4C}
The subnormalized measure collection $\{m_{n|\epsilon,k}\}_{k=0}^{M_n}$ 
satisfies (\ref{3-1-1}) and (\ref{3-1-2}).
\hfill $\square$\end{lemma}

In the following, for the simplicity, we omit the subscript $|\epsilon$.
For our proof of Lemma \ref{L2-4C},
we prepare the following lemma.
\begin{lemma}\Label{2-16-1}
\begin{align}
\log |{\cal R}_{0,n}| 
&=n H_1(\Psi) + \sqrt{n} \sqrt{V(\Psi)} \Phi^{-1}(\epsilon)
-\frac{1}{2}\log n + O(1).\Label{2-12-2}  \\
\log \sum_{k=0}^{\infty} |{\cal R}_{k,n}| e^{ka}
& = n H_1(\Psi) + \sqrt{n} \sqrt{V(\Psi)} \Phi^{-1}(\epsilon)
-\frac{1}{2}\log n + O(1) \Label{2-12-1}\\
\log \max_{k=0,\ldots, tn} P^n({\cal R}_{k,n})
&=-\frac{1}{2}\log n+ O(1) \Label{2-16-7}, 
\end{align}
and
\begin{align}
&P^n\left\{
\vec{x} \in {\cal X}^n
\left|
n H_1(\Psi) + \sqrt{n} \sqrt{V(\Psi)} \Phi^{-1}(\epsilon)
< - \log P^n(\vec{x}) 
\right.\right\}
\to \epsilon. \Label{2-15-2}
\end{align}
\hfill $\square$\end{lemma}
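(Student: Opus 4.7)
The plan is to reduce all four claims to asymptotic properties of the sum $S_n \stackrel{\rm def}{=} -\log P^n(\vec{x}) = \sum_{i=1}^n (-\log P(x_i))$, regarded as a sum of i.i.d.\ random variables with mean $H_1(\Psi)$ and variance $V(\Psi)$. The identity \eqref{2-15-2} is the simplest: the event in question is $\{(S_n - nH_1(\Psi))/\sqrt{nV(\Psi)} > \Phi^{-1}(\epsilon)\}$, whose probability converges to the standard normal tail $\epsilon$ by the ordinary central limit theorem, and this would be disposed of first.

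The heart of the proof is a uniform estimate on the mass that $P^n$ assigns to each slab $\mathcal{R}_{k,n}$, since $S_n$ is constrained there to an interval of length $c$. Applying Proposition \ref{11-4-4} (Bahadur--Rao strong large deviation) to the tilted measure at the exponential rate corresponding to the shifted center $nH_1(\Psi)+\sqrt{nV(\Psi)}\Phi^{-1}(\epsilon)-ck$, I would deduce
\begin{align*}
P^n(\mathcal{R}_{k,n}) = \frac{\Theta(1)}{\sqrt{n}} e^{-n I(r_k)},
\end{align*}
where $r_k$ is the corresponding empirical mean and $I$ is the Legendre dual of the cumulant generating function of $-\log P$, with $I$ vanishing only at $H_1(\Psi)$. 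For $k=0$ one is in the central-limit (local-CLT) regime, so $P^n(\mathcal{R}_{0,n}) = \Theta(1/\sqrt{n})$, and since $I\ge 0$ the maximum over $0\le k\le tn$ is achieved at $k=0$; this immediately yields \eqref{2-16-7}.

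By construction every $\vec{x}\in \mathcal{R}_{k,n}$ satisfies $P^n(\vec{x}) = \Theta(1)\, e^{-nH_1(\Psi)-\sqrt{nV(\Psi)}\Phi^{-1}(\epsilon)+ck}$, so summing this over $\mathcal{R}_{k,n}$ gives the key identity
\begin{align*}
|\mathcal{R}_{k,n}| = \Theta(1)\, P^n(\mathcal{R}_{k,n})\, e^{nH_1(\Psi)+\sqrt{nV(\Psi)}\Phi^{-1}(\epsilon)-ck}.
\end{align*}
Setting $k=0$ and taking logarithms together with $P^n(\mathcal{R}_{0,n}) = \Theta(1/\sqrt{n})$ produces \eqref{2-12-2}. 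Multiplying by $e^{ka}$ and summing yields
\begin{align*}
\sum_{k\ge 0} |\mathcal{R}_{k,n}|\, e^{ka} = \Theta(1)\, e^{nH_1(\Psi)+\sqrt{nV(\Psi)}\Phi^{-1}(\epsilon)} \sum_{k\ge 0} P^n(\mathcal{R}_{k,n})\, e^{-(c-a)k}.
\end{align*}
Because $c>a$ by the choice of $c$, the geometric factor $e^{-(c-a)k}$ dominates and forces the sum $\sum_k P^n(\mathcal{R}_{k,n}) e^{-(c-a)k}$ to be $\Theta(1/\sqrt{n})$: the upper bound follows from $\max_k P^n(\mathcal{R}_{k,n}) = O(1/\sqrt{n})$ and the geometric series, and the lower bound from the $k=0$ term alone. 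Taking logarithms gives \eqref{2-12-1}.

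The main technical obstacle is the precise $\Theta(1/\sqrt{n})$ behaviour of $P^n(\mathcal{R}_{k,n})$ in both the lattice and non-lattice cases; this is exactly what Proposition \ref{11-4-4} is designed to supply, and the choice of $c$ as the lattice span in the lattice case is precisely what ensures that the defining interval of $\mathcal{R}_{k,n}$ captures a single lattice point of $S_n$, so that the Bahadur--Rao prefactor does not collapse. Once this prefactor is secured uniformly in $k\le tn$, the rest of the argument is bookkeeping between $|\mathcal{R}_{k,n}|$ and $P^n(\mathcal{R}_{k,n})$ through the defining two-sided bound on $-\log P^n(\vec{x})$.
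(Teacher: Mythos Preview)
Your structural argument—converting $|\mathcal{R}_{k,n}|$ to $P^n(\mathcal{R}_{k,n})$ via the two-sided bound on $-\log P^n(\vec{x})$, then summing with the geometric weight—is exactly what the paper does (cf.\ its identity (\ref{2-12-5}) and the chain (\ref{2-12-6})--(\ref{2-12-4})). Your treatment of (\ref{2-15-2}) by the central limit theorem is also the paper's.

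The gap is in the analytic tool you invoke for the key estimate $P^n(\mathcal{R}_{k,n})=\Theta(1/\sqrt{n})$. Proposition~\ref{11-4-4} (Bahadur--Rao) is a statement about tail probabilities $p^n\{X_n\ge nR\}$ for $R$ \emph{fixed} and strictly separated from the mean; its expansion is not uniform as $R\to E[X]$ (in the non-lattice case the constant $\chi_1(R)$ contains $-\log\eta(R)$, which diverges). But the slabs that matter here—$\mathcal{R}_{0,n}$ and more generally $k=O(\sqrt{n})$—lie precisely in the central regime, where the standardized variable sits in a window of width $c/\sqrt{nV(\Psi)}$. Getting a $\Theta(1/\sqrt{n})$ probability for such a window requires a local/Edgeworth-type refinement of the CLT, not Bahadur--Rao; the ordinary Berry--Ess\'een bound with its $O(1/\sqrt{n})$ error is also insufficient, since the interval itself has width $O(1/\sqrt{n})$. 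The paper accordingly uses the Cram\'er--Ess\'een expansion (its equation (\ref{2-16-1b})), which controls $F_{n,c}(t)$ to order $o(1/\sqrt{n})$ and hence allows the differences over $c$-wide windows to be computed as in (\ref{2-12-3}) and (\ref{2-12-4}); the lattice and non-lattice cases are handled by the two versions of that expansion. Once you replace Proposition~\ref{11-4-4} by this Edgeworth input, the rest of your bookkeeping (the geometric sum in $e^{-(c-a)k}$ with $c>a$, the $k=0$ lower bound) goes through and matches the paper's route.
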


This lemma will be shown in the end of this subsection.
Using Lemma \ref{2-16-1}, we can show the following lemma.

\begin{lemma}\Label{2-16-9}
There exist an integer $N$ and a real number $C$ 
such that any integer $n \ge N$ satisfies the following conditions.
The inequalities
\begin{align}
&N_{k,n} \le  |{\cal R}_{k,n}| 
\hbox{  for  any integer $k$ satisfying } k \le tn,
\Label{2-15-3} \\
& \log \sum_{j=1}^{M_n}
| \{\vec{x}|  m_j(\vec{x})\neq 0 \}|
\frac{\sum_{\vec{x} \in {\cal X}^n}P^n(\vec{x}) m_j(\vec{x})}
{\sum_{\vec{x} \in {\cal X}^n}P^n(\vec{x}) m_j(\vec{x})^2 } \nonumber \\
= &
 \log \sum_{j=1}^{M_n}\Big(\sum_{k=0}^{tn} N_{k,n}\Big)
\frac{\sum_{\vec{x} \in {\cal X}^n}P^n(\vec{x}) m_j(\vec{x})}
{\sum_{\vec{x} \in {\cal X}^n}P^n(\vec{x}) m_j(\vec{x})^2 } \nonumber \\
\le &
n H_1(\Psi) + \sqrt{n} \sqrt{V(\Psi)} \Phi^{-1}(\epsilon)
-\log n + C \Label{2-15-3b}, \\
&P^n\left\{
\vec{x} \in {\cal X}^n
\left|
n H_1(\Psi) + \sqrt{n} \sqrt{V(\Psi)} \Phi^{-1}(\epsilon)
- c tn
\ge - \log P^n(\vec{x}) 
\right.\right\} \nonumber  \\
\le & \min_{s \ge 0} 
e^{s n (-H_{1+s}(\Psi)+ H_1(\Psi)- c t  + \frac{1}{\sqrt{n}} \sqrt{V(\Psi)} \Phi^{-1}(\epsilon)) } \Label{2-15-1} 
\end{align}
hold.
\hfill $\square$\end{lemma}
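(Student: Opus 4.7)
The plan is to prove the three inequalities separately, each combining a large-deviation estimate for the i.i.d.\ sum $S_n \stackrel{\rm def}{=} \sum_{i=1}^{n}(-\log P(x_i))$ with the cardinality/probability estimates already established in Lemma~\ref{2-16-1}.

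For \eqref{2-15-3}, the pointwise upper bound $P^n(\vec{x}) \le e^{-nH_1(\Psi) - \sqrt{n}\sqrt{V(\Psi)}\Phi^{-1}(\epsilon) + c(k+1)}$ valid on ${\cal R}_{k,n}$ yields $|{\cal R}_{k,n}| \ge P^n({\cal R}_{k,n}) \, e^{nH_1(\Psi) + \sqrt{n}\sqrt{V(\Psi)}\Phi^{-1}(\epsilon) - c(k+1)}$. Substituting $N_{k,n} = |{\cal R}_{0,n}| M_n^{-1} e^{-ka}$ and the estimate of $|{\cal R}_{0,n}|$ from \eqref{2-12-2}, the desired inequality reduces to $M_n P^n({\cal R}_{k,n}) \ge e^{(c-a)k - nb + O(\log n)}$ uniformly in $k \le tn$. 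Using the Chernoff/Cram\'er lower bound $P^n({\cal R}_{k,n}) \ge e^{-nI(ck/n) - o(n)}$ with rate function $I(u) = \sup_{s \ge 0} s\bigl(u + H_{1+s}(\Psi) - H_1(\Psi)\bigr)$, it suffices to verify $nb - (c-a)k - nI(ck/n) \ge O(\log n)$ on $[0, tn]$. Expanding the minimum defining $f$, direct algebra shows $f(t) = b - (c-a)t - I(ct)$, so the condition $t < t_0$ — equivalently $f(t) > 0$ — is exactly what is required. For \eqref{2-15-1}, one applies the classical Chernoff upper bound $\Pr[S_n \le nR] \le e^{sn(R - H_{1+s}(\Psi))}$ for $s \ge 0$ (using $E[e^{-sY}] = \sum_x P(x)^{1+s} = e^{-sH_{1+s}(\Psi)}$), and substituting $R = H_1(\Psi) + \tfrac{1}{\sqrt{n}}\sqrt{V(\Psi)}\Phi^{-1}(\epsilon) - ct$ and minimizing over $s \ge 0$ reproduces the stated bound.

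For \eqref{2-15-3b}, I would first work out $m_j$ explicitly. Uniform coverage by the $M_n$ subsets ${\cal R}_{k,n,j}$, each of cardinality $N_{k,n}$, inside ${\cal R}_{k,n}$, together with the normalization $\sum_j m_j \equiv 1$ on $\cup_{k\le tn} {\cal R}_{k,n}$, forces $m_j(\vec{x}) = |{\cal R}_{k,n}|/(M_n N_{k,n})$ on ${\cal R}_{k,n,j}$; the hypothesis $c > a$ combined with $|{\cal R}_{k,n}| \lesssim |{\cal R}_{0,n}| e^{-ck}$ guarantees $m_j \le 1$. The numerator $\sum_j |\{\vec{x}: m_j \neq 0\}| = M_n \sum_{k=0}^{tn} N_{k,n} = |{\cal R}_{0,n}| \sum_{k=0}^{tn} e^{-ka}$ is controlled via \eqref{2-12-2} and a geometric sum, contributing $nH_1(\Psi) + \sqrt{n}\sqrt{V(\Psi)}\Phi^{-1}(\epsilon) - \tfrac{1}{2}\log n + O(1)$. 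The remaining ratio $\sum_{\vec{x}} P^n(\vec{x}) m_j(\vec{x}) / \sum_{\vec{x}} P^n(\vec{x}) m_j(\vec{x})^2$, when expanded as weighted sums of $P^n({\cal R}_{k,n})$ and then estimated using \eqref{2-12-1} for the sum over $k$ together with the uniform bound $\max_k P^n({\cal R}_{k,n}) = O(1/\sqrt{n})$ from \eqref{2-16-7}, is $O(1/\sqrt{n})$, supplying the remaining $\tfrac{1}{2}\log n$ to produce the full $-\log n$ correction.

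The hardest step will be \eqref{2-15-3b}: one must track the $\log n$ correction uniformly in $k \le tn$, which requires the saddle-point-sharp form $P^n({\cal R}_{k,n}) = \Theta(1/\sqrt{n})$ uniformly and careful bookkeeping of the geometric sums in $k$. In contrast, \eqref{2-15-1} is a one-line Chernoff bound, and the real content of \eqref{2-15-3} is the Legendre-transform identity $f(t) = b - (c-a)t - I(ct)$ tying feasibility of the construction to the parameter range $t < t_0$.
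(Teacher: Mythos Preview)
Your treatment of \eqref{2-15-1} and \eqref{2-15-3} is correct and essentially the same as the paper's. For \eqref{2-15-1} both you and the paper invoke the Chernoff/Markov bound. For \eqref{2-15-3} the paper applies Cram\'er's theorem directly to the \emph{counting} measure to obtain $\tfrac{1}{n}\log|{\cal R}_{t'n,n}|\to \min_{s\ge 0}[-sH_{1+s}(\Psi)+(1+s)(H_1(\Psi)-ct')]$ and then compares with $\tfrac{1}{n}\log N_{t'n,n}$; your route via $P^n({\cal R}_{k,n})$ and the pointwise bound on $P^n(\vec{x})$ is an equivalent detour, and your identity $f(t)=b-(c-a)t-I(ct)$ is exactly the rewriting that makes the role of $t_0$ transparent. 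Both arguments rely on the same thin-slice (local) large-deviation estimate and the same compactness argument on $[0,t]$ for uniformity.

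For \eqref{2-15-3b} your overall decomposition matches the paper's, but one step is wrong as written. You assert that the ratio
\[
\frac{\sum_{\vec{x}} P^n(\vec{x})\, m_j(\vec{x})}{\sum_{\vec{x}} P^n(\vec{x})\, m_j(\vec{x})^2}
\]
is $O(1/\sqrt{n})$. This cannot hold: you yourself argue (correctly) that $m_j(\vec{x})\approx |{\cal R}_{k,n}|/(M_nN_{k,n})=|{\cal R}_{k,n}|e^{ka}/|{\cal R}_{0,n}|$ is $O(1)$ on its support, and then $m_j^2\le C\,m_j$ forces the displayed ratio to be bounded \emph{below} by a positive constant. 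What is actually $O(1/\sqrt{n})$ is the \emph{reciprocal} ratio $\sum P^n m_j^2/\sum P^n m_j$. That is the quantity appearing in the $\beta$-formula of Proposition~\ref{P2-16} (where the numerator carries $m_\omega^2$ and the denominator $m_\omega$), and it is what the paper's proof in fact bounds: the chain leading to \eqref{2-16-8} uses the \emph{upper} estimate on $\sum P^n m_j^2$ over the \emph{lower} estimate on $\sum P^n m_j$, i.e.\ it controls $m^2/m$, not $m/m^2$. The printed statements of Lemma~\ref{2-16-9} and Lemma~\ref{2-16-10} have the fraction inverted relative to what the proof establishes and what Theorem~\ref{th2-16} requires. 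Once you flip the ratio, your outline is correct and coincides with the paper's argument: the factor $\max_k P^n({\cal R}_{k,n})=O(1/\sqrt{n})$ from \eqref{2-16-7}, together with \eqref{2-12-1} for the weighted sum over $k$ and \eqref{2-12-2} for $|{\cal R}_{0,n}|$, produces the extra $-\tfrac{1}{2}\log n$ and hence the full $-\log n$.
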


\begin{proofsof}{Lemma \ref{L2-4C}}
From (\ref{2-15-2}), (\ref{2-15-3b}), and (\ref{2-15-1}),
we find that the above subnormalized measure collection $\{m_{n,k}\}_{k=0}^{M_n}$ 
satisfies (\ref{3-1-1}) and (\ref{3-1-2}) of Lemma \ref{2-16-10}
because the right hand side of (\ref{2-15-1}) goes to zero.
So, we obtain Lemma \ref{L2-4C}.
\end{proofsof}

\begin{proofof}{Lemma \ref{2-16-9}}\par
\PF{Proof of (\ref{2-15-1}) and (\ref{2-15-3})}
Markov inequality implies (\ref{2-15-1}) in the same way as \cite[(2.121)]{H06}.
To prove (\ref{2-15-3}), using Cram\'{e}r Theorem, we show 
\begin{align}
\lim_{n\to \infty}\frac{1}{n}\log |{\cal R}_{ t' n,n} |
=\min_{s\ge 0} -sH_{1+s}(\Psi)+(1+s) (H_{1}(\Psi) - ct').
\end{align}
As shown in Lemma \ref{2-16-1}, we have
\begin{align}
\lim_{n\to \infty}\frac{1}{n}\log |{\cal R}_{0,n} |
=H_{1}(\Psi) .
\end{align}
Hence, we have
\begin{align}
\lim_{n\to \infty}\frac{1}{n}\log \frac{|{\cal R}_{t' n,n} |}{N_{t' n,n}}
=f(t') >0
\end{align}
for any real number $t'$ satisfying that $t'<t$.
Hence, when $n$ is sufficiently large, we have (\ref{2-15-3}).

\PF{Proof of (\ref{2-15-3b})}
Next, we proceed to the proof of \eqref{2-15-3b}.
In this proof, we will derive upper and lower bounds of 
$\sum_{\vec{x} \in {\cal X}^n}
P^n(\vec{x}) m_j(\vec{x}) $
and
$\sum_{\vec{x} \in {\cal X}^n}
P^n(\vec{x}) m_j(\vec{x})^2 $.
Using these bounds, we evaluate 
$ \log \sum_{j=1}^{M_n}
| \{\vec{x}|  m_j(\vec{x})\neq 0 \}|
\frac{\sum_{\vec{x} \in {\cal X}^n}P^n(\vec{x}) m_j(\vec{x})}
{\sum_{\vec{x} \in {\cal X}^n}P^n(\vec{x}) m_j(\vec{x})^2 }$.

From the above discussion, 
for any vector $\vec{x} \in {\cal R}_{k,n} $ and any integer $k$ satisfying $k \le tn$, 
the relation
$\lfloor \frac{|{\cal R}_{k,n}|}{N_{k,n} }\rfloor
/M_n \le m_j(\vec{x}) \le 
\Big\lceil \frac{|{\cal R}_{k,n}|}{N_{k,n} }\Big\rceil /M_n$
holds. 
Then, for $j=1, \ldots, M_n$
\begin{align}
& 
\frac{1}{2 e^{c}}
\sum_{k=0}^{tn} 
P^n({\cal R}_{k,n})/M_n
\le
\sum_{k=0}^{tn} 
N_{k,n} e^{-(n H_1(\Psi) + \sqrt{n} \sqrt{V(\Psi)} \Phi^{-1}(\epsilon)+ c(k+1))}
\bigg\lfloor \frac{|{\cal R}_{k,n}|}{N_{k,n} }\bigg\rfloor
/M_n \nonumber \\
\le &
\sum_{\vec{x} \in {\cal X}^n}
P^n(\vec{x}) m_j(\vec{x}) \nonumber\\
\le &
\sum_{k=0}^{tn} 
N_{k,n} e^{-(n H_1(\Psi ) + \sqrt{n} \sqrt{V(\Psi )} \Phi^{-1}(\epsilon )+ ck)}
\bigg\lceil \frac{|{\cal R}_{k,n}|}{N_{k,n} }\bigg\rceil
/M_n
\le
2 e^{c}
\sum_{k=0}^{tn} 
P^n({\cal R}_{k,n})/M_n
\end{align}
because
$\frac{1}{2} 
\frac{|{\cal R}_{k,n}|}{N_{k,n} }
\le
\big\lfloor \frac{|{\cal R}_{k,n}|}{N_{k,n} }\big\rfloor$
and
$ \big\lceil \frac{|{\cal R}_{k,n}|}{N_{k,n} }\big\rceil
\le 2 \frac{|{\cal R}_{k,n}|}{N_{k,n}}$.
Thus,
\begin{align}
& 
\frac{1}{4 e^{c}}
\sum_{k=0}^{tn} 
P^n({\cal R}_{k,n})
\frac{|{\cal R}_{k,n}|}{N_{k,n} M_n^2} \nonumber \\
\le &
\sum_{k=0}^{tn} 
N_{k,n} e^{-(n H_1(\Psi) + \sqrt{n} \sqrt{V(\Psi)} \Phi^{-1}(\epsilon)+ c(k+1))}
\bigg\lfloor \frac{|{\cal R}_{k,n}|}{N_{k,n} }\bigg\rfloor^2
/M_n^2 \nonumber \\
\le &
\sum_{\vec{x} \in {\cal X}^n}
P^n(\vec{x}) m_j(\vec{x})^2 \nonumber\\
\le &
\sum_{k=0}^{tn} 
N_{k,n} e^{-(n H_1(\Psi ) + \sqrt{n} \sqrt{V(\Psi )} \Phi^{-1}(\epsilon )+ ck)}
\bigg\lceil \frac{|{\cal R}_{k,n}|}{N_{k,n} }\bigg\rceil^2
/M_n^2 \nonumber \\
\le &
4 e^{c}
\sum_{k=0}^{tn} 
P^n({\cal R}_{k,n})
\frac{|{\cal R}_{k,n}|}{N_{k,n} M_n^2}.
\end{align}
Hence,
\begin{align}
& \Big(\sum_{k=0}^{tn} N_{k,n}\Big)
\frac{\sum_{\vec{x} \in {\cal X}^n}P^n(\vec{x}) m_j(\vec{x})}
{\sum_{\vec{x} \in {\cal X}^n}P^n(\vec{x}) m_j(\vec{x})^2 } 
\le
\Big(\sum_{k=0}^{tn} N_{n} e^{-ka}\Big)
\frac{4 e^{c}
\sum_{k=0}^{tn} 
P^n({\cal R}_{k,n})
\frac{|{\cal R}_{k,n}|}{N_{k,n} M_n^2}}
{\frac{1}{2 e^{c}}
\sum_{k=0}^{tn} 
P^n({\cal R}_{k,n})/M_n} \nonumber \\
\le &
\frac{8 e^{2c}}{1-e^{-a}}
\cdot N_{n} \cdot
\frac{\sum_{k=0}^{tn} 
P^n({\cal R}_{k,n})
\frac{|{\cal R}_{k,n}|}{N_{k,n} M_n}}
{ P^n(\cup_{k=0}^{tn}{\cal R}_{k,n})} \nonumber \\
= &
\frac{8 e^{2c}}{1-e^{-a}}
\cdot N_{n} \cdot
\frac{\sum_{k=0}^{tn} 
P^n({\cal R}_{k,n})
\frac{|{\cal R}_{k,n}|}{|{\cal R}_{0,n}| } e^{ka}}
{ P^n(\cup_{k=0}^{tn}{\cal R}_{k,n})} \nonumber \\
\le &
\frac{8 e^{2c}}{1-e^{-a}}
\cdot N_{n} \cdot
\frac{\max_{k=0}^{tn} 
P^n({\cal R}_{k,n})
\sum_{k=0}^{tn}\frac{|{\cal R}_{k,n}|}{|{\cal R}_{0,n}| } e^{ka}}
{ P^n(\cup_{k=0}^{tn}{\cal R}_{k,n})} .
\end{align}
Therefore,
\begin{align}
& \sum_{j=1}^{M_n}\bigg(\sum_{k=0}^{tn} N_{k,n}\bigg)
\frac{\sum_{\vec{x} \in {\cal X}^n}P^n(\vec{x}) m_j(\vec{x})}
{\sum_{\vec{x} \in {\cal X}^n}P^n(\vec{x}) m_j(\vec{x})^2 } \nonumber \\
\le &
M_n \frac{8 e^{2c}}{1-e^{-a}}
\cdot N_{n} \cdot
\frac{\max_{k=0}^{tn} 
P^n({\cal R}_{k,n})
\sum_{k=0}^{tn}\frac{|{\cal R}_{k,n}|}{|{\cal R}_{0,n}| } e^{ka}}
{ P^n(\cup_{k=0}^{tn}{\cal R}_{k,n})} \nonumber \\
= &
\frac{8 e^{2c}}{1-e^{-a}}
\cdot |{\cal R}_{0,n}| \cdot
\frac{\max_{k=0}^{tn} 
P^n({\cal R}_{k,n})
\sum_{k=0}^{tn}\frac{|{\cal R}_{k,n}|}{|{\cal R}_{0,n}| } e^{ka}}
{ P^n(\cup_{k=0}^{tn}{\cal R}_{k,n})} . \Label{2-16-8}
\end{align}
Thus,
since (\ref{2-12-2}) and (\ref{2-12-1}) of Lemma \ref{2-16-1} guarantees that
\begin{align}
\log 
\sum_{k=0}^{tn}\frac{|{\cal R}_{k,n}|}{|{\cal R}_{0,n}| } e^{ka} =O(1),
\end{align}
(\ref{2-12-2}) and (\ref{2-16-7}) of Lemma \ref{2-16-1} and (\ref{2-16-8})
imply
\begin{align}
& \log \sum_{j=1}^{M_n}\bigg(\sum_{k=0}^{tn} N_{k,n}\bigg)
\frac{\sum_{\vec{x} \in {\cal X}^n}P^n(\vec{x}) m_j(\vec{x})}
{\sum_{\vec{x} \in {\cal X}^n}P^n(\vec{x}) m_j(\vec{x})^2 } \nonumber \\
= &
n H_1(\Psi) + \sqrt{n} \sqrt{V(\Psi)} \Phi^{-1}(\epsilon)
-\frac{1}{2}\log n
-\frac{1}{2}\log n +O(1).
\end{align}
Hence, we obtain (\ref{2-15-3b}).
\end{proofof}

\begin{proofof}{Lemma \ref{2-16-1}}\par
\PF{Non-lattice case}
In this proof,
we combine the saddle point approximation method given in \cite[Theorem 2.3.6]{Dembo98},\cite{Moulin13} and Cram\'{e}r-Ess\'{e}en theorem \cite[p. 538]{Feller}.
Define 
\begin{align}
v(\vec{x})&\stackrel{\rm def}{=} ( \log P^n(\vec{x})+nH_1(\Psi)+
\sqrt{n}\sqrt{V(\Psi)}\Phi^{-1}(\epsilon))/\sqrt{n} \nonumber\\
Q_n(v)&\stackrel{\rm def}{=} \sum_{\vec{x}:v(\vec{x})=v} P^n(\vec{x}) .\nonumber
\end{align}
Then, we have
\begin{align}
|\{\vec{x}| a \le v(\vec{x}) \le b \}| 
=
e^{n H_1(\Psi)+\sqrt{n}\sqrt{V(\Psi)}\Phi^{-1}(\epsilon)}
\sum_{v:a \le v \le b}
e^{-\sqrt{n} v}
Q_n(v). \Label{2-12-5}
\end{align}
Hence, 
\begin{align}
 \sum_{k=0}^{\infty} |{\cal R}_{k,n}| e^{ka} 
=&
e^{n H_1(\Psi)+\sqrt{n}\sqrt{V(\Psi)}\Phi^{-1}(\epsilon)}
\sum_{k=0}^{\infty}
\sum_{v:\frac{ck}{\sqrt{n}} \le v \le \frac{c(k+1)}{\sqrt{n}}}
e^{-\sqrt{n} v+ka}
Q_n(v) \nonumber \\
\le &
e^{n H_1(\Psi)+\sqrt{n}\sqrt{V(\Psi)}\Phi^{-1}(\epsilon)}
\sum_{k=0}^{\infty}
\sum_{v:\frac{ck}{\sqrt{n}} \le v \le \frac{c(k+1)}{\sqrt{n}}}
e^{-\sqrt{n} v + \sqrt{n} av/c}
Q_n(v) \nonumber \\
\le &
e^{n H_1(\Psi)+\sqrt{n}\sqrt{V(\Psi)}\Phi^{-1}(\epsilon)}
\sum_{v: v \ge 0}
e^{-\sqrt{n}(1-\frac{a}{c}) v }
Q_n(v) .\Label{2-12-6}
\end{align}
Similarly, we can show that
\begin{align}
\sum_{k=0}^{\infty} |{\cal R}_{k,n}| e^{ka} 
\ge 
e^{n H_1(\Psi)+\sqrt{n}\sqrt{V(\Psi)}\Phi^{-1}(\epsilon)}
\sum_{v: v \ge 0}
e^{-\sqrt{n}(1-\frac{a}{c}) v -a}
Q_n(v) .\Label{2-12-6b}
\end{align}

Next, we define the distribution function
\begin{align}
F_{n,c}(t)\stackrel{\rm def}{=}P^n\{\vec{x}|v(\vec{x}) \le t\}.
\end{align}
In the following, we consider the non-lattice case.
Now, we employ the saddle point approximation method given in \cite[Theorem 2.3.6]{Dembo98},\cite{Moulin13}.
As is known as Cram\'{e}r-Ess\'{e}en theorem \cite[p. 538]{Feller},
there exist a constant $S$ and a function $c_n$ such that
\begin{align}
F_{n,c}(t-\Phi^{-1}(\epsilon))= \Phi(t) -\frac{S}{6 \sqrt{n}}(1-t^2)
\frac{e^{-t^2/2}}{\sqrt{2\pi}}
+ \frac{c_n(t)}{\sqrt{n}}.\Label{2-16-1b}
\end{align}
and $|c_n(t)|\to 0$, which is uniformly convergent on compact sets.
Thus, we obtain (\ref{2-15-2}).

Hence,
\begin{align*}
&
\lim_{n \to \infty}\sqrt{n}
\Biggl|\sum_{v: c(k+1)/\sqrt{n} > v \ge ck /\sqrt{n}}
Q_n(v) 
-\int_{ck /\sqrt{n}}^{c(k+1)/\sqrt{n}} 
\frac{e^{-\frac{(v+\Phi^{-1}(\epsilon))^2}{2V(\Psi)}}
}{\sqrt{2\pi V(\Psi)}}
dv \Biggr| \\
\le &
\lim_{n \to \infty}\sqrt{n} \Bigl(
\int_{ck /\sqrt{n}}^{c(k+1)/\sqrt{n}} 
\frac{d}{dt}\Big( 
\frac{S}{6 \sqrt{n}}(1-(t+\Phi^{-1}(\epsilon ))^2) 
\frac{e^{-(t+\Phi^{-1}(\epsilon))^2/2}}{\sqrt{2\pi}}\Big)  dt \\
&+
\frac{c_n(c/\sqrt{n})-c_n(0)}{\sqrt{n}} \Bigr)
\\
= &
\lim_{n \to \infty}
\Bigl[\frac{S}{6}(1-(t+\Phi^{-1}(\epsilon ))^2) 
\frac{e^{-(t+\Phi^{-1}(\epsilon))^2/2}}{\sqrt{2\pi}} 
\Bigr]_{\frac{ck}{\sqrt{n}}}^{\frac{c(k+1)}{\sqrt{n}}} 
+
c_n(c/\sqrt{n})-c_n(0) \\
=& 0,
\end{align*}
and
\begin{align*}
&
\lim_{n \to \infty}\sqrt{n}
\Biggl| \sum_{v: v \ge 0}
e^{-\sqrt{n}(1-\frac{a}{c}) v }
Q_n(v) 
-\int_{0}^{\infty} 
e^{-\sqrt{n}(1-\frac{a}{c}) v }
\frac{e^{-\frac{(v+\Phi^{-1}(\epsilon))^2}{2V(\Psi)}}
}{\sqrt{2\pi V(\Psi)}}
dv \Biggr|\\
\le &
\lim_{n \to \infty}\sqrt{n}
\int_0^{\infty}
e^{-\sqrt{n}(1-\frac{a}{c}) t }
\frac{d}{dt}(\frac{S}{6 \sqrt{n}}(1-(t+\Phi^{-1}(\epsilon ))^2)
\frac{e^{-(t+\Phi^{-1}(\epsilon))^2/2}}{\sqrt{2\pi}}) dt \\
&+
\inf_{a}
(2 \sup_{v \le a} e^{-\sqrt{n}(1-\frac{a}{c}) v }
\sup_{t \le a} \frac{|c_n(t)|}{\sqrt{n}}
+
2 \sup_{v > a} e^{-\sqrt{n}(1-\frac{a}{c}) v }
\sup_{t > a} \frac{|c_n(t)|}{\sqrt{n}})
\\
= &
\lim_{n \to \infty}
\int_0^{\infty}
e^{-\sqrt{n}(1-\frac{a}{c}) t }
\frac{d}{dt}(\frac{S}{6}(1-(t+\Phi^{-1}(\epsilon ))^2)
\frac{e^{-(t+\Phi^{-1}(\epsilon ))^2/2}}{\sqrt{2\pi}}) dt\\
&+
\inf_{a}
(2 \sup_{v \le a} e^{-\sqrt{n}(1-\frac{a}{c}) v }
\sup_{t \le a} |c_n(t)|
+
2 \sup_{v > a} e^{-\sqrt{n}(1-\frac{a}{c}) v }
\sup_{t > a} |c_n(t)|)
\\
=& 0.
\end{align*}
Thus, when $tn$ satisfies  $v_0=c tn/\sqrt{n}$,
\begin{align}
&
\lim_{n \to \infty}\sqrt{n}
P^n({\cal R}_{tn,n})
=
\lim_{n \to \infty}\sqrt{n}
\sum_{v:c(tn+1)/\sqrt{n}> v \ge c tn/\sqrt{n}}
Q_n(v) \nonumber \\
=&\lim_{n \to \infty}\sqrt{n}
\int_{c tn/\sqrt{n}}^{c(tn+1)/\sqrt{n}} 
\frac{e^{-\frac{(v+\Phi^{-1}(\epsilon))^2}{2V(\Psi)}}
}{\sqrt{2\pi V(\Psi)}}dv 
=
c
\frac{e^{-\frac{(v_0+\Phi^{-1}(\epsilon))^2}{2V(\Psi)}}
}{\sqrt{2\pi V(\Psi)}},\Label{2-12-3}
\end{align}
which implies (\ref{2-16-7}).
Further,
\begin{align}
&
\lim_{n \to \infty}\sqrt{n}
\sum_{v: v \ge 0}
e^{-\sqrt{n}(1-\frac{a}{c}) v }
Q_n(v) 
=\lim_{n \to \infty}\sqrt{n}
\int_{0}^{\infty} 
e^{-\sqrt{n}(1-\frac{a}{c}) v }
\frac{e^{-\frac{(v+\Phi^{-1}(\epsilon))^2}{2V(\Psi)}}
}{\sqrt{2\pi V(\Psi)}}dv \nonumber \\
=&
\lim_{n \to \infty}\sqrt{n}
\int_{0}^{\infty} 
e^{-(1-\frac{a}{c}) x }
\frac{e^{-\frac{(x/\sqrt{n}+\Phi^{-1}(\epsilon))^2}{2V(\Psi)}}
}{\sqrt{2\pi V(\Psi) n}}dx \nonumber \\
=&
\int_{0}^{\infty} 
e^{-(1-\frac{a}{c}) x }
\frac{e^{-\frac{\Phi^{-1}(\epsilon)^2}{2V(\Psi)}}
}{\sqrt{2\pi V(\Psi)}}dx
=
\frac{1}{1-\frac{a}{c}}
\frac{e^{-\frac{\Phi^{-1}(\epsilon)^2}{2V(\Psi)}}
}{\sqrt{2\pi V(\Psi)}}.\Label{2-12-4}
\end{align}
Therefore, the combination of (\ref{2-12-5}) and (\ref{2-12-3}) yields (\ref{2-12-2}), 
and
the combination of (\ref{2-12-6}), (\ref{2-12-6b}) and (\ref{2-12-4}) yields (\ref{2-12-1}).

\PF{Lattice case}
Now, we consider the lattice case.
The range of the map $v$ is contained in $\{ a_n + \frac{ck}{\sqrt{n}}  \}_{k} $
by choosing a suitable real number $a_n$ with $|a_n| \le \frac{c}{2\sqrt{n}}  $.
Then, we define the set 
${\cal T}_n\stackrel{\rm def}{=}\{ a_n + \frac{ck}{\sqrt{n}}+\frac{c}{2\sqrt{n}}  \}_{k} $.
Then, (\ref{2-16-1b}) holds for $t\in {\cal T}_n$ \cite[pp. 52-67]{Esseen}\cite[p. 540]{Feller}.
Hence, similar to (\ref{2-12-3}) and (\ref{2-12-4}), 
we can show
\begin{align}
&
\lim_{n \to \infty}\sqrt{n}
\sum_{v:a_n+\frac{c(k+1)}{\sqrt{n}}\ge v \ge a_n+ \frac{c k}{\sqrt{n}}}
Q_n(v) \nonumber \\
=&\lim_{n \to \infty}\sqrt{n}
\int_{a_n+\frac{c k}{\sqrt{n}}}^{a_n+ \frac{c (k+1)}{\sqrt{n}}} 
\frac{e^{-\frac{(v+\Phi^{-1}(\epsilon))^2}{2V(\Psi)}}
}{\sqrt{2\pi V(\Psi)}}dv 
=
c
\frac{e^{-\frac{(v_0+\Phi^{-1}(\epsilon))^2}{2V(\Psi)}}
}{\sqrt{2\pi V(\Psi)}},\Label{2-12-3c}
\end{align}
with $v_0= k/\sqrt{n}$,
and
\begin{align}
&
\lim_{n \to \infty}\sqrt{n}
\sum_{v: v \ge a_n}
e^{-\sqrt{n}(1-\frac{a}{c}) v }
Q_n(v) 
=\lim_{n \to \infty}\sqrt{n}
\int_{a_n}^{\infty} 
e^{-\sqrt{n}(1-\frac{a}{c}) v }
\frac{e^{-\frac{(v+\Phi^{-1}(\epsilon))^2}{2V(\Psi)}}
}{\sqrt{2\pi V(\Psi)}}dv \nonumber \\
=&
\frac{1}{1-\frac{a}{c}}
\frac{e^{-\frac{\Phi^{-1}(\epsilon)^2}{2V(\Psi)}}
}{\sqrt{2\pi V(\Psi)}}.\Label{2-12-4c}
\end{align}
Hence, (\ref{2-12-3c}) implies (\ref{2-16-7}).
Further, the combination of (\ref{2-12-5}) and (\ref{2-12-3c}) yields (\ref{2-12-2}), 
and
the combination of (\ref{2-12-6}), (\ref{2-12-6b}) and (\ref{2-12-4c}) does (\ref{2-12-1}).
\end{proofof}

\section{Conclusion and discussion}\Label{sec summary}
In this paper, we have treated local asymptotic hypothesis testing between
an arbitrary known bipartite pure state $\ket{\Psi}$ and the white noise state (the completely mixed state) $\rho_{mix}$.
As a result, 
we have clarified the difference between
the optimal performance of one-way and two-way LOCC POVMs.
Under the exponential constraint for the type-1 error probability,
there clearly exists a difference between the optimal 
exponential decreasing rates of the type-2 error probabilities 
under one-way and two-way LOCC POVMs.
However, when we surpass the constraint for the type-1 error probability,
this kind of difference is very subtle.
That is, there exists a difference only in the third order for
the optimal exponential decreasing rates of the type-2 error probabilities 
under one-way and two-way LOCC POVMs.
This difference has been given as Theorem \ref{thm1}, which is called 
the Stein-Strassen bound.
The entanglement of Renyi entropy appears in the 
	   formulas of the optimal exponential decreasing rates of the 
	   type-2 error probabilities under both exponential and constant 
	   constraints for the type-1 error probability for the one-way LOCC, the 
	   two-way LOCC, and separable constraints. Hence, our results 
	   have clarified the relationship between the entanglement of 
	   Renyi entropy and the local hypothesis testing.

From the beginning of the study of LOCC, many studies have focused on 
the effect of increasing the number of communication rounds, 
as well as on the difference between two-way LOCC and separable operations. 
From this viewpoint, our study gives a very rare example in which
the optimal performance under the infinite-round two-way LOCC, which is different 
from the one under the one-way LOCC, 
can be attained with two-round communication and is also equal to the one  
under separable operations. 
To show the achievability by two-round communication,
we employ 
the saddle point approximation method given in \cite[Theorem 2.3.6]{Dembo98},\cite{Moulin13}.
To show the impossibility to surpass this performance even in the separable operation,
we use the strong large deviation by Bahadur-Rao \cite{BR}\cite[Theorem 3.7.4]{Dembo98}.
We believe that these methods will become very strong approaches for 
addressing several topics in quantum information.

Unfortunately, our result can be applied to the case when the state to be distinguished from the completely mixed state is a pure state.
This is a serious defect of our result.
However, since our result completely solved the asymptotic analysis of this kind of state discrimination in the pure state case,
we have very strong motivation to tackle the mixed state case.
Hence, the extension of this result to the general mixed state case is
remained as an interesting future study, which attracts future researchers.

As mentioned in Section 1, this type of hypothesis testing is closely related to 
many kinds of 
information theoretical tasks, such as 
data compression \cite{HK02,Nag07}, uniform random generation \cite{HK02}, 
channel coding with additive noise \cite{VH}, and resolvability of the distribution \cite{RH}.
Hence, our results are expected to be applied to extending these problems to the case with the locality condition. 
However, this kind of extension has the following problems.
Since the obtained results are limited to the pure state case,
we need to extend our result to the mixed state case for this kind of applications.
However, this defect can be escaped when 
we make several restrictions for the quantum states or the quantum channels, e.g.,
the output states of the c-q channel are assumed to be pure entangled states.
As another problem, we need careful considerations for the formulations of these extensions because there are several kinds of formulations.

For example, we can consider an extension of the c-q channel coding as follows.
We assume that a pure entangled state is given and that we are allowed to apply local unitary as an encoder.
The decoder is restricted to a measurement satisfying the locality condition.
In this case, since the encoded states are pure entangled states,
the above condition for the c-q channel is satisfied.
So, we expect that the asymptotic performance of this extension
can be characterized by our local hypothesis testing.
Since this setting is equal to the dense coding \cite{B1}, our analysis might bring
a deeper analysis for the dense coding.

In addition, we can consider an extension of uniform random generation as follows.
We assume that an entangled state is given and that we can apply local unitary randomly based on a uniform random number
so that the average state cannot be distinguished from the white noise state by
any measurement satisfying the locality condition.
In this case, the cardinality of the random number is as small as possible.
That is, we treat the trade-off between the above difficulty of local state discrimination and the cardinality of the used random number.
In this scenario, the difference between the product of local dimensions and the cardinality of the random number can be regarded as our analogue of the size of the generated uniform random number.
Then, we expect that the asymptotic performance of this extension can be characterized by our local hypothesis testing.
Analyses of these LOCC extensions remain as future work.

\section*{Acknowledgement}
MH is grateful to Dr. Vincent Tan for explaining the strong large deviation for the lattice case.
This research was partially supported by  
the MEXT Grant-in-Aid for Scientific Research (A) No. 23246071
and the National Institute of Information and Communication Technology (NICT), Japan.
The Centre for Quantum Technologies is funded by the Singapore
Ministry of Education and the National Research Foundation
as part of the Research Centres of Excellence Programme.

\appendices

\section{Results of \cite{OH10} used in Subsubsection \ref{subsub1}}\Label{A2}
Here, we summarize the results of \cite{OH10} used in Subsubsection \ref{subsub1}.
As a preparation, we explain a useful knowledge in a Euclidean space $\mathbb{R}^d$.
For two vectors $y$ and $z$ in a Euclidean space $\mathbb{R}^d$, 
and a real number $\epsilon$ satisfying $0< \epsilon \le 1$, 
we define the real number $M(y,z,\epsilon)$ as
\begin{equation}
 M(y,z,\epsilon)\stackrel{\rm def}{=} \max_{x \in \mathbb{R}^d} \{ 
  y\cdot x \ | \ \|x\| \le 1, \ x \cdot z \le \epsilon \}. 
  \Label{H5}
\end{equation}
 Then, we derive the following Lemma:
\begin{proposition}[\protect{\cite[Lemma 9]{OH10}}]\Label{Lemma 9}
Using $c\stackrel{\rm def}{=} y \cdot z$,
we calculate $M(y,z,\epsilon)$ as
\begin{align}
\Label{eq sec global c M y z epsilon = c epsilon +}
& M(y,z,\epsilon) \nonumber \\
=&
\left\{
\begin{array}{ll}
\| y \| & \hbox{Case D1)} \\
\frac{\|y\|}{\|z\|}\epsilon & \hbox{Case D2)} \\
\frac{c \epsilon + \sqrt{(\|z\|^2-\epsilon^2)(\|y\|^2\|z\|^2-c^2)}}
{\|z\|^{2}} 
& \hbox{Case D3)},
\end{array}
\right.
\end{align}
which is attained by 
\begin{align}
&x^*(y,z,\epsilon) \nonumber \\
\stackrel{\rm def}{=} &
\left\{
\begin{array}{ll}
y/\|y\| & \hbox{Case D1)}  \\
\epsilon\frac{y}{\|y\| \|z\|} & \hbox{Case D2)} \\
\frac{1}{\sqrt{\|z\|^2\|y\|^2-c^2}} 
\Big( \sqrt{\|z\|^2-\epsilon^2} y  & \\
\qquad + 
\frac{\epsilon \sqrt{\|z\|^2\|y\|^2-c^2}
- c \sqrt{\|z\|^2-\epsilon^2} }{\|z\|^{2}}z \Big)
& \hbox{Case D3)},
\end{array}
\right.
\Label{eq def x*}
\end{align}
where Cases D1), D2), and D3) are defined as
\begin{enumerate}
\item[D1)] $y\cdot z \le \epsilon \|y\|$. 
\item[D2)] $y/\|y\|=z/\|z\|$ and $y\cdot z > \epsilon \|y\| $.
\item[D3)] $y/\|y\|\neq z/\|z\|$ and $y\cdot z > \epsilon \|y\| $.
\end{enumerate}
Moreover, 
$x^*(y,z,\epsilon)$ defined by Eq. (\ref{eq def x*}) is the unique 
solution  of the optimization problem in Case D3). 
Note that the relation $\|z\|^2-\epsilon^2 \ge 0$
follows from the common condition of Cases D2) and D3).
\hfill $\square$\end{proposition}

Now, we concentrate the hypothesis testing with composite hypothesis 
formulated in Subsubsection \ref{subsub1}.
The first kind of error probability
$\alpha ( \epsilon^2 |\varphi ) $ has the following two expressions.

\begin{proposition}[\protect{\cite[Lemma 8]{OH10}}]\Label{Lemma 8}
We have the following relation
\begin{align}
1- \alpha ( \epsilon^2 |\varphi ) 
=&  \max \big \{ 
\braket{\varphi}{\phi}^2 \ \big | \ \ket{\phi}\in \Hi, \| \ket{\phi}\|^2\le 1, 
  \braket{\phi_d}{\phi}\le \epsilon, \nonumber \\
\quad & \qquad 1 \le \forall i \le d-1, \braket{i}{\phi}\ge \braket{i+1}{\phi} \ge 0,
  \big \} ,
\end{align}
where $\ket{\phi _{j}}$ is defined as 
\begin{equation} \Label{eq sec sep def phi-d}
\ket{\phi _{j}} \stackrel{\rm def}{=}  \frac{1}{\sqrt{j}}\sum _{i=1}^j \ket{i}.
\end{equation}
\hfill $\square$\end{proposition}

To give another expression for $\alpha ( \epsilon^2 |\varphi ) $,
we define the real vectors $u_l$ and $v_l$ on $\mathbb{R}^l$ as 
$u_l \stackrel{\rm def}{=}\left( \sqrt{p_1}, \cdots , \sqrt{p_l} \right)$ 
and $v_l \stackrel{\rm def}{=} \left( 1,  \cdots , 1 \right) /\sqrt{d}$
for an integer $l$ satisfying $1 \le l \le d$.
We also define the natural number $\eta=\eta_\epsilon(\varphi)$ as 
the maximum integer $1 \le l \le d$ satisfying one of the following three conditions:
\begin{enumerate}
 \item[A1)] $u_l \cdot v_l \le \epsilon\| u_l \|$.
 \item[A2)] $u_l/\|u_l\| = v_l/\|v_l\|$ and $u_l \cdot v_l > \epsilon\| u_l \|$. 
 \item[A3)] $u_l/\|u_l\| \neq v_l/\|v_l\|$, 
$u_l \cdot v_l > \epsilon\| u_l \|$, and all the elements 
of $x^*(u_l, v_l, \epsilon)$ defined by Eq. (\ref{eq def x*}) 
are non-negative.
\end{enumerate}

Since $u_1/\|u_1\| = v_1/\|v_1\|$, 
one of Conditions A1), A2), and A3) holds at least $l=1$, 
i.e., $\eta\ge 1$.
Hence, we can consider three cases.
\begin{enumerate}
\item[B1)] $u_\eta \cdot v_\eta \le \epsilon \|u_\eta\|$. 
\item[B2)] $u_\eta/\|u_\eta \|=v_\eta/\|v_\eta\|$ and $u_\eta \cdot v_\eta > \epsilon \|u_\eta\| $.
\item[B3)] $u_\eta/\|u_\eta\|\neq v_\eta/\|v_\eta\|$ and $u_\eta \cdot v_\eta > \epsilon \|u_\eta\| $.
\end{enumerate}

\begin{proposition}[\protect{\cite[Theorem 4]{OH10}}]\Label{Theorem 4}
By using $c_\eta \stackrel{\rm def}{=} u_\eta \cdot v_\eta$,
the value $\alpha ( \epsilon^2 |\varphi )$ defined in Eq. (\ref{eq sep x epsilon = max lastB}) is calculated as follows:
\begin{align}
&1- \alpha ( \epsilon^2 |\varphi )  \nonumber \\
=&
\left\{
\begin{array}{ll}
\sum _{i=1}^\eta p_i & \hbox{Case B1)} \\
\frac{\epsilon^2 \|u_\eta \|^2}{\|v_\eta \|^2} & \hbox{Case B2)} \\
\frac{\left( c_\eta \epsilon +  
 \sqrt{(\|v_\eta \|^2-\epsilon^2)(\|u_\eta \|^2\|v_\eta \|^2-c_\eta^2)}\right)^2}
{\|v_\eta \|^{4}}
& \hbox{Case B3)}.
\end{array}
\right.
\end{align}
The maximum value $1- \alpha ( \epsilon^2 |\varphi )$ is attained by 
\begin{align}
& \ket{\phi^* }
\stackrel{\rm def}{=} 
\left\{
\begin{array}{ll}
\ket{\phi[u_\eta/\|u_\eta\|]} & \hbox{Case B1)} \\
\ket{\phi[\epsilon \frac{u_\eta}{\|u_\eta \| \|v_\eta\|}]} & \hbox{Case B2)} \\
\ket{\phi[x^*(u_\eta,v_\eta,\epsilon)]} & \hbox{Case B3)}.
\end{array}
\right.
\end{align}
Note that $x^*(u_\eta,v_\eta,\epsilon)$ is defined in Eq. (\ref{eq def x*})
and the notation $\ket{\phi[~]} $ as
\begin{align}
|\phi[a]\rangle \stackrel{\rm def}{=} \sum_{i=1}^d 
a_i |i \rangle.
\Label{H6}
\end{align}
\hfill $\square$\end{proposition}

\section{Useful observations related to Appendix \ref{A2}}\Label{A4}
For the discussions in Subsubsection \ref{subsub1}, 
we discuss Conditions A1), A2), and A3) given in Appendix \ref{A2}.
In this appendix, we employ the same notations as Appendix \ref{A2}.
For Conditions A1) and A2), we have the following lemmas.

\begin{lemma}\Label{L-2-18}
The inequality 
$(u_l)_l/\|u_l\| \le (v_l)_l/\|v_l\|$ holds, and
the equality holds only when $p_1= p_l$.
In other words,
when $p_1> p_l$, the relation 
$u_l/\|u_l\| \neq v_l/\|v_l\|$ holds.
\hfill $\square$\end{lemma}

\begin{proof}
The inequality $ l p_l \le \sum_{i=1}^l p_i$ holds,
and the equality holds only when $p_1= p_l$.
Since 
$((v_l)_l/\|v_l\|)^2
=\frac{1}{l}$
and 
$((u_l)_l/\|u_l\|)^2= \frac{p_l}{\sum_{i=1}^l p_i}$,
we obtain the desired statement.
\end{proof}

Therefore, we can ignore Condition A2) except for the case of $p_1> p_l$.

\begin{lemma}\Label{L-2-18B}
$\frac{u_l \cdot v_l}{\| u_l \|}$
is strictly monotone increasing for $l$.
\hfill $\square$\end{lemma}

Hence, when 
$\frac{u_{\hat{l}} \cdot v_{\hat{l}}}{\| u_{\hat{l}} \|}=\epsilon$,
the relation $\frac{u_l \cdot v_l}{\| u_l \|}>\epsilon$
holds for $l \ge {\hat{l}}$, i.e., Condition A1) does not hold for 
$l \ge {\hat{l}}$.

\begin{proof}
Since 
$(\frac{u_l \cdot v_l}{d \| u_l \|})^2
=\frac{(\sum_{i=1}^l \sqrt{p_i})^2}{\sum_{i=1}^l p_i}$,
it is enough to show that
$\frac{(\sum_{i=1}^{l+1} \sqrt{p_i})^2}{\sum_{i=1}^{l+1} p_i}
>\frac{(\sum_{i=1}^l \sqrt{p_i})^2}{\sum_{i=1}^l p_i}$,
which is equivalent to 
$
({\sum_{i=1}^l p_i}){(\sum_{i=1}^{l+1} \sqrt{p_i})^2}
>({\sum_{i=1}^{l+1} p_i}){(\sum_{i=1}^l \sqrt{p_i})^2}
$.
We have
\begin{align}
&({\sum_{i=1}^l p_i}){(\sum_{i=1}^{l+1} \sqrt{p_i})^2}
-({\sum_{i=1}^{l+1} p_i}){(\sum_{i=1}^l \sqrt{p_i})^2} \\
=&p_{l+1}
\Big(
({\sum_{i=1}^l p_i})
+ \frac{2}{\sqrt{p_{l+1}}}
({\sum_{i=1}^l p_i})
(\sum_{i=1}^l \sqrt{p_i})
-(\sum_{i=1}^l \sqrt{p_i})^2
\Big) \\
=& p_{l+1}
\Big(
({\sum_{i=1}^l p_i})
+ 
(\sum_{i=1}^l \sqrt{p_i})
\big(
\frac{2}{\sqrt{p_{l+1}}}
({\sum_{i=1}^l p_i})
-(\sum_{i=1}^l \sqrt{p_i})
\big) \Big) .
\end{align}
Since $2 \frac{\sqrt{p_i}}{\sqrt{p_{l+1}}}>1$, we have
\begin{align}
\frac{2}{\sqrt{p_{l+1}}}
({\sum_{i=1}^l p_i})
-(\sum_{i=1}^l \sqrt{p_i})
=
2(\sum_{i=1}^l \frac{p_i}{\sqrt{p_{l+1}}})
-(\sum_{i=1}^l \sqrt{p_i})
>0.
\end{align}
So, we obtain the desired statement.
\end{proof}

\begin{lemma}\Label{L-2-18C}
Assume that
$ \frac{u_l \cdot v_l}{\|u_l\|}>\epsilon $ and $p_l <p_1$.
All entries of $x^*(u_l,v_l,\epsilon)$ are non-negative
if and only if 
\begin{align}
\sqrt{p_l}
\frac{d^{1/2} \|v_l\|^2}{u_l \cdot v_l}
\ge
\left(1-
\frac{\sqrt{\frac{\|u_l\|^2\|v_l\|^2}{(u_l \cdot v_l)^2}-1}}
{\sqrt{\frac{\|v_l\|^2}{\epsilon^2}-1}}
\right) .\Label{L-2-18C-Eq}
\end{align}
\hfill $\square$\end{lemma}

\begin{proof}
The above non-negativity is equivalent to
the non-negativity of the $l$-th entry of $x^*(u_l,v_l,\epsilon)$,
which is equivalent to
\begin{align*}
0 \le &
\sqrt{\|v_l\|^2 -\epsilon^2} \sqrt{p_l}
+\frac{
\epsilon \sqrt{\|u_l\|^2\|v_l\|^2-(u_l \cdot v_l)^2}
- u_l \cdot v_l \sqrt{\|v_l\|^2-\epsilon^2}
}{\|v_l\|^2} \frac{1}{d^{1/2}} \\
=&
\sqrt{\|v_l\|^2 -\epsilon^2} 
\left(
\sqrt{p_l}
+
\frac{u_l \cdot v_l}{\|v_l\|^2}
\left(
\frac{\sqrt{\frac{\|u_l\|^2\|v_l\|^2}{(u_l \cdot v_l)^2} -1}}
{\sqrt{\frac{\|v_l\|^2}{\epsilon^2}-1}}
-1\right) \frac{1}{d^{1/2}} \right).
\end{align*}
This condition is equivalent to
$\sqrt{p_l} \ge
\frac{u_l \cdot v_l}{d^{1/2} \|v_l\|^2}
\left(1-
\frac{\sqrt{\frac{\|u_l\|^2\|v_l\|^2}{(u_l \cdot v_l)^2}-1}}
{\sqrt{\frac{\|v_l\|^2}{\epsilon^2}-1}}
\right)$.
That is,
\begin{align}
\sqrt{p_l}
\frac{d^{1/2} \|v_l\|^2}{u_l \cdot v_l}
\ge
\left(1-
\frac{\sqrt{\frac{\|u_l\|^2\|v_l\|^2}{(u_l \cdot v_l)^2}-1}}
{\sqrt{\frac{\|v_l\|^2}{\epsilon^2}-1}}
\right) .
\end{align}
\end{proof}

\section{Strong large deviation}\Label{A1}
Let $p$ be a non-negative measure and
$d_S$ be the lattice span of the real valued function $X$, 
which is defined as follows.
Let $S$ be the set of the support of the measure $p \circ X^{-1}$.
When there exists a non-negative value $x$ satisfying 
$ \{ a-b\}_{a,b\in S}  \subset x \bZ$,
the real valued function $X$ is called a lattice function or a lattice variable.
Then, 
the lattice span $d_S$ is defined as the maximum value of the above non-negative value $x$.
Denoting all of elements of $S$ as $a_1<a_2< \ldots < a_l$,
we have 
\begin{align}
d_S= \min_{n_i \in \bZ} \bigg\{  \sum_{i=1}^l n_i a_i \Bigg| \sum_{i=1}^l n_i =0 ,~ 
\sum_{i=1}^l n_i a_i>0\bigg\}
\end{align}
due to the following reason;
When integers $y_1, \ldots, y_l$ have the greatest common divisor $1$,
there exist integers $n_1, \ldots, n_l$ such that $\sum_{i=1}^l n_i y_i=1$.

When there does not exist such a non-negative value $x$,
the real valued function $X$ is called 
a non-lattice function or a non-lattice variable.
Then, the lattice span $d_S$ is regarded as zero.

Now, we summarize the fundamental properties for the lattice and non-lattice cases.
For this purpose,
we denote the set $ \{\sum_{i=1}^n a_i \}_{a_i \in S}$ by $S_n$.

\begin{lemma}\Label{L9-20}
We fix a small real number $\delta>0$.
In the lattice case, 
there exists a sufficiently large integer $N$ 
such that $S_n$ satisfies the following condition for any $n \ge N$. 
Denote all of elements of $S_n\cap [n(a_1+\delta),n(a_l-\delta)]$
as $b_1<b_2< \ldots < b_k$.
We have $b_{i+1}-b_i= d_S$.

In the non-lattice case, 
for an arbitrary  small real number $\epsilon$,
there exists a sufficiently large integer $N$ 
such that $S_n$ satisfies the following condition for any $n \ge N$. 
Denote all of elements of $S_n\cap [n(a_1+\delta),n(a_l-\delta)]$
as $b_1<b_2< \ldots < b_k$.
We have $b_{i+1}-b_i< \epsilon$.
\end{lemma}

\begin{proof}
\PF{Lattice case}
Since the definition of $d_S$ guarantees that $b_{i+1}-b_i\ge d_S$,
it is enough to show that $b_{i+1}-b_i\le d_S$.
Assume that integers $n_i$ satisfies the equations 
\begin{align}
\sum_{i=1}^l n_i a_i &=d_S \Label{9-20-1} \\
\sum_{i=1}^l n_i &=0.\Label{9-20-2}
\end{align}
We define the subsets $S_+:= \{a_i \in S| n_i \ge 0 \}$ and
$S_-:= \{a_i \in S| n_i < 0 \}$,
the positive integers
$m_2:= \sum_{i:a_i\in S_+}n_i$ and $m_1:= (a_l-a_1)/d_S$,
and the positive real numbers
$A:= -m_1 \sum_{i :a_i\in S_-}n_i a_i$,
$B:= m_1 \sum_{i :a_i\in S_+}n_i a_i$,
$\delta_-:= (A-a_1 m_1 m_2)/n$, and
$\delta_+:= (a_l m_1 m_2-B+ m_1d_s)/n$.

So, we have $n (a_1+\delta_-)= a_1 (n-m_1 m_2)+A= n a_1+ (A-a_1 m_1 m_2)$
and $n (a_l-\delta_+)= a_l (n-m_1 m_2)+B= n a_l- (a_l m_1 m_2-B)$.
We choose an element $x:=n (a_1+\delta_-) + (c_1 m_1+c_2)d_S
\in [n (a_1+\delta_-),n (a_l-\delta_+)]$ with 
integers $c_1$ and $c_2 \le m_1$.
When $(c_1 m_1+c_2)$ takes the maximum, $x$ is $n (a_l-\delta_+)$, i.e.,
$c_1 m_1+c_2= (n-m_1 m_2) m_1$.
So, the maximum of $c_1$ is $n-m_1 m_2$.

Using \eqref{9-20-1} and the definitions of $\delta_-$ an $A$, 
we have 
\begin{align}x= c_1 a_l+(n-c_1-m_1 m_2) a_1
+      c_2(\sum_{i:a_i\in S_+}n_i a_i) 
- (m_1-c_2)\sum_{i:a_i\in S_-}n_i a_i \stackrel{(a)}{\in} S_n.
\end{align}
Here, the relation $(a)$ follows from the following facts;
$c_1$ and $ (n-c_1-m_1 m_2)$ are non-negative integers, 
$c_2 n_i$ is a non-negative integer for $i \in S_+$,
and $ - (m_1-c_2)n_i$ is a non-negative integer for $i \in S_-$. 
Thus, when we denote all of elements of $S_n\cap [n (a_1+\delta_-),n (a_l-\delta_+)]$ as $b_1<b_2< \ldots < b_k$.
We have $b_{i+1}-b_i \le d_S$.
When $n$ is sufficiently large, we have $\delta_-, \delta_+ \le \delta$.
So, we obtain the desired statement.

\PF{Non-lattice case}
For an arbitrary $\epsilon>0$, we can take integers $n_i$ such that
$0<\tilde{d}:=\sum_{i=1}^l n_i a_i< \epsilon$ and $\sum_{i=1}^l n_i=0$.
(If impossible, we have the minimum of $\sum_{i=1}^l n_i a_i$ with $\sum_{i=1}^l n_i=0$
is strictly larger than $0$, which contradicts $d_S=0$.)
We redefine $m_1:= \lceil(a_l-a_1)/\epsilon\rceil$, 
and define other terms in the same way by replacing $d_S$ by $\tilde{d}$.
Using the same discussion, we find that 
the element $x:=n (a_1+\delta_-) + c_1 (a_l-a_1)+ c_2 \tilde{d}
\in [n (a_1+\delta_-),n (a_l-\delta_+)]$ with $c_2 \le m_1$
belongs to $S_n$.
When $n$ is sufficiently large, we have $\delta_-, \delta_+ \le \delta$.
So, we have $b_{i+1}-b_i< \epsilon$.
\end{proof}

Here $p$ is not necessarily normalized.
Define the notation 
$E_p [X] \stackrel{\rm def}{=} \int X(\omega) p(d \omega)$.
Define the cumulant generating function 
$\tau (s)\stackrel{\rm def}{=} \log E_p [e^{sX}]$.
Denote the inverse function of the derivative $\tau' (s) $ by $\eta$.

\begin{proposition}[\protect{Bahadur and Rao \cite{BR}, \cite[Theorem 3.7.4]{Dembo98}}]\Label{11-4-4}
Assume that $\tau (0)<\infty $.
When $R > \frac{E_p[X]}{E_p[1]}$, 
we have
\begin{align}
\log p^n\{ X_n \ge n R\}
&= \chi_0(R) n - \frac{1}{2}\log n+\chi_1(R)+\chi_2(R)\frac{1}{n}+o(\frac{1}{n}) \\
\log p^n\{ X_n \le n R\}
&= n \tau (0) +o(1) ,
\end{align}
where
\begin{align}
\chi_0(R)&\stackrel{\rm def}{=}-R \eta(R) + \tau (\eta (R)) \\
\chi_1(R)&\stackrel{\rm def}{=}
\left\{
\begin{array}{ll}
-\frac{1}{2}\log 2 \pi -\log \eta(R) + \frac{1}{2}\eta'(R)
& \hbox{ if } d_S=0 \\
-\frac{1}{2}\log 2 \pi + \frac{1}{2}\eta'(R)
+ \log \frac{d_S}{1-e^{-d_S \eta(R)}} 
& \hbox{ if } d_S>0,
\end{array}
\right.
\end{align}
and $\chi_2(R)$ is a continuous function.
When $R < \frac{E_p[X]}{E_p[1]}$, 
we have
\begin{align}
\log p^n\{ X_n \ge n R\}
&= n \tau (0) +o(1) \\
\log p^n\{ X_n \le n R\}
&= \chi_0(R) n - \frac{1}{2}\log n+\chi_1(R)+\chi_2(R)\frac{1}{n}+o(\frac{1}{n}) .
\end{align}
The convergences of the differences between the LHSs and RHSs 
are compact uniform.
\hfill $\square$\end{proposition}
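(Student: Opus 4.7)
\begin{proofof}{Proposition \ref{11-4-4} (proof sketch)}
The plan is to reduce to a standard probability measure via exponential tilting and then apply a local limit theorem with Edgeworth-type correction. Define $\tilde p \stackrel{\rm def}{=} e^{-\tau(0)} p$, which is a probability measure, and let $\tilde\tau(s) = \tau(s) - \tau(0)$ be its cumulant generating function; note that $\tilde\tau$ has the same derivative, hence the same $\eta$, as $\tau$, and $\tilde\tau'(0) = E_p[X]/E_p[1]$ is the threshold. All probabilities for $p^n$ then acquire an explicit factor $e^{n\tau(0)}$ relative to $\tilde p^{\,\otimes n}$, so the statement reduces to proving a strong large-deviation expansion for $\tilde p^{\,\otimes n}\{X_n \ge nR\}$ and a weak estimate for $\tilde p^{\,\otimes n}\{X_n \le nR\}$.

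For the case $R > \tilde\tau'(0)$, the second estimate (the probability $\tilde p^{\,\otimes n}\{X_n \le nR\}$) is immediate from the weak law of large numbers applied to $\tilde p^{\,\otimes n}$: the event has probability $1-o(1)$, so after multiplying by the normalizing factor we obtain $\log p^n\{X_n \le nR\} = n\tau(0) + o(1)$. The symmetric case $R<\tilde\tau'(0)$ for the upper tail is handled in exactly the same way, so the remaining work is the sharp asymptotic of the exponentially rare tail.

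For the rare-tail direction, I would perform exponential tilting with parameter $\theta = \eta(R)$, so that $X$ has mean $R$ and variance $\sigma^2 = \tau''(\theta) = 1/\eta'(R)$ under the tilted probability $\tilde p_\theta = e^{\theta X - \tilde\tau(\theta)}\tilde p$. The identity
\begin{equation}
\tilde p^{\,\otimes n}\{X_n \ge nR\} = e^{-n(\theta R - \tilde\tau(\theta))}\,
E_{\tilde p_\theta^{\otimes n}}\!\bigl[e^{-\theta(X_n - nR)}\,\mathbf 1\{X_n \ge nR\}\bigr]
\end{equation}
extracts the exponent $-n(\theta R - \tilde\tau(\theta)) = \chi_0(R)n - n\tau(0)$, so after re-multiplying by $e^{n\tau(0)}$ the leading-order term $\chi_0(R)n$ is in place. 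The residual expectation is a Laplace-type integral of $e^{-\theta y}$ against the law of $Y_n = X_n - nR$, which has mean $0$ and variance $n\sigma^2$ under $\tilde p_\theta^{\otimes n}$. Here I would invoke an Edgeworth-type local limit theorem for $Y_n/\sqrt{n}$ (Petrov \cite{Esseen}, Feller \cite{Feller}) to push beyond the usual CLT. In the non-lattice case this gives
\begin{equation}
E_{\tilde p_\theta^{\otimes n}}\bigl[e^{-\theta Y_n}\mathbf 1\{Y_n \ge 0\}\bigr]
= \frac{1}{\theta\sigma\sqrt{2\pi n}}\Bigl(1 + \frac{a_1}{n} + o(n^{-1})\Bigr),
\end{equation}
whose logarithm contributes $-\frac{1}{2}\log n - \log\theta - \tfrac{1}{2}\log(2\pi) + \tfrac{1}{2}\log\eta'(R)$ and a convergent $O(1/n)$ correction identified with $\chi_2(R)$. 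In the lattice case with span $d>0$, the integral becomes a discrete sum over the shifted lattice, and evaluating the geometric series $\sum_{k\ge 0} e^{-d\theta k}$ produces the extra factor $\log\frac{d}{1-e^{-d\eta(R)}}$ that appears in $\chi_1(R)$.

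The main obstacle is the lattice case: one needs a sharp local limit theorem on the lattice (Esseen \cite{Esseen}, Feller \cite{Feller}, Chap.~XVI) that controls the point masses of $Y_n$ uniformly on the shifted lattice, together with care in identifying the correct shift $a_n$ so that the geometric-series evaluation is exact up to $o(1)$. The continuity (and boundedness) of $\chi_2(R)$ then follows from the continuity of the Edgeworth coefficients in the tilting parameter $\theta = \eta(R)$, once one verifies the Cram\'er condition needed for the next-order expansion holds in the non-lattice case (and its lattice analogue in the other case).
\end{proofof}
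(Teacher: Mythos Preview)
The paper does not prove Proposition~\ref{11-4-4}: it is stated in the appendix as a known result and simply attributed to Bahadur--Rao \cite{BR} and Dembo--Zeitouni \cite[Theorem~3.7.4]{Dembo98}. Your sketch follows exactly the classical route found in those references (exponential tilting to center the distribution at $R$, then an Edgeworth/local-limit expansion for the residual Laplace-type integral, with the lattice case handled via the discrete geometric series), so there is no methodological divergence to discuss---you are reconstructing the cited proof rather than offering an alternative. One small point: your formula for $\chi_1$ in the non-lattice case reads $-\tfrac12\log 2\pi - \log\eta(R) + \tfrac12\log\eta'(R)$, whereas the paper writes $+\tfrac12\eta'(R)$; this is almost certainly a typo in the paper (the standard Bahadur--Rao constant involves $\tfrac12\log\tau''(\theta)^{-1} = \tfrac12\log\eta'(R)$, not $\tfrac12\eta'(R)$), and your version is the correct one.
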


\section{Proof of Lemma \ref{L12}}\Label{A3}
Now, we show Lemma \ref{L12}.
For $\theta$,
we define the distribution $P_\theta$ as 
\begin{align}
P_\theta(x)\stackrel{\rm def}{=} 
\frac{P^{1-\theta} (x)}{\sum_{x \in {\cal X}}P^{1-\theta} (x)}.
\end{align}
Then, for $r < 
- H_1(\Psi)$,
we define $\theta(r) \in (0,1]$ as
\begin{align}
D(P_{\theta(r)} \|P)=r. 
\Label{2-3-2}
\end{align}

\begin{lemma}\Label{L2}
For $r < 
- H_1(\Psi)$,
we have
\begin{align}
D(P_{\theta(r)}\|P) -H(P_{\theta(r)})=
\sup_{0 \le s < 1}
\frac{-2s}{1-s}r - H_{\frac{1+s}{2}}(\Psi).
\Label{2-3-1}
\end{align}
\hfill $\square$\end{lemma}

\begin{proof}
Define the function $\varphi(\theta)\stackrel{\rm def}{=} \log \sum_{x \in {\cal X}}P^{1-\theta} (x)$.
Since $\varphi''(\theta)>0$,
the function $\varphi(\theta)$ is strictly convex.
We have
$D(P_{\theta} \|P)= \theta \varphi'(\theta)-\varphi(\theta)$
and
$H(P_{\theta} )= (1-\theta )\varphi'(\theta)+\varphi(\theta)$.
We also have $D(P_{\theta}\|P) -H(P_{\theta})=
(2\theta-1) \varphi'(\theta)-2 \varphi(\theta)$.
Since $D(P_{\theta(r)} \|P)=r$,
solving the relation 
$\theta(r) \varphi'(\theta(r))-\varphi(\theta(r))=r$,
we have
$D(P_{\theta(r)}\|P) -H(P_{\theta(r)})=f(\theta(r))$
by using 
the function $f(\theta)\stackrel{\rm def}{=}
\frac{(2\theta-1)r -\varphi(\theta)}{\theta}$.

The derivative of $f$ is
$f'(\theta)\stackrel{\rm def}{=}
\frac{\varphi(\theta)+r-\theta\varphi'(\theta)}{\theta^2}$.
The derivative of the numerator is $-\theta\varphi''(\theta) <0$
when $\frac{1}{2} \ge\theta >0$.
Hence, $\sup_{0 \le s \le \frac{1}{2}} f(\theta)$
is realized when $f'(\theta)=0$, which is equivalent to 
$\varphi(\theta)+r-\theta\varphi'(\theta)=0$, i.e., 
$D(P_{\theta} \|P)=r$.
This condition is equivalent to $\theta=\theta(r)$.
Therefore,
$\sup_{0 \le s \le \frac{1}{2}} f(\theta)
=f(\theta(r))$.
That is, we have
$D(P_{\theta(r)}\|P) -H(P_{\theta(r)})=f(\theta(r))
=\sup_{0 \le s <1} f(\theta)$.
Since $f(\theta)=\frac{-2s}{1-s}r - H_{\frac{1+s}{2}}(\Psi)$
with $1-\theta= \frac{1+s}{2}$,
we obtain (\ref{2-3-1}).
\end{proof}

\begin{lemma}\Label{L1}
For $r < 
- H_1(\Psi)$,
we have
\begin{align}
\min_{Q:D(Q\|P)\le D(P_{\theta(r)} \|P)}
D(Q\|P) -H(Q)
=
D(P_{\theta(r)}\|P) -H(P_{\theta(r)}).\Label{2-3-20}
\end{align}
\hfill $\square$\end{lemma}
Combining Lemma \ref{L2} and \ref{L1},
we obtain \eqref{2-3-20a} and \eqref{2-3-21b} of Lemma \ref{L12}.

\begin{proof}
Assume that 
for a distribution $Q$, 
there exists a parameter $\theta \in [0,1]$ such that
$H(Q)=H(P_\theta)$.
Then,
we have
$\frac{1}{1-\theta}D(Q\|P_\theta)
= 
\frac{1}{1-\theta} \sum_{x}Q(x)\log Q(x)
-\sum_{x}Q(x)\log P(x)
-\frac{\varphi(\theta)}{1-\theta}$.
Hence,
\begin{align*}
&D(Q\|P) -\frac{1}{1-\theta}D(Q\|P_\theta) \\
=&\sum_{x}Q(x) (\log Q(x) -\log P(x))
-\frac{1}{1-\theta} \sum_{x}Q(x)\log Q(x)
+\sum_{x}Q(x)\log P(x)
+\frac{\varphi(\theta)}{1-\theta} \\
=&
-\frac{\theta}{1-\theta} \sum_{x}Q(x)\log Q(x)
+\frac{\varphi(\theta)}{1-\theta} \\
=&-\frac{\theta}{1-\theta} H(Q)
+\frac{\varphi(\theta)}{1-\theta} 
=\frac{\theta}{1-\theta} H(P_\theta)
+\frac{\varphi(\theta)}{1-\theta} \\
=&-\frac{\theta}{1-\theta} \sum_{x}P_\theta(x) (1-\theta) \log P(x)
-\frac{\theta}{1-\theta}\varphi(\theta)
+\frac{\varphi(\theta)}{1-\theta} \\
=&-\theta\sum_{x}P_\theta(x) \log P(x)
+\varphi(\theta)
=D(P_\theta\|P).
\end{align*}
Since $\frac{1}{1-\theta}D(Q\|P_\theta) \ge 0$, 
for $\theta \in [0,1]$, 
we have
\begin{align}
\max_{Q:D(Q\|P)\le D(P_{\theta} \|P)}
H(Q)
=
H(P_{\theta}).\Label{2-3-22}
\end{align}
Hence, 
\begin{align}
\min_{Q:D(Q\|P)\le D(P_{\theta(r)} \|P)}
D(Q\|P) -H(Q)
=
D(P_{\theta(r)}\|P) -H(P_{\theta(r)}).
\end{align}
\end{proof}

\begin{proofof}{(\ref{2-3-21a})}
Now, we proceed to the proof of (\ref{2-3-21a}).
(\ref{2-3-22}) implies that
\begin{align}
\min_{Q:H(Q) \ge H(P) }
D(Q\|P) -H(Q)
=
\min_{\theta}
D(P_{\theta}\|P) -H(P_{\theta}).
\end{align}
Since 
$\min_{Q}
D(Q\|P) -H(Q)
=
\min_{Q:H(Q) \ge H(P) }
D(Q\|P) -H(Q)$,
we have
\begin{align}
\min_{Q}
D(Q\|P) -H(Q)
=
\min_{\theta}
D(P_{\theta}\|P) -H(P_{\theta}).
\end{align}
In the proof of Lemma \ref{L2}, we show that
$D(P_{\theta}\|P) -H(P_{\theta})=
(2\theta-1) \varphi'(\theta)-2 \varphi(\theta)$
and 
$D(P_{\theta}\|P) -H(P_{\theta})$ realizes the  minimum 
at $\theta=1/2$.
Since $(1-1) \varphi'(1/2)-2 \varphi(1/2)=- H_{1/2}(\Psi)$, 
we obtain (\ref{2-3-21a}).
\end{proofof}

\end{document}